\documentclass[twoside,journal]{IEEEtran}
\usepackage{amsfonts}
\usepackage{amssymb}
\usepackage{amsmath}
\usepackage{mathrsfs}
\usepackage{verbatim}
\usepackage{subfigure}
\usepackage{balance}
\usepackage{booktabs}
\usepackage{fancybox}
\usepackage{bm}
\usepackage{changepage}
\usepackage{extarrows}
\usepackage{algorithm}
\usepackage{algorithmic}
\usepackage{multirow}
\usepackage{array}
\usepackage{graphicx}
\usepackage{epstopdf}
\usepackage{caption}
\captionsetup{font={footnotesize}}
\newtheorem{theorem}{Theorem}
\newtheorem{lemma}{Lemma}
\newtheorem{corollary}{Corollary}
\newtheorem{proof}{Proof}
\newtheorem{proposition}{Proposition}

\begin{document}

\title{Safeguarding Decentralized Wireless  Networks Using Full-Duplex Jamming Receivers}
\author{Tong-Xing~Zheng,~\IEEEmembership{Student Member,~IEEE,}
         ~Hui-Ming~Wang,~\IEEEmembership{Senior~Member,~IEEE,~}
         Qian Yang,
        and~Moon~Ho~Lee
\thanks{T.-X. Zheng, H.-M. Wang, and Q. Yang are with the School of Electronic and Information Engineering,
Xi'an Jiaotong University, Xi'an, 710049, Shaanxi, China. Email: {\tt txzheng@stu.xjtu.edu.cn},
{\tt xjbswhm@gmail.com}, {\tt qian-yang@outlook.com}.
}
\thanks{M. H. Lee is with the Division of Electronics Engineering, Chonbuk National 	University, Jeonju 561-756, Korea. Email: {\tt moonho@jbnu.ac.kr}.}
}

\maketitle
\vspace{-0.8 cm}

\begin{abstract}
In this paper, we study the benefits of full-duplex (FD) receiver jamming in enhancing the physical-layer security of a two-tier decentralized wireless network with each tier deployed with a large number of pairs of a single-antenna transmitter and a multi-antenna receiver.
In the underlying tier, the transmitter sends unclassified information, and the receiver works in the half-duplex (HD) mode  receiving the desired signal.
In the overlaid tier, the transmitter deliveries confidential information in the presence of randomly located eavesdroppers, and the receiver works in the FD mode radiating jamming signals to confuse eavesdroppers and receiving the desired signal simultaneously.
We provide a comprehensive performance analysis and network design under a stochastic geometry framework.
Specifically, we consider the scenarios where each FD receiver uses single- and multi-antenna jamming, and analyze the connection probability and the secrecy outage probability of a typical FD receiver by providing accurate expressions and more tractable approximations for the two metrics.
We further determine the optimal deployment of the FD-mode tier in order to maximize network-wide secrecy throughput subject to constraints including the given dual probabilities and the network-wide throughput of the HD-mode tier.
Numerical results are demonstrated to verify our theoretical findings, and show that network-wide secrecy throughput is significantly improved by properly deploying the FD-mode tier.

\end{abstract}

\begin{IEEEkeywords}
    Physical-layer security, decentralized wireless networks (DWNs), full-duplex (FD), multi-antenna, self-interference (SI), outage probability, secrecy throughput, stochastic geometry.
\end{IEEEkeywords}

\IEEEpeerreviewmaketitle

\section{Introduction}

\IEEEPARstart{I}{nformation}
security in wireless communications has attracted prominent attention in the era of information explosion.
A traditional approach that safeguards the information security is to use encryption at the upper layers of the communication protocol stack.
However, due to the dynamic and large-scale topologies in emerging wireless networks, secret key management and distribution is difficult to implement, especially in a decentralized network architecture without infrastructure \cite{Poor2012Information}.
In addition, it might not be practical for low-power network nodes, e.g., sensors, to use complicated cryptographic algorithms \cite{Poor2012Information}.
These pose a challenge to securing information delivery solely by means of cryptography-based security mechanisms.
Fortunately, \emph{physical-layer security}, a novel approach at the physical layer that achieves secrecy by exploiting the randomness inherent to wireless channels, has the potential to strengthen network security \cite{Yang2015Safeguading}.
Since Wyner's ground-breaking work \cite{Wyner1975Wire-tap} in which he introduced the \emph{degraded wiretap channel} (DWTC) model and the concept of \emph{secrecy capacity}, physical-layer security has been studied in various wiretap channels models, e.g., multi-input multi-output (MIMO) channels \cite{Liu2010Multiple}, \cite{Khisti2010Secure2},
relay channels \cite{Lai2008Relay}, \cite{Zheng2015Outage}, and two-way channels \cite{Wang2012Distributed}, \cite{Wang2012Hybrid}, etc.
A comprehensive survey on physical-layer security, including the information-theoretic foundations, the evolution of secure transmission strategies, and potential research directions in this area, can be found in \cite{Mukherjee2014Principles}.

Early research on physical-layer security is focused on a point-to-point scenario, in which the large-scale fading is ignored when modeling the wireless channels, and as a consequence secure transmissions become irrelevant to
the relative spatial locations of legitimate terminals and
eavesdroppers.
When it comes to a decentralized wireless network (DWN), since each network node suffers great interference from the other nodes spreading over the entire network, network security strongly depends on nodes' spatial positions and propagation path losses.
Recently, stochastic geometry theory has provided a powerful tool to analyze network performance by modeling nodes' positions according to a spatial distribution, e.g., a Poisson point process (PPP) \cite{Haenggi2009Stochastic}-\cite{Ghogho2011Physical}.
This has facilitated the research of physical-layer security with randomly distributed legitimate nodes and eavesdroppers in DWNs \cite{Vasudevan2010Security}-\cite{Zhang2013Enhancing}.

To improve information transfer secrecy, an efficient way  is to degrade eavesdroppers' decoding ability by sending jamming signals.
Along this line, some efforts have been made.
For example, the authors in \cite{Vasudevan2010Security} and \cite{Zhou2011Throughput} consider a single-antenna transmitter scenario, and propose to let the transmitter suspend its own information delivery and act as a friendly jammer to impair eavesdroppers when it is far away from the intended receiver \cite{Vasudevan2010Security} or when eavesdroppers are detected inside its secrecy guard zone \cite{Zhou2011Throughput}.
The authors in \cite{Zhang2013Enhancing} consider a multi-antenna transmitter scenario, and propose to radiate \emph{artificial noise}\footnote{
	The idea of using artificial noise to interfere with eavesdroppers was first proposed in \cite{Goel2008Guaranteeing};
	this seminal work has unleashed a wave of innovation, mainly including two branches, i.e., multi-antenna techniques \cite{Zhou2010Secure}-\cite{Zheng2016Optimal}
	and cooperative jamming strategies \cite{Wang2015Uncoordinated,Deng2015Secrecy}. } with either sectoring or beamforming to confuse eavesdroppers while without impairing the legitimate receiver.
Although these endeavors are shown to yield a significant improvement on the secrecy capacity/throughput, they are based on the presence of either multi-antenna transmitters or friendly jammers, which sometimes might not be available.
For instance, due to the size and hardware cost constraints, a sensor in a DWN, which transmits sensed data to a data collection station, is usually equipped with only a single antenna.
In addition, a sensor has no extra power to radiate jamming signals due to its low-power constraint.
In these scenarios, the jamming schemes proposed in \cite{Vasudevan2010Security}-\cite{Zhang2013Enhancing} no longer apply, and it is still challenging to protect information from eavesdropping.

Fortunately, the recent progress of developing in-band full-duplex (FD) radios \cite{Sabharwal2014In-band} raises the possibility of enhancing network security in the aforementioned scenarios.
In-band FD operation enables a transceiver to simultaneously transmit and receive on the same frequency band.
The major challenge in implementing such an FD node is the presence of self-interference (SI) that leaks from the node's output to its input.
Nevertheless, thanks to various effective SI cancellation (SIC) techniques, SI can be efficiently mitigated in the analog circuit domain \cite{Duarte2012Experiment}, digital circuit domain \cite{Ng2012Dynamic}, and spatial domain \cite{Riihonen2011Mitigation}, respectively.
FD radios has the potential to improve both link capacity and communication security in DWNs.
Returning to the aforementioned  scenarios, i.e., with single-antenna sensors and no friendly jammer,
using a more powerful FD data collection station provides extra degrees of freedom to protect information delivery, e.g.,
radiating jamming signals to degrade eavesdroppers while receiving desired signals simultaneously.
In particular, when the FD receiver is equipped with multiple antennas, it provides us with potential benefits not only in alleviating SI but also in designing jamming signals.

We point out that sending jamming signals using an FD receiver have already been reported by \cite{Li2012Secure}-\cite{Li2014Secure}, where the authors consider single-antenna receiver jamming with SI perfectly canceled in a cost-free manner, consider multi-antenna receiver jamming with SI taken into account, and consider both transmitter and receiver jamming, respectively.
However, these works are confined to a point-to-point scenario.
When considering a DWN, analyzing the influence of FD radios on network security becomes much more sophisticated due to the presence of not only the mutual interference between nodes but also the SI.
To the best of our knowledge, the potential advantages of FD jamming in the context of physical-layer security from a network perspective are elusive, and a fundamental mathematical framework for performance analysis and network design is lacking, which has motivated our work.

\subsection{Our Work and Contributions}
In this paper, we investigate the physical-layer security of a two-tier heterogeneous DWN under a stochastic geometry framework, where
single-antenna transmitters (sensors) and multi-antenna receivers (data collection stations) in each tier are organized in pairs.
The first tier is an underlying tier that has no secrecy requirement and each receiver therein works in the half-duplex (HD) mode.
The second tier is an overlaid tier that has secrecy considerations and is deployed with more powerful FD receivers.
For convenience, we name the two tiers the HD tier and the FD tier throughout the paper, respectively.
Randomly located multi-antenna eavesdroppers intend to wiretap the secrecy data flowing in the FD tier.
The reasons why we consider this model are:

\begin{itemize}
	
	\item
	This model characterizes a practical communication scenario where a security-oriented network is newly deployed over an existing network that has no security requirement.
	For example, a military ad hoc network specifically for secret information exchange such as offensive tactics, is momentarily added to a civilian ad hoc network, or an unlicensed security secondary tier in an underlay cognitive radio network should make its interference to the primary tier under control to guarantee smooth communications for the latter.
	
	\item
	This is a more general DWN model that incorporates communications with and without security requirements.
	The secure decentralized ad hoc network models discussed in \cite{Zhou2011Throughput} and \cite{Zhang2013Enhancing}
	are just special cases of our model
	when we simply put aside the HD tier.
	
	\item
	In addition, investigating the achievable performances in such a two-tier heterogeneous network facilitates us to gain a better understanding of the interplay between the classified and unclassified networks, and to evaluate the impact of FD jamming  to an existing communication network without security constraint.
\end{itemize}

The main contributions of this paper are summarized as follows:

\begin{itemize}
	
	\item
	We analyze the connection probability and the secrecy outage probability of a typical FD receiver under a spatial SIC (SSIC) strategy, and provide accurate integral expressions as well as analytical approximations for the given metrics.
	We show that deploying more FD nodes introduces greater interference to the network, which not only decreases the connection probability but also decreases the secrecy outage probability.
	
	\item
	We study the optimal deployment of the FD tier to maximize network-wide secrecy throughput subject to constraints including the connection probability, the secrecy outage probability, and the HD tier throughput.
	In particular, when the FD receiver uses single-antenna jamming, we prove the  \emph{quasi-concavity} of the secrecy throughput with respect to (w.r.t.) the FD tier density; the optimal density that maximizes secrecy throughput can be obtained using the bisection method.
	
	\item
	For the multi-antenna jamming scenario, we investigate how the number of jamming signal streams and the number of jamming antennas affect secrecy throughput.
	We reveal that increasing jamming signal streams always benefits secrecy throughput.
	However, whether adding jamming antennas is advantageous or not depends on specific communication environments.
	A proper number of jamming antennas should be chosen to balance transmission reliability with secrecy.
	
\end{itemize}

\subsection{Organization and Notations}
The remainder of this paper is organized as follows.
In Section II, we describe the system model and the underlying optimization problem.
In Sections III and IV, we provide performance analysis and network design in single-antenna jamming and multi-antenna jamming scenarios, respectively.
In Section V, we conclude our work.

\emph{Notations}:
bold uppercase (lowercase) letters denote matrices (vectors).
$(\cdot)^{{H}}$, $|\cdot|$, $\|\cdot\|$, $\mathbb{P}\{\cdot\}$, and $\mathbb{E}_A(\cdot)$ denote Hermitian transpose, absolute value, Euclidean norm, probability, and expectation w.r.t. $A$, respectively.
$\mathcal{CN}(\mu, \nu)$, ${{\mathrm{Exp}}}(\lambda)$ and ${\Gamma}(N,\lambda)$ denote the circularly symmetric complex Gaussian distribution with mean $\mu$ and variance $\nu$, exponential distribution with parameter $\lambda$, and gamma distribution with parameters $N$ and $\lambda$, respectively.
$\mathbb{C}^{m\times n}$ denotes the $m\times n$ complex number domain.
$\log(\cdot)$ and $\ln(\cdot)$ denote the base-2 and natural logarithms, respectively.
$f^{(m)}$ denotes the $m$-order derivative of $f$.
$[x]^{+}\triangleq \max(x,0)$.
The key symbols used in the paper are listed in Table \ref{Symbol}.
\begin{table}
	\caption{Key~Symbols~Used~in~the~Paper}
	\begin{center}\label{Symbol}
		\begin{tabular}{c|c}
			\hline
			Symbols &  Definition/Explanation \\
			\hline\hline
			$\Phi_h$, $\Phi_f$
			&  \noindent
			PPPs for Rxs in HD and FD tiers \\
			\hline
			$\hat \Phi_h$, $\hat \Phi_f$
			&  \noindent
			PPPs for Txs in HD and FD tiers \\
			\hline
			$\Phi_e$,
			&  \noindent
			PPP for eavesdroppers \\
			\hline
			$\lambda_h$, $\lambda_f$, $\lambda_e$
			&  Densities of PPPs $\Phi_h$ ($\hat \Phi_h$), $\Phi_f$ ($\hat \Phi_f$) and  $\Phi_e$\\
			\hline
			$N_h$, $N_f$
			& Numbers of antennas at HD and FD Rxs \\
			\hline
			$N_e$
			& Number of antennas at an eavesdropper\\
			\hline
			$N_t$
			& Number of jamming antennas at an FD Rx\\
			\hline
			$N_j$
			& Number of jamming signal streams at an FD Rx\\
			\hline
			$P_h$, $P_f$
			& Transmit powers at Txs in the HD and FD tiers\\
			\hline
			$P_t$
			& Power of jamming signals at an FD Rx\\
			\hline
			$(x,\hat x)$
			& Locations of a Rx and its paired Tx  \\
			\hline
			$D_h$, $D_f$
			& Distances between Tx-Rx pairs in HD and FD tiers
			\\\hline
			$D_{xy}$
			& Distance between a node at $x$ and a node at $y$\\
			\hline
			$\mathcal{P}_c$, $\mathcal{P}_t$
			& Connection probabilities of HD and FD Rxs \\
			\hline
			$\mathcal{P}_{so}$    & Secrecy outage probability of an FD Rx\\
			\hline
			$\mathcal{T}_s$
			& Network-wide secrecy throughput of the FD tier\\
			\hline
			$\mathcal{T}_c$
			& Network-wide throughput of the HD tier \\
			\hline
			\hline
		\end{tabular}
	\end{center}
\end{table}

\section{System Model}
\begin{figure}[!t]
	\centering
	\includegraphics[width=2.8in]{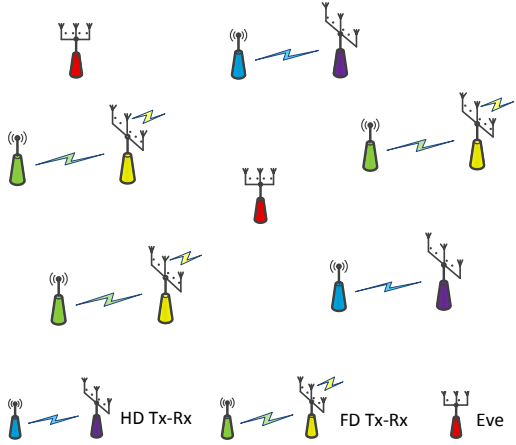}
	\caption{An illustration of a two-tier heterogeneous DWN consisting of both HD and FD tiers.
		Each HD (FD) Rx receives  data from an intended Tx.
		The ongoing transmission between the FD Tx-Rx pair is overheard by randomly located eavesdroppers (Eves).}
	\label{System_model}
\end{figure}

Consider a two-tier heterogeneous DWN in which an existing tier that provides unclassified services is overlaid with a temporarily deployed tier that has classified services.
In either tier, each data source (Tx) has only a single antenna due to hardware cost, and reports data up to its paired data collection station (Rx); each Rx is equipped with multiple antennas for signal enhancement, interference suppression, information protection, etc.
In the underlying tier, the Tx sends an unclassified message to the Rx, and the latter works in the HD mode, using all its antennas to receive the desired signal.
In the overlaid tier, the Tx deliveries a confidential message to its Rx in the presence of randomly located eavesdroppers, and the Rx works in the FD mode, simultaneously using part of its antennas to receive the desired signal and using the remaining to radiate jamming signals to confuse eavesdroppers.
An illustration of a network snapshot is depicted in Fig. \ref{System_model}.
We model the locations of HD Rxs, FD Rxs and eavesdroppers according to independent homogeneous PPPs ${\Phi}_h$ with density $\lambda_h$, ${\Phi}_f$ with density $\lambda_f$, and $\Phi_e$ with density $\lambda_e$, respectively.
We further use $\hat \Phi_h$ and $\hat \Phi_f$ to denote the sets of locations of the Txs in the HD and FD tiers, which also obey independent PPPs with densities $\lambda_h$ and $\lambda_f$ according to the displacement theorem \cite[page 35]{Haenggi2012Stochastic}.
Wireless channels are assumed to experience a large-scale path loss governed by the exponent $\alpha>2$ along with flat Rayleigh fading with fading coefficients independent and identically distributed (i.i.d.) obeying $\mathcal{CN}(0,1)$.
We assume that each Rx knows the channel state information of its paired Tx.
Since each eavesdropper passively receives signals, its channel state information is unknown, whereas its channel statistics information is available, see e.g., \cite{Zheng2014Transmission}-\cite{Zheng2013Improving}.

Without lose of generality, we consider a typical Tx-Rx pair in the FD tier and place the Rx at the origin $o$ of the coordinate system\footnote{From Slivnyak's theorem \cite{Stoyan1996Stochastic}, the spatial distribution of all the other nodes will not be affected.}.
Note that the aggregate interferences received at the typical FD Rx consist of the undesired signals from all the Txs (except for the typical Tx) and the jamming signals from itself and from all the other FD Rxs.
The received signal of the typical FD Rx is given by
\begin{align}\label{signal_fd}
\bm y_{f}=&\underbrace{\frac{\sqrt{P_f}\bm{f}_{ \hat o o}s_{f,\hat o}}{D_f^{\alpha/2}}}
_{\mathrm{desired~signal}}
+\underbrace{\sqrt{P_t}\bm{F}_{ o o}\bm{v}_{o}}
_{\mathrm{SI}} +\underbrace{\sum_{ \hat z\in\hat\Phi_h}\frac{\sqrt{P_h}\bm{f}_{\hat z o}s_{h,\hat z}}{D_{\hat z o}^{\alpha/2}}}
 _{\mathrm{HD-tier~undesired~signals}}
 \nonumber \\
&+\underbrace{\sum_{\hat z\in\hat \Phi_f\setminus \hat o}\left(\frac{\sqrt{P_f}\bm{f}_{\hat z o}s_{f,\hat z}}{D_{\hat zo}^{\alpha/2}}
	+\frac{\sqrt{P_t}\bm{F}_{z o}\bm v_{ z}}{D_{ z o}^{\alpha/2}}\right)}
_{\mathrm{FD-tier~undesired~and~jamming~signals}}
+\bm n_{f},
\end{align}
where $s_{f,\hat z}$ ($s_{h,\hat z}$) denotes the signal from the Tx located at $\hat z$ in the FD (HD) tier with $\mathbb{E}[|s_{f, \hat z}|^2]=1$ ($\mathbb{E}[|s_{h,\hat z}|^2]=1$);
$\bm v_{z}\in\mathbb{C}^{N_t\times1}$ denotes a jamming signal vector from the FD Rx at $z$ with $\mathbb{E}[\|\bm v_{z}\|^2]=1$;
$\bm n$ denotes thermal noise;
$P_f$, $P_h$ and $P_t$ denote the transmit powers of the Txs in the FD tier, in the HD tier, and of the FD Rxs, respectively;
$\bm f_{x y}\in\mathbb{C}^{(N_f-N_t)\times1}$ ($\bm F_{x y}\in\mathbb{C}^{(N_f-N_t)\times N_t}$) denotes the small-scale fading coefficient vector (matrix) of the channel from the node at $x$ to the FD Rx at $y$ ($\bm F_{o o}$ denotes the SI channel matrix related to the residual SI after passive SI suppression like antenna isolation, the entries of which can be regarded as independent Rayleigh distributed variables \cite{Krikidis2012Full}).
It is worth noting that, due to the fixed Tx-Rx pair separation distance $D_f$ \footnote{Fixing the Tx-Rx pair distance is quite generic when dealing with a DWN with or without security considerations \cite{Zhou2011Throughput,Zhang2013Enhancing,Tong2015Throughput,Hunter08Transmission}, which allows us to ease the mathematical analysis.
	Nevertheless, the results obtained under this hypothesis can be easily extended to an arbitrary distribution of $D_f$, as referred to \cite{Haenggi2009Stochastic}.}, $D_{\hat z o}$ and $D_{z o}$ in \eqref{signal_fd} are not independent; $D_{\hat z o}$ can be expressed by $D_{\hat z o}=\sqrt{D_{z o}^2+D_f^2-2D_{ z o}D_f\cos\theta_z}$, where the angle $\theta_z$ is uniformly distributed in the range $[0, 2\pi]$.
As can be seen in subsequent analysis, the correlation between $D_{\hat z o}$ and $D_{z o}$ makes it challenging to derive analytically tractable results for involved performance metrics.

As to the HD Rx located at $b$, since it suffers no SI, the received signal is given by
\begin{align}\label{signal_hd}
\bm y_{h}
=&{\frac{\sqrt{P_h}\bm{h}_{\hat o o}s_{h,\hat o}}{D_h^{\alpha/2}}}
+{\sum_{\hat z\in\hat \Phi_h\setminus \hat o}\frac{\sqrt{P_h}\bm{h}
_{ \hat z o}s_{h,\hat z}}{D_{\hat z o}^{\alpha/2}}}\nonumber \\
&
+{\sum_{ \hat z\in\hat \Phi_f}\left(\frac{\sqrt{P_f}\bm{h}_{ \hat z o}s_{f,\hat z}}{D_{\hat z o}^{\alpha/2}}
+\frac{\sqrt{P_t}\bm{H}_{z o}\bm v_{ z}}{D_{z o}^{\alpha/2}}\right)}
+\bm n_{h},
\end{align}
where $\bm h_{x y}\in\mathbb{C}^{N_h\times1}$ ($\bm H_{x y}\in\mathbb{C}^{N_h\times N_t}$) denotes the small-scale fading coefficient vector (matrix) of the channel from the node at $x$ to the HD Rx at $y$.

Similarly, for the eavesdropper located at $e$ that is intended to wiretap the data transmission from the
typical Tx to the typical FD Rx, the received signal is given by
\begin{align}\label{signal_eve}
\bm y_{e}
=&
{\frac{\sqrt{P_f}\bm{g}_{\hat o e}s_{f, \hat o}}{D_{\hat oe}^{\alpha/2}}}
+\frac{\sqrt{P_t}\bm{G}_{ oe}\bm{v}_{ o}}
{D_{oe}^{\alpha/2}}
+{\sum_{\hat z\in\hat \Phi_h}\frac{\sqrt{P_h}\bm{g}_{ \hat ze}s_{h,\hat z}}{D_{\hat ze}^{\alpha/2}}}
\nonumber\\
&+{\sum_{\hat z\in\hat \Phi_f\setminus \hat o}\left(\frac{\sqrt{P_f}\bm{g}_{ \hat ze}s_{f,\hat z}}{D_{\hat ze}^{\alpha/2}}
	+\frac{\sqrt{P_t}\bm{G}_{ze}\bm v_{ z}}{D_{ze}^{\alpha/2}}\right)}
+\bm n_{e},
\end{align}
with $\bm g_{x e}\in\mathbb{C}^{N_e\times1}$ ($\bm G_{x e}\in\mathbb{C}^{N_e\times N_t}$) the fading coefficient vector (matrix) of the link from the node at $x$ to the eavesdropper at $e$ and ${\sqrt{P_t}\bm{G}_{ oe}\bm{v}_{ o}}
{D_{oe}^{-\alpha/2}}$ the interference from the typical FD Rx.

\subsection{Wiretap Encoding and Performance Metrics}
We consider a non-colluding wiretap scenario in which eavesdroppers do not cooperate with each other and each eavesdropper individually decodes a secret message.
To safeguard information security, we utilize the well-known Wyner's wiretap encoding scheme \cite{Wyner1975Wire-tap} to encode secret information.
Let ${R}_{t}$ and ${R}_{s}$ denote the rates of the transmitted codewords and the embedded secret messages, and
${R}_e\triangleq {R}_t -{R}_s$ denote the rate of redundant information that is exploited to provide secrecy against eavesdropping.
If a Tx-Rx link can support the rate ${R}_t$, the Rx is able to decode the secret messages; this corresponds to a reliable connection event \cite{Zhou2011Throughput}.
The connection probability of a typical FD Rx is defined as the probability that the signal-to-interference-plus-noise ratio (SINR) of the FD Rx, denoted by $\textsf{SINR}_f$, lies above an SINR threshold $\beta_t\triangleq2^{R_t}-1$, i.e.,
\begin{equation}\label{pt_fd_def}
\mathcal{P}_t\triangleq \mathbb{P}\{\textsf{SINR}_f > \beta_t\}.
\end{equation}
Similarly, the connection probability of an HD Rx is defined by $\mathcal{P}_c\triangleq\mathbb{P}
\{\textsf{SINR}_h>\beta_c\}$, where $\beta_c\triangleq 2^{{R}_c}-1$ with ${R}_c$ the corresponding codeword rate.

If the channel from the Tx to any of the eavesdroppers can support the redundant rate ${R}_e$, perfect secrecy is compromised and a secrecy outage event occurs \cite{Zhang2013Enhancing}.
The secrecy outage probability is defined as the complement of the probability that the SINR of an arbitrary eavesdropper at $e$, denoted by $\textsf{SINR}_{e}$, lies below an SINR threshold $\beta_e\triangleq2^{R_e}-1$, i.e.,
\begin{equation}\label{ps_fd_def}
\mathcal{P}_{so}\triangleq 1-\mathbb{E}_{\Phi_e}
\left[\prod_{e\in\Phi_e}\mathbb{P}
\left\{\textsf{SINR}_{e}
<\beta_{e}|\Phi_e\right\}\right].
\end{equation}

To evaluate the efficiency of secure transmissions in a DWN, we focus on the performance metric named \emph{network-wide secrecy throughput} (unit: $\mathrm{bits/s/Hz/m^2}$), which is defined as the averagely successfully transmitted information bits per second per Hertz per unit area under a connection probability $\mathcal{P}_{t}(\beta_{t})=\sigma$ and a secrecy outage probability $\mathcal{P}_{so}(\beta_{e})=\epsilon$ \cite{Zhou2011Throughput}, \cite{Zhang2013Enhancing}, i.e.,
\begin{align}\label{st_def}
\mathcal{T}_{s}
\triangleq \lambda_f \sigma\mathcal{R}^*_{s}
&=\lambda_f \sigma
\left[\mathcal{R}^*_{t} - \mathcal{R}^*_{e}\right]^+
\nonumber\\
&= \lambda_f \sigma
\left[\log\left({1+\beta^*_{t}}\right)
-\log\left(
{1+\beta^*_{e}}\right)\right]^+.
\end{align}
In \eqref{st_def},  ${R}^*_{t}\triangleq \log (1+\beta^*_{t})$, ${R}^*_{e}\triangleq \log (1+\beta^*_{e})$ and ${R}^*_{s}={R}^*_{t}-{R}^*_{e}$ denote the codeword rate, redundant rate and secrecy rate at a Tx in the FD tier, with $\beta^*_{t}$ and $\beta^*_{e}$ satisfying $\mathcal{P}_{t}(\beta_{t})=\sigma$
and $\mathcal{P}_{so}(\beta_{e})=\epsilon$, respectively.
Likewise, the network-wide throughput of the HD tier under a connection probability $\mathcal{P}_c(\beta_c)=\sigma_c$ is defined by
$\mathcal{T}_{c}\triangleq \lambda_h\sigma_c R_c^*$, where $R_c^*\triangleq \log (1+\beta_c^*)$ with $\beta_c^*$ satisfying $\mathcal{P}_{c}(\beta_c)=\sigma_c$.

We emphasize that the FD tier density strikes a non-trivial tradeoff between spatial reuse, reliable connection, and safeguarding.
On one hand, increasing the density of the FD tier establishes more communication links per unit area, potentially increasing throughput;
meanwhile, the increased jamming signals introduced by newly deployed FD Rxs greatly degrade the wiretap channels.
On the other hand, the additional amount of interference caused by adding new devices deteriorates ongoing receptions, decreasing the probability of successfully connecting Tx-Rx pairs.
The overall balance of such opposite effects on secrecy throughput needs to be carefully addressed.
Note that, from a network perspective, network designers may concern themselves more with the network deployment rather than optimizing other parameters like transmit or jamming power, antenna number, etc.
For example, in a security monitoring wireless sensor network, network designers may fix the transmit power of sensor nodes to their maximum values for simplicity, but would modestly design how many sensors should be scattered in order to satisfy the monitoring requirement.
Therefore, in this paper, we aim to determine the deployment of the FD tier to achieve the maximum network-wide secrecy throughput while guaranteeing a certain level of network-wide throughput for the HD tier.

In the following sections, we deal with network design by considering the scenarios of each FD Rx using single-antenna (SA) jamming ($N_t=1$) and using multi-antenna (MA) jamming ($N_t>1$), respectively.
The reason behind such a division is threefold:

1)
In the SA case, SSIC can only be operated at the FD Rx's input using a hybrid zero-forcing and maximal ratio combining (ZF-MRC) criterion;
in the MA case, due to the extra degrees of freedom, the FD Rx simply performs SSIC at the output  and just adopts the MRC reception at the input.

2)
In the SA case, since the channel from the FD receiver's output to either the legitimate node or to the eavesdropper is a  single-input multi-output channel, it is relatively easy to analyze connection probability and secrecy outage probability;
in the MA case, either of the above channels is an MIMO channel, which makes the analysis much more complicated, e.g., analyzing the secrecy outage probability requires using the theory from integer partitions.

3)
In the SA case, we prove the quasi-concavity of network-wide secrecy throughput w.r.t. the FD tier density, and calculate the optimal FD tier density that maximizes network-wide secrecy throughput using the bisection method;
in the MA case, due to the analytically intractable integer partitions,
we can only obtain the peak network-wide secrecy throughput via one-dimensional exhaustive search.

In our analysis, due to uncoordinated concurrent transmissions, the  aggregate interference at a receiver dominates thermal noise.
For tractability, we consider the \emph{interference-limited} case by ignoring thermal noise.
Nevertheless, our results can be easily generalized to the case with thermal noise.
For ease of notation, we define $\delta\triangleq {2}/{\alpha}$,
$C_{\alpha,N}\triangleq \frac{\pi\Gamma\left(N-1+\delta\right)
	\Gamma\left(1-\delta\right)}
{\Gamma\left(N-1\right)}$, and $P_{ab}\triangleq {P_a}/{P_b}$ for $a,b\in\{h,f,t\}$.

\section{Single-antenna-jamming FD Receiver}
In this section, we consider the scenario where each FD Rx uses single-antenna jamming, i.e., $N_t=1$.
Thereby, matrices $\bm F$, $\bm H$ and $\bm G$ given in \eqref{signal_fd}-\eqref{signal_eve} reduce to vectors ${\bm f}$, $ {\bm h}$, and ${\bm g}$, respectively, and vector $\bm v$ reduces to scalar $v$.
Without lose of generality, we consider a typical FD Tx-Rx pair $(\hat o,o)$.
We first analyze the connection probability and the secrecy outage probability of the typical FD Rx, and then maximize network-wide secrecy throughput by optimizing the density of the FD tier.

To counteract the SI and simultaneously strengthen the desired signal, the weight vector $\bm w_{f}$ at the FD Rx's input can be chosen according to a hybrid ZF-MRC criterion\footnote{ZF-MRC receiver might not be the optimal one, but it is really a simple yet efficient alternative that yields an achievable secrecy rate/throughput.
	Limiting the receiver in the null space of SI successfully avoids the SI invasion in the spatial domain without consuming extra circuit power that would be used for SI cancellation in the analog or digital domain.}, which is
\begin{equation}\label{zf_mrc}
\bm w_{f}= \frac{\bm f_{ \hat o o}^H \bm U\bm U^H}{\|\bm f_{ \hat o o}^H \bm U\|},
\end{equation}
where $\bm U\in\mathbb{C}^{(N_f-1)\times(N_f-2)}$ is the projection matrix onto the null space of vector $\bm f_{o o}^H$ such that the columns of $\left[\frac{\bm f_{oo}^H}{\|\bm f_{o o}\|}, \bm U\right]$ constitute an orthogonal basis.
In this way, we have $\bm w_{f}{\bm f}_{ oo}=0$.

\subsection{Connection Probability}
In this subsection, we investigate the connection probability of the typical FD Rx.
From \eqref{signal_fd} and \eqref{zf_mrc}, the SIR of the typical FD Rx is given by
\begin{align}\label{sir_fd}
\textsf{SIR}_{f} = \frac{P_f\|\bm f_{ \hat o o}^H \bm U\|^2D_f^{-\alpha}}
{I_h+I_f},
\end{align}
where $I_h\triangleq \sum_{\hat z\in\hat \Phi_h}P_h |\bm w_{f} \bm f_{\hat z o}|^2 D_{\hat z o}^{-\alpha}$ and $I_f\triangleq\sum_{\hat z\in\hat \Phi_f\setminus \hat o}\left(P_f |\bm w_{f} \bm f_{\hat z o}|^2 D_{ \hat z o}^{-\alpha}+P_t|\bm w_{f} \bm f_{z o}|^2D_{ zo}^{-\alpha}\right)$ are the aggregate interferences from the HD tier and FD tier, respectively.
In the following theorem, we provide a general expression of the exact connection probability $\mathcal{P}_t$.
\begin{theorem}\label{pt_exact_theorem}
The connection probability of a typical FD Rx is
\begin{align}\label{pt_exact}
\mathcal{P}_{t}
= \sum_{m=0}^{N_f-3}&
\frac{(-1)^m}{m!}
\left(\frac{D_f^{\alpha}\beta_t}{P_f}\right)^m\times\nonumber\\
&
\left(e^{-\lambda_hC_{\alpha,2}(P_hs)^{\delta}}\mathcal{L}_{I_f}\left({D_f^{\alpha}\beta_t}/{P_f}\right)\right)^{(m)} ,
\end{align}
where
$\mathcal{L}_{I_f}(s)$ denotes the Laplace transform of $I_f$, i.e.,
\begin{align}\label{laplace}
    \mathcal{L}_{I_f}(s)
    &=\exp\Bigg(-\lambda_f\int_0^{\infty}
    \Bigg(2\pi-\int_{o}^{2\pi}\frac{1}{1+P_f sr^{-\alpha}}\times \nonumber\\
  &  \frac{d\theta}
    {1+P_ts(r^2+D_f^2-2rD_f \cos\theta)^{-\alpha/2}}\Bigg)rdr\Bigg).
\end{align}
\end{theorem}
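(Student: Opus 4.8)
The plan is to compute $\mathcal{P}_t = \mathbb{P}\{\textsf{SIR}_f > \beta_t\}$ by conditioning on the aggregate interference $I = I_h + I_f$ and exploiting the distribution of the useful-signal term. First I would observe that, by the ZF-MRC construction in \eqref{zf_mrc}, the beamformer $\bm w_f$ lies in the $(N_f-2)$-dimensional null space of the SI channel and points along $\bm f_{\hat o o}^H\bm U$; hence $\|\bm f_{\hat o o}^H\bm U\|^2 \sim \Gamma(N_f-2,1)$ (a sum of $N_f-2$ i.i.d. unit exponentials), while each cross term $|\bm w_f \bm f_{\hat z o}|^2$ and $|\bm w_f \bm f_{z o}|^2$ is, conditionally on $\bm w_f$, unit-exponential and independent across interferers because $\bm w_f$ is a fixed unit vector independent of the other channels. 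Writing $s \triangleq D_f^\alpha \beta_t / P_f$, the event $\{\textsf{SIR}_f > \beta_t\}$ becomes $\{G > s\,I\}$ with $G \sim \Gamma(N_f-2,1)$ independent of $I$.

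Next I would use the standard gamma-tail identity: for $G\sim\Gamma(N,1)$ and any nonnegative random variable $I$ independent of $G$,
\begin{equation}\label{gamma_tail_identity}
\mathbb{P}\{G > s I\} = \mathbb{E}\!\left[\sum_{m=0}^{N-1}\frac{(sI)^m}{m!}e^{-sI}\right] = \sum_{m=0}^{N-1}\frac{(-s)^m}{m!}\,\frac{d^m}{ds^m}\,\mathcal{L}_I(s),
\end{equation}
where $\mathcal{L}_I(s)=\mathbb{E}[e^{-sI}]$. With $N = N_f-2$ this reproduces the sum $\sum_{m=0}^{N_f-3}$ in \eqref{pt_exact} and the derivative-of-Laplace-transform structure. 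It then remains to factor $\mathcal{L}_I(s) = \mathcal{L}_{I_h}(s)\,\mathcal{L}_{I_f}(s)$, which is legitimate since $\hat\Phi_h$ and $\hat\Phi_f$ are independent PPPs and the fading is independent across tiers.

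For $\mathcal{L}_{I_h}(s)$ I would apply the probability generating functional (PGFL) of the PPP $\hat\Phi_h$ with density $\lambda_h$: since each HD interferer contributes $P_h|\bm w_f\bm f_{\hat z o}|^2 D_{\hat z o}^{-\alpha}$ with a unit-exponential fading mark, the PGFL gives $\mathcal{L}_{I_h}(s) = \exp\!\big(-\lambda_h\int_{\mathbb{R}^2}(1 - \frac{1}{1+P_h s \|x\|^{-\alpha}})\,dx\big)$; passing to polar coordinates and using the known integral $\int_0^\infty (1-\frac{1}{1+c r^{-\alpha}})\,r\,dr = \frac{1}{2}c^\delta\,\Gamma(1+\delta)\Gamma(1-\delta)/\delta$ — equivalently the constant $C_{\alpha,2}$ defined before the theorem specialized to $N=2$ (so that $\Gamma(N-1+\delta)/\Gamma(N-1)=\Gamma(1+\delta)$) — yields $\mathcal{L}_{I_h}(s) = e^{-\lambda_h C_{\alpha,2}(P_h s)^\delta}$, matching the exponential factor in \eqref{pt_exact}. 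For $\mathcal{L}_{I_f}(s)$ I would again use the PGFL, but now each point $z\in\hat\Phi_f\setminus\hat o$ simultaneously contributes the paired-Tx term at distance $D_{\hat z o}$ and the FD-Rx jamming term at distance $D_{z o}$; since $D_{\hat z o}=\sqrt{D_{z o}^2+D_f^2-2 D_{z o} D_f\cos\theta_z}$ with $\theta_z$ uniform on $[0,2\pi)$ and the two fading coefficients are independent unit exponentials, averaging $e^{-s(P_f h_1 D_{\hat z o}^{-\alpha}+P_t h_2 D_{z o}^{-\alpha})}$ over $h_1,h_2,\theta_z$ and substituting into the PGFL produces exactly \eqref{laplace}, with the $2\pi$ coming from the angular normalization and the product of two $\frac{1}{1+\cdot}$ factors from the two independent exponential averages.

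The main obstacle is the $I_f$ Laplace transform: unlike the HD tier, the per-point contribution is a correlated pair (the same $z$ fixes both $D_{z o}$ and, jointly with $\theta_z$, $D_{\hat z o}$), so the angular integral cannot be evaluated in closed form and must be retained inside the radial integral — this is precisely why \eqref{laplace} is stated as a double integral. A secondary technical point worth stating carefully is the independence used in \eqref{gamma_tail_identity} and in the conditional-exponential claim for the cross terms: one should note that $\bm w_f$ depends only on $\bm f_{\hat o o}$ and $\bm f_{o o}$, hence is independent of all interfering channels, and that conditioning on $\bm w_f$ makes $\|\bm f_{\hat o o}^H\bm U\|^2$ $\Gamma(N_f-2,1)$-distributed while each $|\bm w_f\bm f_{\cdot o}|^2$ is unit-exponential — after which the conditioning can be dropped because none of these distributions depends on the particular realization of $\bm w_f$. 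Everything else is the routine differentiation bookkeeping already displayed in \eqref{pt_exact}.
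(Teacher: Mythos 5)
Your proposal is correct and follows essentially the same route as the paper's own proof: conditioning on $I_h+I_f$, using $\|\bm f_{\hat o o}^H\bm U\|^2\sim\Gamma(N_f-2,1)$ together with the gamma-tail/Laplace-derivative identity, factoring $\mathcal{L}_{I_h}\mathcal{L}_{I_f}$ by independence, and evaluating each factor by the PGFL, with the angular integral retained in $\mathcal{L}_{I_f}$ precisely because of the correlated pair $(D_{\hat z o},D_{z o})$. Your added remark on the independence of $\bm w_f$ from the interfering channels makes explicit a step the paper leaves implicit; no gaps.
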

\begin{proof}
Please see Appendix \ref{appendix_pt_exact_theorem}.
\end{proof}

Theorem \ref{pt_exact_theorem} provides an exact connection probability without
requiring time-consuming Monte Carlo simulations.
A special case is that when $N_f = 3$, $\mathcal{P}_{t}$ simplifies to $e^{-\lambda_hC_{\alpha,2}
	(P_hs)^{\delta}}\mathcal{L}_{I_f}
    \left({D_f^{\alpha}\beta_t}/{P_f}\right)$.
    However, for the more general case, the double integral term in \eqref{laplace} makes computing $\mathcal{L}^{(m)}_{I_f}(s)$ quite difficult, thus making \eqref{pt_exact} rather unwieldy to analyze.
    This motivates the need for more compact forms, and in the following theorem we provide closed-form lower and upper bounds for $\mathcal{P}_t$.
\begin{theorem}\label{pt_bound_theorem}
    The connection probability $\mathcal{P}_{t}$ of a typical FD Rx is lower bounded by $\mathcal{P}_{t}^L$ and upper bounded by $\mathcal{P}_{t}^U$;
    which share the same closed form given below,
\begin{align}\label{pt_bound}
\mathcal{P}_{t}^S = e^{-\Lambda_{f}^S\beta_t^{\delta}}
&+e^{-\Lambda_{f}^S\beta_t^{\delta}}\sum_{m=1}^{N_f-3}\frac{1}{m!}\times\nonumber\\
&\sum_{n=1}^{m}\left(
\delta\Lambda_{f}^S\beta_t^{\delta}\right)^n
\Upsilon_{m,n}, ~\forall S\in\{L,U\}
\end{align}
where $\Lambda_{f}^L\triangleq C_{\alpha,2}D_f^2\left({P_{hf}^{\delta}}
\lambda_h+\left(1+{P_{tf}^{\delta}}
\right)\lambda_f\right)$, $\Lambda_{f}^U\triangleq C_{\alpha,2}D_f^2\left({P_{hf}^{\delta}}\lambda_h+\frac{ 1+\delta}{2}\left(1+{P_{tf}^{\delta}}
\right)\lambda_f\right)$ and
$\Upsilon_{m,n}=\sum_{\psi_j\in \mathrm{comb}\binom{m-1}{m-n}}\prod_
{
	\substack{
		l_{ij}\in\psi_j\\
		i=1,\cdots,m-n}
}
\left(l_{ij}-\delta(l_{ij}-i+1)\right)$.
Here $\mathrm{comb}\binom{m-1}{m-n}$ denotes the set of all distinct subsets of the natural numbers $\{1,2,\cdots,m-1\}$ with cardinality $m-n$.
The elements in each subset are arranged in an increasing order with $l_{ij}$ the $i$-th element of $\psi_j$.
For $m\ge1$, we have $\Upsilon_{m,m}=1$.
\end{theorem}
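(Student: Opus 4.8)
The plan is to go through Theorem~\ref{pt_exact_theorem} in two stages. Abbreviate $s_0:=D_f^{\alpha}\beta_t/P_f$ and $Q(s):=e^{-\lambda_h C_{\alpha,2}(P_h s)^{\delta}}\mathcal{L}_{I_f}(s)$, so Theorem~\ref{pt_exact_theorem} reads $\mathcal{P}_t=\sum_{m=0}^{N_f-3}\frac{(-1)^m}{m!}s_0^m Q^{(m)}(s_0)$. Since the HD interference $I_h$, the FD interference $I_f$ and the SI are independent, $Q=\mathcal{L}_{I_h}\mathcal{L}_{I_f}$ is the Laplace transform of the aggregate interference $I:=I_h+I_f$, hence completely monotone, and $\mathcal{P}_t=\mathbb{P}\{G>s_0 I\}$ with $G\sim\Gamma(N_f-2,1)$ independent of $I$; this monotone reading of $\mathcal{P}_t$ in the interference law is what lets bounds on $\mathcal{L}_{I_f}$ propagate. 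The first stage is to replace the double integral $\Psi(s):=-\lambda_f^{-1}\ln\mathcal{L}_{I_f}(s)$ of \eqref{laplace} by closed-form surrogates of the pure power form $c\,s^{\delta}$; the second is to differentiate the resulting $e^{-As^{\delta}}$ in closed form and recognize $\Upsilon_{m,n}$.

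\emph{Bounding $\Psi$.} Put $a:=P_f s r^{-\alpha}$, $b:=P_t s\rho^{-\alpha/2}$, $\rho:=r^2+D_f^2-2rD_f\cos\theta$ in the integrand of \eqref{laplace}. For the upper bound, drop the cross term via $1-\frac{1}{(1+a)(1+b)}\le(1-\frac{1}{1+a})+(1-\frac{1}{1+b})$; the double integral splits into two single integrals, and the one carrying the shifted distance $\rho^{1/2}$ equals the unshifted one because $\int_{\mathbb{R}^2}(\cdot)\,dz$ is translation invariant (a point at a fixed offset $D_f$ with uniform orientation sweeps the plane exactly like the point itself). Both then reduce to $\int_0^{\infty}\frac{c\,r}{r^{\alpha}+c}\,dr=\frac{\pi\delta}{2\sin(\pi\delta)}c^{\delta}$, giving $\Psi(s)\le C_{\alpha,2}(P_f^{\delta}+P_t^{\delta})s^{\delta}=:\Psi^U(s)$. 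For the lower bound, use instead the exact identity $1-\frac{1}{(1+a)(1+b)}=\frac{a}{1+a}+\frac{b}{1+b}-\frac{a}{1+a}\frac{b}{1+b}$; the first two terms integrate exactly to $\Psi^U(s)$, and the cross term is bounded by $\frac{a}{1+a}\frac{b}{1+b}\le\frac12\big[(\frac{a}{1+a})^2+(\frac{b}{1+b})^2\big]$, with the squares integrating (again using translation invariance for the $b$-part) to $(1-\delta)C_{\alpha,2}(P_f s)^{\delta}$ and $(1-\delta)C_{\alpha,2}(P_t s)^{\delta}$ via $\int_0^{\infty}\frac{c^2 r}{(r^{\alpha}+c)^2}\,dr=\frac{(1-\delta)\pi\delta}{2\sin(\pi\delta)}c^{\delta}$. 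Hence the cross term is at most $\frac{1-\delta}{2}C_{\alpha,2}(P_f^{\delta}+P_t^{\delta})s^{\delta}$, so $\Psi(s)\ge\frac{1+\delta}{2}C_{\alpha,2}(P_f^{\delta}+P_t^{\delta})s^{\delta}=:\Psi^L(s)$. Adding the HD term $\lambda_h C_{\alpha,2}(P_h s)^{\delta}$ and evaluating at $s_0$, the exponent of $Q$ becomes $\Lambda_f^{L}\beta_t^{\delta}$ under $\Psi^U$ and $\Lambda_f^{U}\beta_t^{\delta}$ under $\Psi^L$, which explains the crossed labels ($\Psi^U$ overestimates the interference, hence underestimates $\mathcal{P}_t$).

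\emph{Differentiating $e^{-As^{\delta}}$.} After substituting $\mathcal{L}_{I_f}(s)\mapsto e^{-\lambda_f\Psi^S(s)}$, the function $Q$ is replaced by $\tilde Q(s)=e^{-As^{\delta}}$ with $A$ chosen so that $As_0^{\delta}=L$, $L\in\{\Lambda_f^{L}\beta_t^{\delta},\Lambda_f^{U}\beta_t^{\delta}\}$. Writing $\tilde Q^{(m)}(s)=e^{-As^{\delta}}q_m(s)$ gives $q_0\equiv1$ and $q_{m+1}=q_m'-A\delta s^{\delta-1}q_m$; the ansatz $q_m(s)=\sum_{n=1}^{m}d_{m,n}s^{n\delta-m}$ turns this into $d_{m+1,n}=(n\delta-m)d_{m,n}-A\delta\,d_{m,n-1}$. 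Then $s_0^m\tilde Q^{(m)}(s_0)=e^{-L}\sum_n d_{m,n}(L/A)^n$, and the $A$-free normalization $\Upsilon_{m,n}:=(-1)^m d_{m,n}/(A\delta)^n$ collapses this to $\frac{(-1)^m}{m!}s_0^m\tilde Q^{(m)}(s_0)=\frac{e^{-L}}{m!}\sum_n(\delta L)^n\Upsilon_{m,n}$, i.e. exactly \eqref{pt_bound} after summing over $m$, with $\Upsilon_{m,n}$ obeying $\Upsilon_{m+1,n}=(m-n\delta)\Upsilon_{m,n}+\Upsilon_{m,n-1}$, $\Upsilon_{m,m}=1$, $\Upsilon_{m,0}=0$ for $m\ge1$. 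It then remains to check that the stated sum $\sum_{\psi\in\mathrm{comb}\binom{m-1}{m-n}}\prod_i\big(l_i-\delta(l_i-i+1)\big)$ solves this recursion: splitting the $(m+1-n)$-subsets of $\{1,\dots,m\}$ according to whether they contain $m$, the ones avoiding $m$ reproduce $\Upsilon_{m,n-1}$, and the ones containing $m$ contribute the trailing factor $m-n\delta$ times $\Upsilon_{m,n}$.

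\emph{Anticipated obstacle.} The two Laplace estimates are routine stochastic-geometry manipulations, except that the lower bound is clean only once one sees that the cross term must be retained and handled by $xy\le\frac12(x^2+y^2)$ together with the second-moment integral $\int_0^\infty c^2 r(r^{\alpha}+c)^{-2}dr$ --- this is where the factor $\frac{1+\delta}{2}$ is born. The genuinely delicate step is the differentiation: a direct Fa\`a-di-Bruno expansion of $(e^{-As^{\delta}})^{(m)}$ produces a sum over integer partitions of $m$ (partial Bell polynomials at the falling-factorial sequence $1,\,1-\delta,\,(1-\delta)(2-\delta),\dots$), and converting that into a sum over \emph{subsets} of $\{1,\dots,m-1\}$ is precisely the combinatorial identity to be established; routing through the two-term recursion for $\Upsilon_{m,n}$ rather than manipulating Bell polynomials directly keeps the bookkeeping bounded. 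A secondary point deserving a sentence of care is that the pointwise-in-$s$ bounds on $\mathcal{L}_{I_f}$ propagate to $\mathcal{P}_t$, which rests on the representation $\mathcal{P}_t=\mathbb{P}\{G>s_0 I\}$ and the complete monotonicity of $Q$.
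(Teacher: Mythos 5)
Your proposal is correct and arrives at exactly the stated constants, but it takes a genuinely different route from the paper at every step. For the lower bound on $\mathcal{P}_t$ the paper applies the FKG inequality to decouple the expectation over the Tx locations $\hat\Phi_f$ from that over the jammer locations $\Phi_f$ and then evaluates two standard PGFLs; you instead work at the level of the integrand of \eqref{laplace}, dropping the cross term via $1-\tfrac{1}{(1+a)(1+b)}\le\tfrac{a}{1+a}+\tfrac{b}{1+b}$ and invoking translation invariance of the planar integral to dispose of the shifted distance $\sqrt{r^2+D_f^2-2rD_f\cos\theta}$ --- this reproduces $\Lambda_f^L$ with the same factor $1+P_{tf}^{\delta}$ and is arguably more elementary. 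For the upper bound the paper uses Cauchy--Schwarz, $\mathbb{E}[XY]\le\sqrt{\mathbb{E}[X^2]\mathbb{E}[Y^2]}$, on the product of the two interference functionals; you keep the exact identity and bound the cross term by $xy\le\tfrac12(x^2+y^2)$ together with the second-moment integral $\int_0^\infty c^2r(r^\alpha+c)^{-2}\,dr=(1-\delta)C_{\alpha,2}c^{\delta}/(2\pi)$, which lands on the identical factor $\tfrac{1+\delta}{2}$ (unsurprisingly, since both routes ultimately hinge on the $N=3$ versus $N=2$ moment of the same shot-noise field). Finally, where the paper simply cites Hunter et al.\ for the closed-form $m$-th derivative of $e^{-As^{\delta}}$, you derive the two-term recursion $\Upsilon_{m+1,n}=(m-n\delta)\Upsilon_{m,n}+\Upsilon_{m,n-1}$ and verify that the subset-sum formula solves it by splitting subsets of $\{1,\dots,m\}$ according to membership of $m$; this is a self-contained and correct proof of the cited identity, and I checked the base cases $\Upsilon_{m,m}=1$, $\Upsilon_{m,0}=0$ against $d_{1,1}=-A\delta$.

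One caveat, which you share with the paper rather than introduce: the step that converts a pointwise bound on $\mathcal{L}_{I_f}$ into a bound on $\mathcal{P}_t$ is not justified by complete monotonicity of $Q$ (that holds automatically for any Laplace transform and is not the operative property). What one actually needs is that $\mathbb{E}[\bar F_G(s_0 I)]$ respects the Laplace-transform order on $I$, which would follow if the Gamma tail $\bar F_G$ were completely monotone --- it is not for $N_f-2\ge 2$. The paper's own justification (``a lower bound for $\mathcal{L}_{I_f}$ overestimates the interference'') is equally informal, so this is a shared gap rather than a defect of your argument relative to the paper's; but if you want the propagation to be airtight you would need a stochastic-dominance statement about the surrogate interference itself, not merely an inequality between Laplace transforms.
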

\begin{proof}
Please see Appendix \ref{appendix_pt_bound_theorem}.
\end{proof}

Although \eqref{pt_bound} still seems  complicated, it is actually very easy to compute without any integrals.
Considering a practical need of a high level of reliability, we concentrate on the
large probability region in which $\mathcal{P}_{t}^S\rightarrow 1$ for $S\in\{L,U\}$, and provide a much simpler approximation for $\mathcal{P}_{t}^S$ in the following corollary, which further facilitates the analysis.
\begin{corollary}\label{pt_high_corollary}
	In the large probability region, i.e., $\mathcal{P}_t\rightarrow 1$, $\mathcal{P}_t^S$ in \eqref{pt_bound} is approximated by
	\begin{equation}\label{pt_high}
	\mathcal{P}_t^S\approx 1-\Lambda_{f}^S\beta_t^{\delta}
	K_{\alpha,N_f-2},  ~\forall S\in\{L,U\}
	\end{equation}
	where  $K_{\alpha,N}=1+\sum_{m=1}^{N-1}\frac{1}{m!}
	\prod_{l=0}^{m-1}(l-\delta)$.
\end{corollary}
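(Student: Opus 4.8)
The plan is to read \eqref{pt_high} as nothing more than the first-order Taylor expansion of the closed form \eqref{pt_bound} in the single small parameter $x\triangleq\Lambda_f^S\beta_t^\delta$. Since $\mathcal{P}_t^S$ in \eqref{pt_bound} is an analytic function of $x$ with value $1$ at $x=0$, the high-reliability regime $\mathcal{P}_t^S\to1$ is precisely the regime $x\to0$ (equivalently, the FD tier is deployed at a density $\lambda_f$ and operated at a threshold $\beta_t$ small enough that $\Lambda_f^S\beta_t^\delta\ll1$). So I would expand \eqref{pt_bound} about $x=0$, keep the constant and linear terms, and absorb everything of order $x^2$ into the error.

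Carrying this out: write $e^{-x}=1-x+O(x^2)$. In the double sum $\sum_{m=1}^{N_f-3}\frac{1}{m!}\sum_{n=1}^{m}(\delta x)^n\Upsilon_{m,n}$ every term with $n\ge2$ is already $O(x^2)$, and multiplying the surviving $n=1$ piece $\delta x\sum_{m=1}^{N_f-3}\frac{\Upsilon_{m,1}}{m!}$ by the prefactor $e^{-x}$ perturbs it only at order $x^2$; likewise $e^{-x}\cdot1=1-x+O(x^2)$. This gives
\begin{equation}
\mathcal{P}_t^S = 1-x\Bigl(1-\delta\sum_{m=1}^{N_f-3}\frac{\Upsilon_{m,1}}{m!}\Bigr)+O(x^2),
\end{equation}
so it only remains to show the bracket equals $K_{\alpha,N_f-2}$. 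For $n=1$ the index set $\mathrm{comb}\binom{m-1}{m-1}$ contains the single subset $\{1,2,\dots,m-1\}$ (in increasing order), hence $l_{ij}=i$ and the defining product of $\Upsilon_{m,1}$ collapses to $\prod_{i=1}^{m-1}\bigl(i-\delta(i-i+1)\bigr)=\prod_{i=1}^{m-1}(i-\delta)$, which is consistent with the stated convention $\Upsilon_{1,1}=1$ via the empty product. Then $-\delta\,\Upsilon_{m,1}=(-\delta)\prod_{i=1}^{m-1}(i-\delta)=\prod_{l=0}^{m-1}(l-\delta)$, so $1-\delta\sum_{m=1}^{N_f-3}\frac{\Upsilon_{m,1}}{m!}=1+\sum_{m=1}^{N_f-3}\frac{1}{m!}\prod_{l=0}^{m-1}(l-\delta)=K_{\alpha,N_f-2}$, and \eqref{pt_high} follows.

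I expect no conceptual difficulty here; the work is entirely bookkeeping. The two points to be careful about are: (i) verifying that nothing else in \eqref{pt_bound} leaks into the linear coefficient — i.e.\ that the $e^{-x}$ prefactors are linearized consistently and that the $n\ge2$ terms are genuinely $O(x^2)$ — and (ii) reading the combinatorial definition of $\Upsilon_{m,n}$ correctly at $n=1$, including the $m=1$ empty-product edge case. It is also worth stating once, for rigor, the precise meaning of ``$\approx$'': the relative error of \eqref{pt_high} vanishes as $\Lambda_f^S\beta_t^\delta\to0$, so the corollary is a well-posed asymptotic identity and may be safely invoked in the throughput optimization that follows.
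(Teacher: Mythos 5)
Your proposal is correct and follows essentially the same route as the paper, which likewise obtains \eqref{pt_high} by a first-order Taylor expansion of \eqref{pt_bound} around $\Lambda_f^S=0$ and discards the $\Theta\bigl((\Lambda_f^S)^2\bigr)$ terms. You simply carry out the bookkeeping the paper leaves implicit — in particular the verification that $-\delta\,\Upsilon_{m,1}=\prod_{l=0}^{m-1}(l-\delta)$, which correctly recovers $K_{\alpha,N_f-2}$.
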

\begin{proof}\label{pt_high_proof}
	We see from \eqref{pt_bound} that $\mathcal{P}_{t}^S\rightarrow 1$ as $\Lambda_{f}^S\rightarrow 0$.
	Here, $\Lambda_{f}^S\rightarrow 0$ reflects all cases of system parameters such as $D_f$, $\lambda_f$ and $\lambda_h$ that may lead to a large $\mathcal{P}_{t}^S$.
	A reasonable case of $\Lambda_{f}^S\rightarrow 0$ is but is not limited to that the Tx-Rx pair distance is much less than the average distance between any two Txs (or between two Rxs), i.e., $D_f^2\lambda_f, D_f^2\lambda_h\ll 1$.
	Using the first-order Taylor expansion with  \eqref{pt_bound} around $\Lambda_{f}^S=0$ and discarding the high order terms $\Theta \left(\left(\Lambda_{f}^S\right)^2\right)$, we complete the proof.
\end{proof}

\begin{figure}[!t]
	\centering
	\includegraphics[width=2.8in]{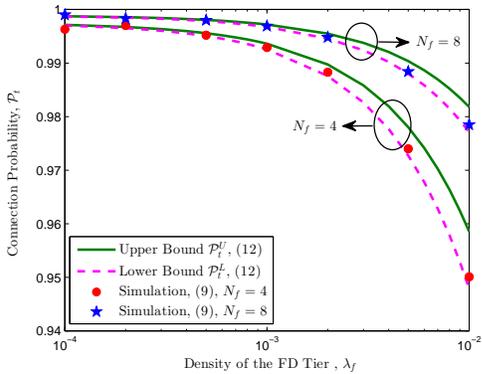}
	\caption{Connection probability vs. $\lambda_f$ for different values of $N_f$, with $P_t = 0$dBm, $N_t=1$, and $\beta_t = 1$.
		Unless specified otherwise, we set $\alpha=3.5$, $P_f=P_h= 0$dBm, $N_h=4$, $\lambda_h=10^{-3}$ and $D_f=D_h=1$.}
	\label{PT_LF_NF}
\end{figure}

The bound results given in Corollary \ref{pt_high_corollary} are shown in Fig. \ref{PT_LF_NF}, in which we see in the large probability region the lower bound $\mathcal{P}_{t}^L$ is tight to the exact value of $\mathcal{P}_{t}$ from Monte-Carlo simulations.
Therefore in subsequent analysis, we focus on the lower bound $\mathcal{P}_{t}^L$ instead of $\mathcal{P}_{t}$.
Note that this is actually a pessimistic evaluation of real connection performance.
From Fig. \ref{PT_LF_NF}, we also find that connection performance deteriorates as the FD tier density $\lambda_f$ increases due to the additional amount of interference.
This is ameliorated by adding receive antennas at the FD Rx.

Comparing \eqref{signal_hd} with \eqref{signal_fd}, it is not difficult to conclude that the connection probability $\mathcal{P}_{c}$ of a typical HD Rx shares a similar form as $\mathcal{P}_{t}$.
Likewise, we can obtain an approximation for $\mathcal{P}_{c}$ in the large probability region, which is provided by the following corollary.
\begin{corollary}\label{pc high_corollary}
	Define $\Lambda_{h}\triangleq C_{\alpha,2}\left(\lambda_h+
	\left(P_{fh}^{\delta}+P_{th}^{\delta}\right)
	\lambda_f\right)D_h^2$.
	In the large probability region, the connection probability $\mathcal{P}_c$ of a typical HD Rx is approximated by
	\begin{equation}\label{pc_high}
	{\mathcal{P}}_{c} \approx 1-\Lambda_{h}\beta_c^{\delta}
	K_{\alpha,N_h}.
	\end{equation}
\end{corollary}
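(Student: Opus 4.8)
The plan is to transcribe, for the HD receiver, the same chain of arguments that led from Theorem~\ref{pt_exact_theorem} to Corollary~\ref{pt_high_corollary}, exploiting the structural similarity between \eqref{signal_hd} and \eqref{signal_fd}. First I would fix the HD receiver's combining vector. Since the HD Rx suffers no SI, there is no need for the zero-forcing constraint, so it simply applies MRC on all $N_h$ antennas, $\bm w_h=\bm h_{\hat o o}^H/\|\bm h_{\hat o o}\|$. From \eqref{signal_hd} the SIR then reads $\textsf{SIR}_h=P_h\|\bm h_{\hat o o}\|^2 D_h^{-\alpha}/(I_h+I_f)$, where $I_h$ gathers the HD-tier undesired signals and $I_f$ the FD-tier undesired and jamming signals (with $N_t=1$, $\bm H_{zo}\bm v_z$ degenerates to a vector $\bm H_{zo}v_z$). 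Because $\bm h_{\hat o o}$ has i.i.d.\ $\mathcal{CN}(0,1)$ entries, $\|\bm h_{\hat o o}\|^2\sim\Gamma(N_h,1)$, and each interference mark such as $|\bm w_h\bm h_{\hat z o}|^2$ is unit-mean exponential and independent of the underlying point process — exactly the counterpart of \eqref{sir_fd} with $N_f-2$ replaced by $N_h$ and the ZF constraint dropped.

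Second, using the CCDF of a $\Gamma(N_h,1)$ variable I would write $\mathcal{P}_c=\mathbb{P}\{\|\bm h_{\hat o o}\|^2>s_0(I_h+I_f)\}$ with $s_0\triangleq D_h^{\alpha}\beta_c/P_h$, and expand it, as in the derivation of \eqref{pt_exact}, into $\sum_{m=0}^{N_h-1}\frac{(-1)^m}{m!}s_0^m\big(\mathcal{L}_{I_h}\mathcal{L}_{I_f}\big)^{(m)}(s_0)$. The HD-tier term is the standard PPP Laplace transform, $\mathcal{L}_{I_h}(s)=e^{-\lambda_h C_{\alpha,2}(P_h s)^{\delta}}$, precisely as the first factor in \eqref{pt_exact}. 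For the FD-tier term I would adopt the same device used in the proof of Theorem~\ref{pt_bound_theorem}: treating $\hat\Phi_f$ and $\Phi_f$ as two independent PPPs of density $\lambda_f$ decouples the correlated distances $D_{\hat z o}$ and $D_{zo}$ and yields $\mathcal{L}_{I_f}(s)=\exp\!\big(-\lambda_f C_{\alpha,2}(P_f^{\delta}+P_t^{\delta})s^{\delta}\big)$. Multiplying the two Laplace transforms and substituting $s_0$ (using $(D_h^{\alpha})^{\delta}=D_h^2$) collapses the exponent to exactly $\Lambda_h\beta_c^{\delta}$, i.e.\ $(\mathcal{L}_{I_h}\mathcal{L}_{I_f})(s_0)=e^{-\Lambda_h\beta_c^{\delta}}$ with $\Lambda_h$ as defined in the statement.

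Third, since the Laplace product has the form $e^{-\tilde\Lambda s^{\delta}}$, its $m$-th derivative generates exactly the Bell-polynomial / $\Upsilon_{m,n}$ bookkeeping appearing in \eqref{pt_bound}; carrying that out verbatim gives a closed form identical to \eqref{pt_bound} under the substitutions $\Lambda_f^S\beta_t^{\delta}\mapsto\Lambda_h\beta_c^{\delta}$ and $N_f-2\mapsto N_h$. Then, following the proof of Corollary~\ref{pt_high_corollary}, a first-order Taylor expansion around $\Lambda_h=0$ in the large-probability region, discarding the $\Theta(\Lambda_h^2)$ terms, produces $\mathcal{P}_c\approx 1-\Lambda_h\beta_c^{\delta}K_{\alpha,N_h}$ with $K_{\alpha,N_h}=1+\sum_{m=1}^{N_h-1}\frac{1}{m!}\prod_{l=0}^{m-1}(l-\delta)$, as claimed. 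I expect the only genuine subtlety — not so much an obstacle — to be the same one met in the FD analysis: the FD transmitter and its paired FD receiver are separated by the fixed distance $D_f$, so the contributions $P_f|\bm w_h\bm h_{\hat z o}|^2D_{\hat z o}^{-\alpha}$ and $P_t|\bm w_h\bm H_{zo}v_z|^2D_{zo}^{-\alpha}$ are coupled through $D_{\hat z o}^2=D_{zo}^2+D_f^2-2D_{zo}D_f\cos\theta_z$. Replacing this coupled pair by two independent PPPs is precisely what makes the Laplace transform factor cleanly, and as in Theorem~\ref{pt_bound_theorem} this should be argued to overestimate the aggregate interference, so that \eqref{pc_high} is (to first order) a conservative approximation of $\mathcal{P}_c$ — which is exactly what is needed for the subsequent network-design optimization. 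Everything else is a direct transcription of the FD derivation.
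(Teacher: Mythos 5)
Your proposal is correct and follows essentially the same route the paper intends: the paper gives no explicit proof of this corollary, simply invoking the structural analogy between \eqref{signal_hd} and \eqref{signal_fd}, and your transcription (MRC over all $N_h$ antennas so the signal power is $\Gamma(N_h,1)$, the FKG-based decoupling of $\hat\Phi_f$ and $\Phi_f$ giving the factorized Laplace transform with exponent $\Lambda_h\beta_c^{\delta}$, and the first-order expansion yielding $K_{\alpha,N_h}$) is exactly the chain of Theorems~\ref{pt_exact_theorem}--\ref{pt_bound_theorem} and Corollary~\ref{pt_high_corollary} adapted to the HD receiver. The bookkeeping of the power ratios ($P_{hh}^{\delta}=1$ for the own tier, $P_{fh}^{\delta}$ and $P_{th}^{\delta}$ for the FD-tier transmitters and jammers) matches the stated $\Lambda_h$.
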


\subsection{Secrecy Outage Probability}
In this subsection, we investigate the secrecy outage probability which corresponds to the probability that a secret message is decoded by \emph{at least} one eavesdropper in the network.

To guarantee secrecy, we should not underestimate the wiretap capability of eavesdroppers.
Thereby, we consider a worst-case wiretap scenario assuming that all eavesdroppers have multiuser decoding capabilities (e.g., successive interference cancellation), thus each eavesdropper itself can successfully resolve the signals radiated from those unexpected transmitters and remove them from its received signals\footnote{Successfully decoding multiplex signals actually depends on the so-called `capacity range', which is extremely hard to analyze if we have more than two users. In this paper, we are not going to spend time to analyze the complicated multiplex channel capacity, instead we consider the eavesdropping capacity of successive interference cancellation, which is actually the worse case scenario.} \cite{Zhang2013Enhancing}.
In this way, the aggregate interference received at each eavesdropper only consists of the jamming signals emitted by all the FD Rxs.
We assume that eavesdroppers use the optimal linear receiver, i.e., the minimum mean square error (MMSE) receiver \cite{Gao1998Theoretical}, to strengthen the received signals.
From \eqref{signal_eve}, the weight vector of the eavesdropper located at $e$ when using the MMSE receiver is
\begin{equation}\label{weight_mmse}
\bm w_{e} = \bm R_{e} ^{-1}\bm g_{\hat oe},
\end{equation}
where $\bm R_{e} \triangleq P_t\bm g_{oe}\bm g_{ oe}^HD_{o e}^{-\alpha}+\sum_{ z\in\Phi_f\setminus o}P_t\bm g_{ze}\bm g_{ze}^HD_{z e}^{-\alpha}$, and the resulting SIR is given by
\begin{equation}\label{sir_e}
\textsf{SIR}_{e} = P_f\bm g_{\hat oe}^H\bm R_{e}^{-1}\bm g_{\hat oe}D_{\hat o e}^{-\alpha}.
\end{equation}
In the following theorem, we provide a general expression of the exact secrecy outage probability.
\begin{theorem}\label{ps_exact_theorem}
    The secrecy outage probability $\mathcal{P}_{so}$ of a typical FD Rx is
\begin{align}\label{ps_exact}
    \mathcal{P}_{so}
    = 1 - &\exp\Bigg(
    -\lambda_e\sum_{n=0}^{N_e-1}
    \sum_{i=0}^{\min(n,1)}
    \frac{\left(C_{\alpha,2}\lambda_f
    \left(P_{tf}\beta_e\right)^{\delta}\right)
    ^{n-i}}
    {(n-i)!}\nonumber\\
&        \int_0^{\infty}
    \mathcal{Q}_i(r)r^{2(n-i)}
    e^{-C_{\alpha,2}\lambda_f
    \left(P_{tf}\beta_e\right)
    ^{\delta}r^2}rdr\Bigg),
    \end{align}
where $\mathcal{Q}_i(r)=\int_0^{2\pi}\frac{\left(P_{tf}\beta_e\left({r}/
{\sqrt{r^2+D_f^2-2rD_f\cos\theta}}\right)^{\alpha}\right)^i}
{1+P_{tf}\beta_e\left({r}/
{\sqrt{r^2+D_f^2-2rD_f\cos\theta}}\right)^{\alpha}}d\theta$.
\end{theorem}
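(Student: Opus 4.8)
The plan is to reduce $\mathcal{P}_{so}$ to the success probability of a single eavesdropper and then average over the Poisson field $\Phi_e$. Since eavesdroppers neither cooperate nor interfere with one another, the factor $\mathbb{P}\{\textsf{SINR}_e<\beta_e\mid\Phi_e\}$ in \eqref{ps_fd_def} depends on $e$ only through its position relative to the typical pair $(\hat o,o)$, which (for the fixed pair distance $D_f$) I would parametrize by $r\triangleq D_{\hat oe}$ and the angle $\theta$ subtended at the typical Tx, so that $D_{oe}=\sqrt{r^2+D_f^2-2rD_f\cos\theta}$ with $\theta$ uniform on $[0,2\pi]$. Writing $p(r,\theta)\triangleq\mathbb{P}\{\textsf{SIR}_e>\beta_e\}$ for such an eavesdropper and applying the probability generating functional of the PPP $\Phi_e$ gives $\mathcal{P}_{so}=1-\exp\!\big(-\lambda_e\int_{\mathbb{R}^2}p\,dx\big)$; passing to polar coordinates centered at the typical Tx then yields
\begin{equation}\label{ps_pgfl}
\mathcal{P}_{so}=1-\exp\!\left(-\lambda_e\int_0^{\infty}\!\!\int_0^{2\pi}p(r,\theta)\,r\,d\theta\,dr\right),
\end{equation}
so everything reduces to computing $p(r,\theta)$.

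For $p(r,\theta)$ I would work from the MMSE output SINR in \eqref{sir_e}, splitting the interference matrix $\bm R_e$ into the rank-one self-jamming term $P_t\bm g_{oe}\bm g_{oe}^HD_{oe}^{-\alpha}$ of the typical FD Rx — whose distance $D_{oe}$ is tied to the desired-link distance $r=D_{\hat oe}$ through $D_f$ — and the aggregate jamming $\sum_{z\in\Phi_f\setminus o}P_t\bm g_{ze}\bm g_{ze}^HD_{ze}^{-\alpha}$ from the other FD receivers, which is generated by a PPP independent of the typical pair. Averaging the MMSE quadratic form over that Poisson field (via its Laplace functional) should contribute an effective interference level $\mu\triangleq C_{\alpha,2}\lambda_f(sP_t)^{\delta}=C_{\alpha,2}\lambda_f(P_{tf}\beta_e)^{\delta}r^{2}$ at the natural argument $s=\beta_e r^{\alpha}/P_f$, and, because $\bm g_{\hat oe}$ has $N_e$ i.i.d.\ complex Gaussian entries, the $N_e$-fold desired channel should turn this into the truncated exponential series $e^{-\mu}\sum_{k=0}^{N_e-1}\mu^k/k!$. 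Handling the self-jamming term via the matrix-inversion lemma and then averaging over $\bm g_{oe}$ splits this into a two-component mixture indexed by $i\in\{0,1\}$: one branch ($i=0$) retains all $N_e$ dimensions and carries the angular factor $\mathcal{Q}_0(r)$, while the other ($i=1$) effectively nulls the self-jammer, loses one dimension — its series truncated at $N_e-2$ — and carries $\mathcal{Q}_1(r)$, with $\mathcal{Q}_0(r)+\mathcal{Q}_1(r)=2\pi$ and the $\theta$-integrals averaging over the uniform relative angle via $D_{oe}=\sqrt{r^2+D_f^2-2rD_f\cos\theta}$. Reindexing the two truncated series by $n$, so that the Poisson level supplies the weights $\frac{\mu^{n-i}}{(n-i)!}=\frac{(C_{\alpha,2}\lambda_f(P_{tf}\beta_e)^{\delta})^{n-i}}{(n-i)!}\,r^{2(n-i)}$, and substituting into \eqref{ps_pgfl} reproduces \eqref{ps_exact}.

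The main obstacle is establishing the exact CCDF of the MMSE output SINR for this model — a Poisson field of rank-one Rayleigh interferers plus one extra, geometrically correlated, rank-one self-jamming term — and showing it collapses to precisely the two-component ($i\in\{0,1\}$) mixture above. Two points need care: (i) verifying that, for the purpose of the SINR CCDF, averaging the MMSE quadratic form over the Rayleigh interference channels of the Poisson field behaves like a single effective interference level equal to the PPP's Laplace exponent, which is what produces the clean Poissonian weights $\mu^k/k!$ (rather than the messier fractional-power derivative terms a matched-filter treatment would give); and (ii) handling the statistical coupling between $D_{oe}$ and $D_{\hat oe}=r$, which is exactly what prevents the self-jamming term from being absorbed into the PPP average and forces the irreducible $\theta$-integral in $\mathcal{Q}_i(r)$. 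This is the single-stream analogue of the integer-partition bookkeeping the paper flags as necessary in the multi-antenna-jamming case.
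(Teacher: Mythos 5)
Your overall route coincides with the paper's: reduce $\mathcal{P}_{so}$ via the PGFL of $\Phi_e$ to a per-eavesdropper success probability, compute that probability from the MMSE SIR in \eqref{sir_e}, and keep the typical receiver's self-jamming term separate from the aggregate jamming of $\Phi_f\setminus o$ because $D_{oe}$ is geometrically tied to $r=D_{\hat oe}$ through $D_f$ --- which is exactly what forces the angular integrals $\mathcal{Q}_i(r)$. Your reading of the final structure (the two-component mixture over $i$, the Poissonian weights, $\mathcal{Q}_0(r)+\mathcal{Q}_1(r)=2\pi$) is also consistent with \eqref{ps_exact}.

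The genuine gap is at the step you yourself flag as ``the main obstacle'': you never establish the conditional CCDF of the MMSE output SIR, you only conjecture its form (``should contribute\ldots should turn this into\ldots''). The paper closes this with two concrete ingredients. First, conditioned on the interferer locations, the CCDF of the MMSE SIR with rank-one Rayleigh-faded interferers is available in closed form from Gao, Smith and Clark \cite{Gao1998Theoretical}:
\begin{equation*}
\mathbb{P}\left\{\textsf{SIR}_{e}\ge\beta_e\right\}
=\mathbb{E}_{\Phi_f}\left[\frac{1}{W}\sum_{n=0}^{N_e-1}w_nv^n\right],
\end{equation*}
where $v=r^{\alpha}\beta_e/P_f$, $W=\bigl(1+P_tD_{oe}^{-\alpha}v\bigr)\prod_{z\in\Phi_f\setminus o}\bigl(1+P_tD_{ze}^{-\alpha}v\bigr)$, and $w_n$ is the coefficient of $v^n$ in $W$. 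All fading (including $\bm g_{oe}$) is already averaged in this formula, so the index $i\in\{0,1\}$ arises purely algebraically --- from whether the self-jammer's factor contributes $1$ or $P_tD_{oe}^{-\alpha}v$ to a monomial of $w_n$ --- not from a matrix-inversion-lemma ``nulling'' argument, and both branches keep the sum running to $N_e-1$. Second, the expectation over $\Phi_f$ of the resulting ratio of a degree-$(n-i)$ symmetric sum over distinct points to the product $\prod_{z}(1+P_tD_{ze}^{-\alpha}v)$ is evaluated exactly by the Campbell--Mecke theorem, yielding $\frac{\mu^{n-i}}{(n-i)!}e^{-\mu}$ with $\mu=2\pi\lambda_f\int_0^{\infty}\frac{P_tvr^{-\alpha}}{1+P_tvr^{-\alpha}}r\,dr=C_{\alpha,2}\lambda_f(P_{tf}\beta_e)^{\delta}r^2$. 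Your heuristic (i) --- that the Poisson field acts like a single effective interference level equal to its Laplace exponent --- correctly describes the outcome of these two computations, but it is their conclusion, not a substitute for them; without them the Poissonian weights and the exact truncation at $N_e-1$ remain unproven.
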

\begin{proof}\label{ps_exact_proof}
    Please see Appendix \ref{appendix_ps_exact_theorem}.
\end{proof}

The double integral in \eqref{ps_exact} makes $\mathcal{P}_{so}$ difficult to analyze.
Note that, in a large-scale DWN, a Tx is generally a simple node with low transmit power, e.g., a sensor, and has very short coverage.
Therefore, to guarantee a reliable communication and meanwhile protect information from eavesdropping, the separation distance $D_f$ between a Tx-Rx pair is usually set small.
In view of this, in order to develop useful and meanwhile tractable insights into the behavior of the secrecy outage probability $\mathcal{P}_{so}$, we resort to an asymptotic analysis by considering a small $D_f$ regime, e.g., $D_f\rightarrow 0$, and provide quite a simple approximation for $\mathcal{P}_{so}$ in Corollary \ref{ps_approx_corollary}.
We stress that, although Corollary \ref{ps_approx_corollary} is obtained by assuming $D_f\rightarrow 0$, it applies more generally.
As can be seen in Fig. \ref{PSO_APPROXIMATION}, \eqref{ps_approx} provides a very accurate approximation for the exact value in \eqref{ps_exact} for quite a wide range of $D_f$.
This illustrates the rationality of the given hypothesis.
Hereafter, unless specified otherwise, we focus on the case $D_f\rightarrow 0$ when referring to the secrecy outage probability $\mathcal{P}_{so}$ of the typical FD Rx.

\begin{corollary}\label{ps_approx_corollary}
    In the small $D_f$ regime, i.e., $D_f\rightarrow 0$,   $\mathcal{P}_{so}$ in \eqref{ps_exact} is approximated by
    \begin{equation}\label{ps_approx}
        \mathcal{P}_{so}\approx 1-\exp\left[-\frac{\pi\lambda_e}
        {C_{\alpha,2}\lambda_fP_{tf}^{\delta}
        \beta_e^{\delta}}\left(
        N_e-1+\frac{1}{1+P_{tf}\beta_e}\right)\right].
    \end{equation}
\end{corollary}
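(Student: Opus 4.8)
The plan is to start from the exact expression \eqref{ps_exact} and let the Tx--Rx separation $D_f\to 0$. The only place where $D_f$ enters is through the ratio $\rho_\theta(r)\triangleq r/\sqrt{r^2+D_f^2-2rD_f\cos\theta}$ that appears (raised to the $\alpha$) inside $\mathcal{Q}_i(r)$. First I would note that for every fixed $r>0$ and $\theta$ we have $\rho_\theta(r)\to 1$ as $D_f\to 0$, so the integrand of $\mathcal{Q}_i(r)$ converges pointwise; consequently $\mathcal{Q}_0(r)\to \tfrac{2\pi}{1+P_{tf}\beta_e}$ and $\mathcal{Q}_1(r)\to \tfrac{2\pi P_{tf}\beta_e}{1+P_{tf}\beta_e}$, both independent of $r$. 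Since $0\le \mathcal{Q}_i(r)\le 2\pi$ uniformly in $r$ and $\int_0^{\infty} r^{2(n-i)+1}e^{-\kappa r^2}\,dr<\infty$ with $\kappa\triangleq C_{\alpha,2}\lambda_f(P_{tf}\beta_e)^{\delta}$, dominated convergence permits replacing $\mathcal{Q}_i(r)$ by its constant limit $\mathcal{Q}_i$ inside each radial integral.

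Next, with $\mathcal{Q}_i$ pulled outside, I would evaluate the Gaussian moment $\int_0^{\infty} r^{2(n-i)+1}e^{-\kappa r^2}\,dr=\tfrac{(n-i)!}{2\kappa^{\,n-i+1}}$ via the substitution $u=r^2$ and the gamma integral. Substituting this back, the generic $(n,i)$ summand collapses to $\tfrac{\kappa^{\,n-i}}{(n-i)!}\,\mathcal{Q}_i\cdot\tfrac{(n-i)!}{2\kappa^{\,n-i+1}}=\tfrac{\mathcal{Q}_i}{2\kappa}$, which is \emph{independent of $n$}. This is the crux of the argument: the double sum degenerates into a mere count of its terms.

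Then I would do the bookkeeping of $\sum_{n=0}^{N_e-1}\sum_{i=0}^{\min(n,1)}$. The index $n=0$ contributes only $i=0$, while each $n\in\{1,\dots,N_e-1\}$ contributes $i\in\{0,1\}$, so $\sum \mathcal{Q}_i = N_e\mathcal{Q}_0+(N_e-1)\mathcal{Q}_1$. Using $\mathcal{Q}_0+\mathcal{Q}_1=2\pi$, this equals $2\pi\big(N_e-1+\tfrac{1}{1+P_{tf}\beta_e}\big)$. Feeding this into $\mathcal{P}_{so}=1-\exp\!\big(-\tfrac{\lambda_e}{2\kappa}\sum\mathcal{Q}_i\big)$ and recalling $\kappa=C_{\alpha,2}\lambda_f P_{tf}^{\delta}\beta_e^{\delta}$ reproduces exactly \eqref{ps_approx}.

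I expect the one genuine subtlety to be the limit interchange. For a fixed $D_f>0$, $\rho_\theta(r)\to 0$ rather than $1$ as $r\to 0$, so $\mathcal{Q}_i(r)\to\mathcal{Q}_i$ is not uniform near the origin and one must argue that the small-$r$ ``transition layer'' of width $\Theta(D_f)$ contributes negligibly; this is precisely what the dominated-convergence bound $\mathcal{Q}_i(r)\le 2\pi$, together with the integrability of $r^{2(n-i)+1}e^{-\kappa r^2}$ on $[0,\infty)$, handles. Everything else is routine algebra.
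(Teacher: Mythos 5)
Your proposal is correct and follows essentially the same route as the paper's proof in Appendix D: replace $\mathcal{Q}_i(r)$ by its constant limit $\tfrac{2\pi(P_{tf}\beta_e)^i}{1+P_{tf}\beta_e}$, evaluate the Gaussian moment $\int_0^\infty r^{2(n-i)+1}e^{-\kappa r^2}dr=\tfrac{(n-i)!}{2\kappa^{n-i+1}}$ so each $(n,i)$ term collapses to $\mathcal{Q}_i/(2\kappa)$, and count the terms. Your explicit dominated-convergence justification of the limit interchange is a welcome addition the paper omits, and your bookkeeping $N_e\mathcal{Q}_0+(N_e-1)\mathcal{Q}_1$ is in fact cleaner than the paper's intermediate display, which contains a typo ($\sum_n(1+P_{tf}\beta_e)^n$) yet lands on the same final expression.
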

\begin{proof}
    Please see Appendix \ref{appendix_ps_approx_corollary}.
\end{proof}

\begin{figure}[!t]
\centering
\includegraphics[width=2.5in]{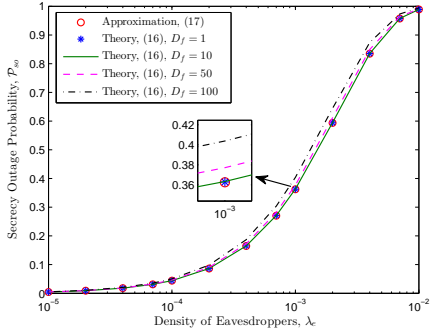}
\caption{Secrecy outage probability vs. $\lambda_e$ for different values of $D_f$, with $P_t = 10$dBm, $N_t=1$, $N_e=4$, $\lambda_f=10^{-3}$ and $\beta_e = 1$.}
\label{PSO_APPROXIMATION}
\end{figure}

\begin{figure}[!t]
\centering
\includegraphics[width=2.5in]{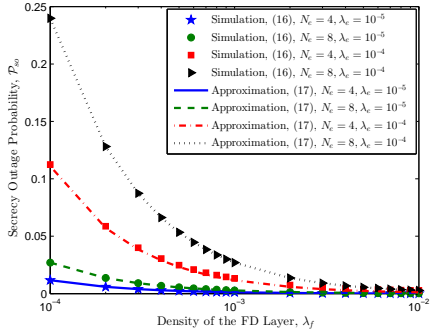}
\caption{Secrecy outage probability vs. $\lambda_f$ for different values of $N_e$ and $\lambda_e$, with $P_t = 20$dBm, $N_t=1$ and $\beta_e = 1$.}
\label{PSO_LF_NE_LE}
\end{figure}

Fig. \ref{PSO_LF_NE_LE} shows that the approximated values in \eqref{ps_approx} for the secrecy outage probability $\mathcal{P}_{so}$ are quite close to the exact results of Monte-Carlo simulations.
The value of $\mathcal{P}_{so}$ gets larger as either the number $N_e$ of eavesdropper's antennas or the density $\lambda_e$ of eavesdroppers  increases.
To reduce the secrecy outage probability, we should better deploy more FD jammers, i.e., increasing the value of $\lambda_f$.

Although eavesdroppers do not collude with each other, they may use large antennas for better wiretapping.
Considering that the value of $N_e$ goes to infinity, $\mathcal{P}_{so} $ in \eqref{ps_approx} reduces to
\begin{equation}\label{ps_large_antenna}
\mathcal{P}_{so} = 1-\exp\left(-\frac{\pi\lambda_eN_e}
{C_{\alpha,2}\lambda_fP_{tf}^{\delta}
	\beta_e^{\delta}}\right).
\end{equation}
We observe from \eqref{ps_large_antenna} that  $\mathcal{P}_{so}$ increases as $\alpha$ increases.
This is because, in an environment of a larger path loss, jamming signals have undergone stronger attenuation before they arrive at the eavesdroppers.

\subsection{Network-wide Secrecy Throughput}
In this subsection, we investigate network-wide secrecy throughput  $\mathcal{T}_{s}$ under a connection probability $\mathcal{P}_{t}(\beta_{t})=\sigma$ and a secrecy outage constraint $\mathcal{P}_{so}(\beta_{e})=\epsilon$, which is defined in \eqref{st_def}.
Clearly, if $\beta^*_{t} \le\beta^*_{e}$, a positive $\mathcal{T}_{s}$ that simultaneously satisfies the given probabilities does not exists, thus, transmissions should be suspended.
Note that although increasing the density $\lambda_f$ of the FD tier may increase network-wide secrecy throughput $\mathcal{T}_s$, it introduces greater interference to the HD tier, thus reducing network-wide throughput $\mathcal{T}_c$.
To achieve a good balance, we should carefully choose the value of $\lambda_f$.
In the following, we aim to maximize
$\mathcal{T}_s$ by optimizing $\lambda_f$ under a guarantee that $\mathcal{T}_c$ lies above a target throughput $T_c$.
This optimization problem is formulated as
\begin{align}\label{st_max_def}
\max_{\lambda_f}~ \mathcal{T}_{s},\quad
\mathrm{s.t.}~ \mathcal{T}_{c}\ge {T}_c.
\end{align}
To proceed, we first calculate $\beta^*_{t}$ and $\beta^*_{e}$ from $\mathcal{P}_{t}(\beta_{t})=\sigma$
and $\mathcal{P}_{so}(\beta_{e})=\epsilon$, respectively.
In general, the analytical expressions of the exact $\beta^*_{t}$ and $\beta^*_{e}$ are unavailable due to the complexity of \eqref{pt_exact} and \eqref{ps_exact}; we can only numerically calculate $\beta^*_{t}$ and $\beta^*_{e}$, which makes solving problem \eqref{st_max_def} extremely difficult.
To facilitate the analysis and provide useful insights into network design, we resort to some approximate results of the connection and secrecy outage probabilities.
To ensure a high level of reliability, the connection probability $\sigma$ is expected to be large, which allow us to use \eqref{pt_high} to calculate $\beta^*_{t}$.
\begin{lemma}\label{beta_t_lemma}
	In the large $\sigma$ regime, i.e., $\sigma\rightarrow 1$, the root $\beta_{t}$ of the equation $\mathcal{P}_{t}(\beta_{t})=\sigma$ is given by
	\begin{equation}\label{beta_t}
	\beta_t^* = \left(\frac{1-\sigma}
	{C_{\alpha,2}D_f^2 K_{\alpha,N_f-2}\left(
		P_{hf}^{\delta}\lambda_h+(1+P_{tf}^{\delta})
		\lambda_f\right)}\right)^{\frac{\alpha}{2}}.
	\end{equation}
\end{lemma}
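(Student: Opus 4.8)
The plan is to obtain $\beta_t^*$ by inverting the high‑reliability approximation of the connection probability derived in Corollary \ref{pt_high_corollary}. Since the prescribed connection probability $\sigma$ is assumed close to $1$, the operating point lies in the large probability region, where (as noted in the discussion following Fig. \ref{PT_LF_NF}) the lower bound $\mathcal{P}_t^L$ is tight to the exact $\mathcal{P}_t$. I would therefore replace the equation $\mathcal{P}_t(\beta_t)=\sigma$ by $\mathcal{P}_t^L(\beta_t)=\sigma$ and substitute the first‑order expression \eqref{pt_high} with $S=L$, namely $\mathcal{P}_t^L\approx 1-\Lambda_{f}^L\beta_t^{\delta}K_{\alpha,N_f-2}$, with $\Lambda_{f}^L=C_{\alpha,2}D_f^2\big(P_{hf}^{\delta}\lambda_h+(1+P_{tf}^{\delta})\lambda_f\big)$ taken from Theorem \ref{pt_bound_theorem}.

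The computation itself is a one‑line algebraic inversion: setting $1-\Lambda_{f}^L\beta_t^{\delta}K_{\alpha,N_f-2}=\sigma$ gives $\beta_t^{\delta}=(1-\sigma)/\big(\Lambda_{f}^L K_{\alpha,N_f-2}\big)$, and raising both sides to the power $1/\delta=\alpha/2$ and inserting the explicit form of $\Lambda_{f}^L$ yields
\begin{equation}
\beta_t^* = \left(\frac{1-\sigma}{C_{\alpha,2}D_f^2 K_{\alpha,N_f-2}\left(P_{hf}^{\delta}\lambda_h+(1+P_{tf}^{\delta})\lambda_f\right)}\right)^{\frac{\alpha}{2}},
\end{equation}
which is \eqref{beta_t}.

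The only point that needs a brief argument is the self‑consistency of the approximation, i.e.\ that the $\beta_t^*$ so obtained actually lies in the regime where \eqref{pt_high} is accurate. This is immediate: as $\sigma\to 1$ we have $\beta_t^*\to 0$, hence $\Lambda_{f}^L(\beta_t^*)^{\delta}\to 0$, so the second‑order remainder $\Theta\big((\Lambda_{f}^L\beta_t^{\delta})^2\big)$ discarded in the Taylor expansion underlying Corollary \ref{pt_high_corollary} is negligible relative to the retained linear term, and the inversion is therefore consistent with its own hypothesis. A secondary remark, not an obstacle, is that since we invert the lower bound $\mathcal{P}_t^L\le\mathcal{P}_t$, the value in \eqref{beta_t} is a slightly conservative choice of the SINR threshold, which only tightens the reliability guarantee — the same pessimistic stance already adopted in this section. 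I do not anticipate a genuine difficulty: the content of the lemma is the algebraic inversion above, and the substantive work was already carried out in establishing Corollary \ref{pt_high_corollary}.
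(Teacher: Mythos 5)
Your proposal is correct and follows exactly the paper's own route: the paper likewise proves the lemma by invoking Corollary \ref{pt_high_corollary} and solving $1-\Lambda_{f}^L\beta_t^{\delta}K_{\alpha,N_f-2}=\sigma$ for $\beta_t$. Your added self-consistency check ($\beta_t^*\to 0$ as $\sigma\to 1$) is a sensible bonus but not part of the paper's argument.
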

\begin{proof}
	Recalling Corollary \ref{pt_high_corollary}, we obtain \eqref{beta_t} by solving the equation $1-\Lambda_{f}^L\beta_t^{\delta}
	K_{\alpha,N_f-2}=\sigma$.
\end{proof}


Considering the scenario of large-antenna eavesdroppers, i.e., $N_e\gg1$, we have the following lemma.
\begin{lemma}\label{beta_t_lemma}
	In the large $N_e$ regime,  the root $\beta_{e}$ of the equation $\mathcal{P}_{so}(\beta_{e})=\epsilon$ is given by
	\begin{equation}\label{beta_e}
	\beta_e^* = \frac{1}{P_{tf}}
	\left(\frac{\pi\lambda_eN_e}
	{C_{\alpha,2}\lambda_f\ln\frac{1}{1-\epsilon}}
	\right)^{\frac{\alpha}{2}}.
	\end{equation}
\end{lemma}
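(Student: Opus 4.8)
The plan is to obtain $\beta_e^*$ by directly inverting the large-antenna approximation of the secrecy outage probability. First I would recall that for $N_e\gg1$ the approximation of Corollary~\ref{ps_approx_corollary} collapses to \eqref{ps_large_antenna}, i.e.
\[
\mathcal{P}_{so} = 1 - \exp\!\left(-\frac{\pi\lambda_e N_e}{C_{\alpha,2}\lambda_f P_{tf}^{\delta}\beta_e^{\delta}}\right),
\]
which, viewed as a function of $\beta_e$ on $(0,\infty)$, is continuous and strictly decreasing with range $(0,1)$; hence for any prescribed $\epsilon\in(0,1)$ the equation $\mathcal{P}_{so}(\beta_e)=\epsilon$ admits a unique root, and it remains only to write it in closed form.

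Next I would set the expression above equal to $\epsilon$, rearrange to $\exp\!\left(-\frac{\pi\lambda_e N_e}{C_{\alpha,2}\lambda_f P_{tf}^{\delta}\beta_e^{\delta}}\right)=1-\epsilon$, take natural logarithms, and solve for $\beta_e^{\delta}$, obtaining $\beta_e^{\delta} = \frac{\pi\lambda_e N_e}{C_{\alpha,2}\lambda_f P_{tf}^{\delta}\,\ln\frac{1}{1-\epsilon}}$. I would then raise both sides to the power $1/\delta = \alpha/2$ and use $\delta = 2/\alpha$, so that $\bigl(P_{tf}^{\delta}\bigr)^{1/\delta}=P_{tf}$; pulling this power-allocation factor out of the bracket as $1/P_{tf}$ yields exactly \eqref{beta_e}.

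There is no genuine obstacle here; the argument is essentially a one-line inversion. The only two points that merit an explicit word of justification are: (i) that invoking the large-$N_e$ form \eqref{ps_large_antenna} in place of the general approximation \eqref{ps_approx} is legitimate because the lemma is stated precisely for the regime $N_e\gg1$ (equivalently, the residual term $\tfrac{1}{1+P_{tf}\beta_e}$ in \eqref{ps_approx} is negligible against $N_e-1$); and (ii) the exponent bookkeeping, namely that $P_{tf}^{\delta\cdot\alpha/2}=P_{tf}$, which is what makes the stated form equivalent to keeping $P_{tf}^{\delta}$ inside the bracket.
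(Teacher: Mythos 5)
Your proposal is correct and follows exactly the paper's own route: the paper likewise takes the large-$N_e$ limit \eqref{ps_large_antenna} of Corollary~\ref{ps_approx_corollary} and solves $1-e^{-\pi\lambda_e N_e/(C_{\alpha,2}\lambda_f P_{tf}^{\delta}\beta_e^{\delta})}=\epsilon$ for $\beta_e$. Your added remarks on monotonicity/uniqueness and on the exponent bookkeeping $\bigl(P_{tf}^{\delta}\bigr)^{\alpha/2}=P_{tf}$ are fine and only make explicit what the paper leaves implicit.
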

\begin{proof}
	Recalling \eqref{ps_large_antenna}, we obtain \eqref{beta_e} by solving the equation
	$1-e^{-{\pi\lambda_e N_e}/\left({C_{\alpha,2}\lambda_fP_{tf}^{\delta}        \beta_e^{\delta}}\right)}=\epsilon$.
\end{proof}

Having obtained $\beta_t^*$ in \eqref{beta_t} and $\beta_e^*$ in \eqref{beta_e}, we begin to solve problem \eqref{st_max_def}.
Since we focus on variable $\lambda_f$, we substitute \eqref{st_def} into \eqref{st_max_def}, and reform problem \eqref{st_max_def} as follows by introducing an auxiliary function $F(\lambda_f)$ which shows explicitly the relationship between objective function $\mathcal{T}_s$ and $\lambda_f$,
\begin{align}\label{st_max}
\max_{\lambda_f}~ \mathcal{T}_{s}= \frac{\sigma}{\ln2}
\left[F(\lambda_f)\right]^+,\quad
\mathrm{s.t.} ~0<\lambda_f\le \lambda_f^U,
\end{align}
where $F(\lambda_f)= \lambda_f
\ln\frac{1+X(1+Y\lambda_f)^{-\alpha/2}}
{1+Z\lambda_f^{-\alpha/2}}$, $\lambda_f^U
\triangleq\frac{{({1-\sigma_c})/\left({
			C_{\alpha,2}D_h^2K_{\alpha,N_h}}\right)\left(2^{{T_c}/
			({\lambda_h\sigma_c})}-1\right)^{-\delta}-\lambda_h}}{{P_{th}^{\delta}+P_{fh}^{\delta}}}
$
is obtained from $\mathcal{T}_{c}=\lambda_h\sigma_c\log(1+\beta_c^*)=T_c$ in \eqref{st_max_def}, and  $X\triangleq\left(\frac{1-\sigma}{C_{\alpha,2}D_f^2
	K_{\alpha,N_f-2}P_{hf}^{\delta}\lambda_h}\right)
^{{\alpha}/{2}}$, $Y\triangleq\frac{1+P_{tf}^{\delta}}{P_{hf}^{\delta}\lambda_h}$ and $Z\triangleq\frac{1}{P_{tf}}\left(
\frac{\pi\lambda_eN_e}{C_{\alpha,2}
	\ln\frac{1}{1-\epsilon}}\right)
^{{\alpha}/{2}}$.
To achieve a positive $\mathcal{T}_{s}$ in \eqref{st_max}, $F(\lambda_f)>0$, i.e., $X(1+Y\lambda_f)^{-\alpha/2}>Z\lambda_f^{-\alpha/2}$ must be guaranteed, and thus we have $\lambda_f>\lambda_f^L\triangleq 1/
\left(\left({X}/{Z}\right)^{\delta}-Y\right)$ and $\left({X}/{Z}\right)>Y^{\frac{\alpha}{2}}$, which further yield
\begin{align}\label{sigma_epsilon}
(1-\sigma)\ln\frac{1}{1-\epsilon}>
\pi\lambda_eN_eD_f^2K_{\alpha,N_f-2}
\left(1+P_{tf}^{-\delta}\right).
\end{align}
That is to say, a large $\sigma$ and a small $\epsilon$ might not be simultaneously promised.
In the following, we consider the case that a positive $\mathcal{T}_{s}$ exists, i.e., $\lambda_f>\lambda_f^L$.
Thereby, problem \eqref{st_max} is equivalent to
\begin{align}\label{F_max}
\max_{\lambda_f}~
F(\lambda_f), \quad
\mathrm{s.t.}~ \lambda_f^L<\lambda_f\le \lambda_f^U.
\end{align}
In the following theorem, we can prove the \emph{quasi-concavity} \cite[Sec. 3.4.2]{Boyd2004Convex} of $F(\lambda_f)$ w.r.t. $\lambda_f$ in the range $(\lambda_f^L,\infty)$, and derive the optimal $\lambda_f$ that maximizes $F(\lambda_f)$ (or $\mathcal{T}_{s}$).
\begin{theorem}\label{opt_lambda_f_theorem}
	The optimal density $\lambda_f$ that maximizes network-wide secrecy throughput $\mathcal{T}_{s}$ is
	\begin{equation}\label{opt_lambda_f}
	\lambda_f^* =
	\begin{cases}
	~\min(\lambda^{\star},\lambda_f^U),& \left({X}/{Z}\right)>Y^{\frac{\alpha}{2}}~ \textrm{and} ~ \lambda_f^L\le\lambda_f^U,\\
	~\varnothing, & \textrm{otherwise},
	\end{cases}
	\end{equation}
	where $\lambda^{\star}$ is the unique root of the following equation,
	\begin{equation}\label{opt_lambda_exp}
	\ln\frac{f_1(\lambda)}{f_2(\lambda)}
	+\frac{\frac{\alpha}{2}f_1(\lambda)[f_2(\lambda)-1]
		-\frac{\alpha}{2}\lambda[f_1(\lambda)-f_2(\lambda)]Y}
	{f_1(\lambda)f_2(\lambda)(1+\lambda Y)}=0,
	\end{equation}
	with $f_1(\lambda)= 1+X(1+Y\lambda)^{-\frac{\alpha}{2}}$ and $f_2(\lambda)= 1+Z\lambda^{-\frac{\alpha}{2}}$.
	The left-hand side (LHS) of \eqref{opt_lambda_exp} is first positive and then negative; thus, the value of $\lambda^{\star}$ can be efficiently calculated using the bisection method.
	Here, $\lambda_f^* = \varnothing$ means no $\lambda_f$ can produce a positive $\mathcal{T}_s$ under a given pair $(\sigma,\epsilon)$.
\end{theorem}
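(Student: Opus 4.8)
The plan is to reduce problem \eqref{st_max} to a one-dimensional study of $F(\lambda_f)$ over $(\lambda_f^L,\infty)$ and to establish that $F$ is unimodal there. Write $F(\lambda)=\lambda g(\lambda)$ with $g(\lambda)=\ln\frac{f_1(\lambda)}{f_2(\lambda)}$, where $f_1,f_2$ are as in the theorem statement. As noted in the text preceding \eqref{F_max}, $\lambda>\lambda_f^L$ is equivalent to $f_1(\lambda)>f_2(\lambda)$, so $g>0$ on $(\lambda_f^L,\infty)$; moreover $f_1(\lambda_f^L)=f_2(\lambda_f^L)$, so $F(\lambda)\to 0$ as $\lambda\downarrow\lambda_f^L$, while as $\lambda\to\infty$ one has $f_1,f_2\to 1$ and $g(\lambda)\sim(XY^{-\alpha/2}-Z)\lambda^{-\alpha/2}$, hence $F(\lambda)\sim(XY^{-\alpha/2}-Z)\lambda^{1-\alpha/2}\to 0$ because $\alpha>2$. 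Therefore $F$ is positive on the open interval and vanishes at both ends, so it attains a positive interior maximum and $F'$ has at least one zero in $(\lambda_f^L,\infty)$.

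Next I would establish the first-order condition. Differentiating, $F'(\lambda)=g(\lambda)+\lambda g'(\lambda)=\ln\frac{f_1}{f_2}+\lambda\left(\frac{f_1'}{f_1}-\frac{f_2'}{f_2}\right)$, and a direct computation gives $f_1'(\lambda)=-\frac{\alpha}{2}\frac{Y(f_1-1)}{1+Y\lambda}$ and $f_2'(\lambda)=-\frac{\alpha}{2}\frac{f_2-1}{\lambda}$. Substituting these and clearing to the common denominator $f_1 f_2(1+\lambda Y)$ turns $F'(\lambda)=0$ into precisely equation \eqref{opt_lambda_exp}; in particular the stationary points of $F$ are the roots of \eqref{opt_lambda_exp}, and the sign of the left-hand side of \eqref{opt_lambda_exp} coincides with that of $F'$.

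The crux is that this stationary point is unique and that $F'$ changes sign from positive to negative. The plan is to prove that every zero $\lambda_0$ of $F'$ in $(\lambda_f^L,\infty)$ is a strict local maximum, i.e.\ $F''(\lambda_0)<0$: if there were two such points a local minimum would have to lie between them, which is impossible once every critical point is a strict local maximum; combined with the existence shown above, this yields a unique root $\lambda^{\star}$, with $F$ strictly increasing on $(\lambda_f^L,\lambda^{\star})$ and strictly decreasing on $(\lambda^{\star},\infty)$ --- hence \emph{quasi-concave}, and the left-hand side of \eqref{opt_lambda_exp} is first positive and then negative, which legitimizes the bisection search. To obtain $F''(\lambda_0)<0$, I would use the stationarity identity $g'(\lambda_0)=-g(\lambda_0)/\lambda_0<0$ to write $F''(\lambda_0)=2g'(\lambda_0)+\lambda_0 g''(\lambda_0)=-\frac{2g(\lambda_0)}{\lambda_0}+\lambda_0 g''(\lambda_0)$, so that it suffices to show $g''(\lambda_0)<\frac{2g(\lambda_0)}{\lambda_0^{2}}$; then expand $g''=(\ln f_1)''-(\ln f_2)''$ via $(\ln f_i)''=\frac{f_i''}{f_i}-\left(\frac{f_i'}{f_i}\right)^{2}$ with the explicit $f_i''$, and close the inequality using $f_1>f_2>1$ and $\alpha>2$. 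This second-derivative estimate is the main obstacle --- the only step requiring genuine computation rather than bookkeeping --- and some care is needed so that the $(1+Y\lambda)$ denominators and the path-loss constraint $\alpha>2$ push the inequality in the right direction.

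Finally I would assemble the constrained problem. By the HD-tier approximation \eqref{pc_high}, and since $\mathcal{T}_c$ is decreasing in $\lambda_f$, the constraint $\mathcal{T}_c\ge T_c$ in \eqref{st_max_def} is equivalent to $0<\lambda_f\le\lambda_f^U$, while a positive $\mathcal{T}_s$ additionally forces $\lambda_f>\lambda_f^L$, which is possible only if $X/Z>Y^{\alpha/2}$ (so that $\lambda_f^L$ is finite and positive, and also so that the leading coefficient $XY^{-\alpha/2}-Z$ in Step one is positive). Hence the feasible set is $(\lambda_f^L,\lambda_f^U]$, nonempty exactly when $X/Z>Y^{\alpha/2}$ and $\lambda_f^L\le\lambda_f^U$; on a nonempty interval the unimodal $F$ is maximized at $\min(\lambda^{\star},\lambda_f^U)$, and otherwise no $\lambda_f$ yields a positive $\mathcal{T}_s$, giving $\lambda_f^{*}=\varnothing$. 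This is exactly \eqref{opt_lambda_f}.
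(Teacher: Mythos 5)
Your overall architecture matches the paper's: reduce \eqref{st_max} to maximizing $F(\lambda)=\lambda\ln\frac{f_1(\lambda)}{f_2(\lambda)}$ on $(\lambda_f^L,\infty)$, identify \eqref{opt_lambda_exp} with the first-order condition $F^{(1)}(\lambda)=0$ (your clearing of denominators is correct), handle the feasibility set $(\lambda_f^L,\lambda_f^U]$ exactly as the paper does, and conclude with $\lambda_f^*=\min(\lambda^\star,\lambda_f^U)$. Your boundary analysis ($F\to 0$ at both ends, hence an interior maximum exists) is a clean, valid substitute for the paper's sign computation of $F^{(1)}$ at $\lambda_f^L$ and at infinity. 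Your uniqueness tactic — show $F^{(2)}(\lambda_0)<0$ at every critical point, so there can be only one — also differs slightly from the paper, which instead factors $F^{(1)}(\lambda)=f(\lambda)G(\lambda)$ with $G(\lambda)=1+\lambda f^{(1)}(\lambda)/f(\lambda)$ and proves $G$ is \emph{globally} monotone decreasing; both tactics, if completed, deliver the ``first positive then negative'' claim.

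The genuine gap is that you never prove the one inequality everything hinges on. You reduce the problem to showing $g^{(2)}(\lambda_0)<2g(\lambda_0)/\lambda_0^2$ at stationary points and then write that one should ``close the inequality using $f_1>f_2>1$ and $\alpha>2$,'' explicitly conceding that this is the main obstacle. That concession is accurate: this is where all the work is. In the paper's proof the corresponding step (showing $G^{(1)}(\lambda)<0$) occupies the bulk of Appendix~E: after substituting the explicit first and second derivatives of $f_1,f_2$, the resulting expression for $G^{(1)}$ is \emph{not} sign-definite term by term; the paper must invoke the bound $\ln\frac{f_1}{f_2}\le\frac{f_1}{f_2}-1$ to replace $1/f(\lambda)$ by a rational quantity, and then reorganize the result as a quadratic in $Y$ whose three coefficients are each shown to be negative using $f_1>f_2>1$. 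None of this is ``bookkeeping,'' and your proposal gives no evidence that the analogous estimate at a critical point actually closes — in particular you do not use the stationarity identity in any concrete way, nor do you identify the $\ln x\le x-1$ trick (or a substitute) that makes the signs come out right. Until that computation is exhibited, the uniqueness of $\lambda^\star$, the quasi-concavity of $F$, and the validity of bisection are all unestablished.
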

\begin{proof}
	Please see Appendix \ref{appendix_opt_lambda_f_theorem}.
\end{proof}

Theorem \ref{opt_lambda_f_theorem} indicates that by properly choosing the value of $\lambda_f$, we can achieve the maximum network-wide secrecy throughput for the FD tier while guaranteeing a minimum required network-wide throughput for the HD tier.
Substituting the optimal $\lambda_f^*$ into \eqref{st_max} yields the maximum $\mathcal{T}_s^*$, which is shown in Fig. \ref{MTS_SIGMA_EP}.
Just as analyzed previously, only those $\sigma$ and $\epsilon$ that satisfy \eqref{sigma_epsilon} can yield a positive $\mathcal{T}^*_s$.
While $\mathcal{T}_{s}^*$ increases in $\epsilon$, $\mathcal{T}_{s}^*$ initially increases in $\sigma$ and then decreases in it.
The underlying reason is, too small a $\sigma$ corresponds to a small probability of successful transmission, whereas too large a $\sigma$ limits the transmission rate; either aspect results in a small $\mathcal{T}_{s}^*$, as can be seen from \eqref{st_def}.

\begin{figure}[!t]
	\centering
	\includegraphics[width=2.7in]{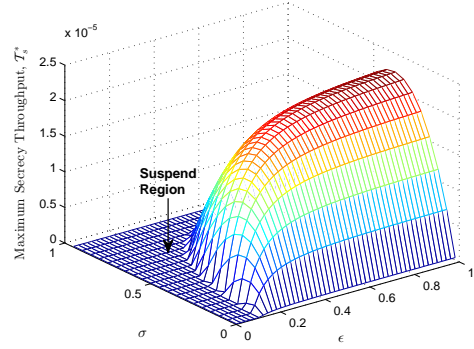}
	\caption{Maximum network-wide secrecy throughput vs. $\sigma$ and $\epsilon$, with $P_t = 20$dBm, $N_f=4$, $N_t=1$, $N_e=8$, $\lambda_e=10^{-2}$, $\sigma_c=0.9$ and $T_c=10^{-3}$.
		In the dark blue areas, there is no positive $\mathcal{T}_s$ that simultaneously satisfies connection and secrecy outage probabilities.}
	\label{MTS_SIGMA_EP}
\end{figure}

In addition to the density of the FD tier $\lambda_f$, the jamming power of the FD Rx $P_t$ \footnote{Since we only focus on the optimization of network density, the power control issue is outside the scope of this paper.} also triggers a non-trivial tradeoff between transmission reliability and secrecy, thus impacting network-wide secrecy throughput $\mathcal{T}_s$.
Similar to Theorem \ref{opt_lambda_f_theorem}, we can also prove the quasi-concavity of $\mathcal{T}_s$ on $P_t$, which is validated in Fig. \ref{TH_PT}.
We observe that, in a certain range of $\lambda_f$, how $\mathcal{T}_s$ varies w.r.t. $\lambda_f$ heavily depends on the value of $P_t$. 
For example, if at the small $P_t$ regime $\mathcal{T}_s$ increases in $\lambda_f$, $\mathcal{T}_s$ might decrease in $\lambda_f$ in the large $P_t$ regime.

\begin{figure}[!t]
	\centering
	\includegraphics[width=2.7in]{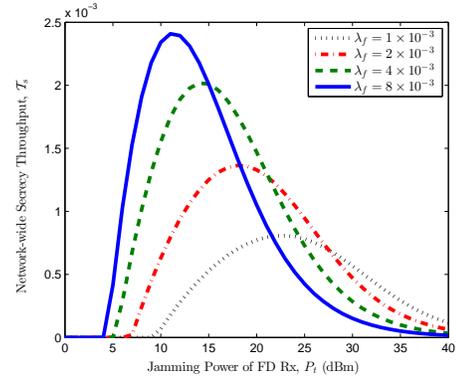}
	\caption{Network-wide secrecy throughput vs. $P_t$ for different values of $\lambda_f$ with $N_f=4$, $N_t=1$, $N_e=4$, $\lambda_e=10^{-3}$, $\sigma=0.9$, $\sigma_c=0.9$, $\epsilon=0.1$ and $T_c=10^{-3}$.}
	\label{TH_PT}
\end{figure}

\section{Multi-antenna-jamming FD Receiver}
In this section, we consider the scenario of the FD Rx using multi-antenna jamming.
Thanks to the extra degrees of freedom provided by multiple jamming antennas, each FD Rx is able to inject jamming signals into the null space of the SI channel such that SI will not leak out to the Rx's input, and MRC reception can be simply adopted at the input for the desired signal.
This is inspired by the idea of the artificial noise scheme proposed in \cite{Goel2008Guaranteeing}.
We will see the analysis of multi-antenna jamming is much more different from and much more sophisticated than that of single-antenna jamming.

Without lose of generality, we consider a typical FD Rx at the origin $o$.
The details of SSIC are given as follows.
We first use MRC reception at the input of the typical FD Rx, the weight vector of which can be obtained from \eqref{signal_fd}, i.e., $\tilde{\bm w}_{f} = \frac{\bm f^H_{\hat o o}}{\|\bm f_{\hat oo}\|}$.
Here we use superscript \~{} to distinguish the multi-antenna jamming case from the single-antenna jamming case.
We then design the jamming signal $\bm v_{o}$ in \eqref{signal_fd} in the form of $\bm v_{ o}=\tilde{\bm F}_{o}\tilde{\bm v}_{o}$, where $\tilde{\bm v}_{o}\in\mathbb{C}^{N_j\times 1}$ is an $N_j$-stream jamming signal vector with i.i.d. entries $\tilde v_i\sim\mathcal{CN}\left(0,{1}/{N_j}\right)$ and $N_j\le N_t-1$, $\tilde{\bm F}_{o}\in\mathbb{C}^{N_t\times N_j}$ is the projection matrix onto the null space of vector $\left(\tilde{\bm w}_{f} \bm F_{oo}\right)^H$ such that the columns of $\left[\frac{\left(\tilde{\bm w}_{f} \bm F_{oo}\right)^H}{\|\tilde{\bm w}_{f} \bm F_{oo}\|},\tilde{\bm F}_{o} \right]$ constitute an orthogonal basis, i.e., $\tilde{\bm w}_{f}\bm F_{o o}\bm v_{ o}=0$.
In this way, SI is completely eliminated in the spatial domain.
It is worth mentioning that, $\tilde{\bm v}_{o}$ includes but is not limited to an $N_t-1$-stream signal vector.
Although $N_t-1$-dimension null space should better be injected with jamming signals to confuse eavesdroppers in a point-to-point transmission \cite{Goel2008Guaranteeing}, there is no general conclusion from the network perspective, since jamming signals impair not only eavesdroppers but also legitimate users.

\subsection{Connection Probability}
From the above discussion, the SIR of the typical FD Rx can be obtained from \eqref{signal_fd}, which is
\begin{align}\label{sir_fd_nt}
\widetilde{\textsf{SIR}}_{f} = \frac{P_f\|\bm f_{\hat o o}\|^2D_f^{-\alpha}}
{\tilde I_h+\tilde I_f},
\end{align}
where $\tilde I_h\triangleq\sum_{\hat z\in\hat \Phi_h}{P_h |\tilde{\bm w}_{f} \bm f_{\hat z o}|^2}{D_{\hat z o}^{-\alpha}}$ and $\tilde I_f\triangleq\sum_{\hat z\in\hat \Phi_f\setminus \hat o}\left({P_f |\tilde{\bm w}_{f}\bm f_{\hat z o}|^2}{D_{\hat z o}^{-\alpha}}
+\left({P_t}/{N_j}\right){\|\tilde{\bm w}_{f} \bm F_{z o}\tilde{\bm F}_{z}\|^2}{D_{zo}^{-\alpha}}\right)$ are the aggregate interferences from the HD and FD tiers, respectively.
Substituting \eqref{sir_fd_nt} into \eqref{pt_fd_def} produces the connection probability of the typical FD Rx, denoted by $\tilde{\mathcal{P}}_{t}$.
As discussed in the single-antenna jamming case, the exact expression of $\tilde{\mathcal{P}}_{t}$ can be derived, which however is not in an analytical form.
Instead, we provide a more tractable lower bound for $\tilde{\mathcal{P}}_{t}$ in the following theorem.
\begin{theorem}\label{ptnt_bound_theorem}
	The connection probability $\tilde{\mathcal{P}}_{t}$ of the typical FD Rx is lower bounded by
	\begin{align}\label{ptnt_bound}
	\tilde{\mathcal{P}}_{t}^L = e^{-\tilde{\Lambda}_{f}^L\beta_t^{\delta}}\left(1
	+\sum_{m=1}^{N_f-N_t-1}\frac{1}{m!}\sum_{n=1}^{m}\left(
	\delta\tilde{\Lambda}_{f}^L\beta_t^{\delta}\right)^n
	\Upsilon_{m,n}\right),
	\end{align}
	where $\tilde{\Lambda}_{f}^L\triangleq
	C_{\alpha,2}P_{hf}^{\delta}\lambda_hD_f^2
	+C_{\alpha,2}\lambda_fD_f^2
	+C_{\alpha,N_j+1}\left({P_{tf}}/{N_j}\right)^{\delta}
	\lambda_fD_f^2$ and $\Upsilon_{m,n}$ has been defined in \eqref{pt_bound}.
\end{theorem}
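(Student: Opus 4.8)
The plan is to follow the same route that produced the lower bound $\mathcal{P}_t^L$ in Theorem \ref{pt_bound_theorem}, adapting it to the MRC reception of the multi-antenna-jamming case. First I would observe that, under MRC with $\tilde{\bm w}_f = \bm f_{\hat oo}^H/\|\bm f_{\hat oo}\|$, the desired-signal term $P_f\|\bm f_{\hat oo}\|^2 D_f^{-\alpha}$ has $\|\bm f_{\hat oo}\|^2\sim\Gamma(N_f-N_t,1)$ (a gamma random variable with $N_f-N_t$ degrees of freedom, since $\bm f_{\hat oo}\in\mathbb{C}^{(N_f-N_t)\times1}$). Writing $\tilde{\mathcal P}_t = \mathbb P\{\|\bm f_{\hat oo}\|^2 > \beta_t D_f^\alpha P_f^{-1}(\tilde I_h+\tilde I_f)\}$ and using the CCDF of a gamma random variable, $\tilde{\mathcal P}_t$ becomes $\mathbb E\big[\sum_{m=0}^{N_f-N_t-1}\tfrac{s^m}{m!}(\tilde I_h+\tilde I_f)^m e^{-s(\tilde I_h+\tilde I_f)}\big]$ with $s\triangleq \beta_t D_f^\alpha/P_f$, which in turn is $\sum_{m=0}^{N_f-N_t-1}\tfrac{(-1)^m s^m}{m!}\,\mathcal L_{\tilde I_h+\tilde I_f}^{(m)}(s)$. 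Because $\tilde I_h$ and $\tilde I_f$ are independent (they come from independent PPPs $\hat\Phi_h,\hat\Phi_f$), $\mathcal L_{\tilde I_h+\tilde I_f}(s)=\mathcal L_{\tilde I_h}(s)\,\mathcal L_{\tilde I_f}(s)$.

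Next I would compute the two Laplace transforms. For $\tilde I_h$, each term $P_h|\tilde{\bm w}_f\bm f_{\hat zo}|^2D_{\hat zo}^{-\alpha}$ has an exponential channel gain (MRC weight independent of the interfering channel $\bm f_{\hat zo}$), so the standard PPP Laplace-functional computation gives $\mathcal L_{\tilde I_h}(s)=\exp(-C_{\alpha,2}\lambda_h(P_hs)^\delta)$. For $\tilde I_f$ the difficulty is the same correlation between $D_{\hat zo}$ and $D_{zo}$ noted after \eqref{signal_fd}: the FD-tier undesired signal travels distance $D_{\hat zo}$ while the paired jamming signal travels $D_{zo}$, and these are linked through $D_f$. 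Here I would invoke exactly the bounding trick of Appendix \ref{appendix_pt_bound_theorem}: replace the correlated pair by treating the FD-undesired part as a PPP contribution with exponential gain (yielding the $C_{\alpha,2}\lambda_f$ term) and the jamming part $\tfrac{P_t}{N_j}\|\tilde{\bm w}_f\bm F_{zo}\tilde{\bm F}_z\|^2D_{zo}^{-\alpha}$ as a PPP contribution whose effective channel gain $\|\tilde{\bm w}_f\bm F_{zo}\tilde{\bm F}_z\|^2$ is $\Gamma(N_j,1)$-distributed — this is the new ingredient and the reason $C_{\alpha,N_j+1}$ (rather than $C_{\alpha,2}$) and the $(P_{tf}/N_j)^\delta$ factor appear. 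Applying the gamma-interferer version of the PPP Laplace functional and the lower-bounding step ($D_{\hat zo}\ge$ something, or Jensen on the angular integral, whichever the earlier appendix used to drop the cross term and get a clean exponent) yields $\mathcal L_{\tilde I_f}(s)\ge \exp\big(-\beta_t^\delta\,[\,C_{\alpha,2}\lambda_f D_f^2 + C_{\alpha,N_j+1}(P_{tf}/N_j)^\delta\lambda_f D_f^2\,]\big)$ after substituting $s=\beta_t D_f^\alpha/P_f$ and using $(P_fs)^\delta = \beta_t^\delta D_f^2$. Combining the two factors gives $\mathcal L_{\tilde I_h+\tilde I_f}(s)\ge e^{-\tilde\Lambda_f^L\beta_t^\delta}$ with $\tilde\Lambda_f^L$ exactly as stated.

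The last step is to push the $m$-th derivative through. Since the bound on $\mathcal L_{\tilde I_h+\tilde I_f}(s)$ is of the clean form $e^{-a s^\delta}$ with $a=\tilde\Lambda_f^L\beta_t^\delta/( \beta_t^\delta D_f^2)$-type constant absorbed appropriately — more precisely, substituting the bound into $\sum_m \tfrac{(-1)^m s^m}{m!}(\,\cdot\,)^{(m)}$ — I would differentiate $e^{-a s^\delta}$ repeatedly via Faà di Bruno / the same combinatorial bookkeeping that produced the coefficients $\Upsilon_{m,n}$ and $K_{\alpha,N}$ in Theorem \ref{pt_bound_theorem}, obtaining precisely the finite sum $e^{-\tilde\Lambda_f^L\beta_t^\delta}\big(1+\sum_{m=1}^{N_f-N_t-1}\tfrac{1}{m!}\sum_{n=1}^m(\delta\tilde\Lambda_f^L\beta_t^\delta)^n\Upsilon_{m,n}\big)$. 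One must check that lower-bounding $\mathcal L$ lower-bounds the whole alternating sum; as in the single-antenna case this holds because the sum equals $\mathbb E[\text{CCDF of }\Gamma]$ evaluated at the interference, and replacing the interference law by a stochastically dominated surrogate (the bound's exponent being larger) preserves the inequality direction for the connection probability.

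The main obstacle I anticipate is the $\tilde I_f$ Laplace-transform bound: one has to simultaneously (i) handle a gamma-distributed interferer channel $\Gamma(N_j,1)$ instead of exponential, which changes the per-point factor in the Laplace functional from $(1+sr^{-\alpha})^{-1}$ to $(1+sr^{-\alpha})^{-N_j}$ and hence brings in $C_{\alpha,N_j+1}$, and (ii) still dispose of the $D_{\hat zo}$–$D_{zo}$ correlation cleanly — the same angular-integral lower bound as in Appendix \ref{appendix_pt_bound_theorem}, but now the two pieces have different "effective antenna" exponents, so the bookkeeping of which term contributes which $C_{\alpha,\cdot}$ constant must be done carefully. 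Everything downstream (the derivative expansion into $\Upsilon_{m,n}$) is a verbatim replay of the single-antenna argument with $N_f-2$ replaced by $N_f-N_t-1$ and $\Lambda_f^L$ replaced by $\tilde\Lambda_f^L$.
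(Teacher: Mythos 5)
Your proposal follows essentially the same route as the paper's proof: the gamma CCDF expansion with $\|\bm f_{\hat o o}\|^2\sim\Gamma(N_f-N_t,1)$ (hence the sum up to $N_f-N_t-1$), factorization of the Laplace transform over the independent tiers, the $\Gamma(N_j,1)$-distributed jamming gain producing the per-point factor $(1+\cdot)^{-N_j}$ and thus $C_{\alpha,N_j+1}(P_{tf}/N_j)^{\delta}$, and the same $\Upsilon_{m,n}$ derivative bookkeeping, together with the correct stochastic-dominance justification for why bounding $\mathcal{L}_{I_f}$ bounds the whole alternating sum. The one step you leave unspecified --- how to decouple the correlated distances $D_{\hat z o}$ and $D_{zo}$ --- is handled in the paper not by Jensen or a distance bound but by the FKG inequality applied to the product of two decreasing functionals of the FD point process, after which each factor reduces to a plain single-mark PPP Laplace functional (via the displacement theorem), yielding exactly the two exponents $C_{\alpha,2}\lambda_f D_f^2\beta_t^{\delta}$ and $C_{\alpha,N_j+1}(P_{tf}/N_j)^{\delta}\lambda_f D_f^2\beta_t^{\delta}$ you anticipate.
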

\begin{proof}
	Please see Appendix \ref{appendix_ptnt_bound_theorem}.
\end{proof}

To further facilitate the analysis, an approximation for $\tilde{\mathcal{P}}_{t}$ is provided by the following corollary.
\begin{corollary}\label{ptnt_high_corollary}
	In the large probability region, i.e., $\tilde{\mathcal{P}}_{t}\rightarrow 1$, $\tilde{\mathcal{P}}_{t}$ is approximated by
	\begin{equation}\label{ptnt_high}
	\tilde{\mathcal{P}}_{t} \approx 1-\tilde{\Lambda}_{f}\beta_t^{\delta}
	K_{\alpha,N_f-N_t},
	\end{equation}
	where $\tilde{\Lambda}_{f}=\tilde{\Lambda}_{f}^L$ and $K_{\alpha,N}$ has been defined in Corollary \ref{pt_high_corollary}.
\end{corollary}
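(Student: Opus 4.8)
The plan is to mirror the argument used for Corollary~\ref{pt_high_corollary} in the single-antenna case, since the ingredients are already in place. By Theorem~\ref{ptnt_bound_theorem}, $\tilde{\mathcal{P}}_t$ is lower bounded by $\tilde{\mathcal{P}}_t^L$ of \eqref{ptnt_bound}, which is \emph{structurally identical} to the bound \eqref{pt_bound}: it is obtained from $\mathcal{P}_t^S$ by replacing the summation limit $N_f-3$ with $N_f-N_t-1$ and the exponent parameter $\Lambda_f^S$ with $\tilde{\Lambda}_f^L$, while the coefficients $\Upsilon_{m,n}$ are unchanged. As noted after Corollary~\ref{pt_high_corollary} and corroborated numerically (cf.\ Fig.~\ref{PT_LF_NF}), this lower bound is tight whenever the connection probability is close to one, so I would take $\tilde{\mathcal{P}}_t\approx\tilde{\mathcal{P}}_t^L$ and work with $\tilde{\mathcal{P}}_t^L$ throughout.

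Next I would observe that $\tilde{\mathcal{P}}_t^L\to 1$ exactly as $\tilde{\Lambda}_f^L\to 0$ --- a regime realized, for instance, when the intended link is much shorter than the typical inter-node spacing, i.e.\ $D_f^2\lambda_f,\,D_f^2\lambda_h\ll 1$, which drives each of the three terms composing $\tilde{\Lambda}_f^L$ to zero. I would then perform a first-order Taylor expansion of $\tilde{\mathcal{P}}_t^L$ about $\tilde{\Lambda}_f^L=0$. The prefactor contributes $e^{-\tilde{\Lambda}_f^L\beta_t^\delta}=1-\tilde{\Lambda}_f^L\beta_t^\delta+\Theta\!\left((\tilde{\Lambda}_f^L)^2\right)$. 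Inside the bracketed double sum, each inner term is $(\delta\tilde{\Lambda}_f^L\beta_t^\delta)^n\Upsilon_{m,n}=\Theta\!\left((\tilde{\Lambda}_f^L)^n\right)$, so only the $n=1$ terms survive to first order, and multiplying the bracket by $e^{-\tilde{\Lambda}_f^L\beta_t^\delta}\approx 1$ adds no further first-order contribution. This gives
\[ \tilde{\mathcal{P}}_t^L = 1-\tilde{\Lambda}_f^L\beta_t^\delta+\delta\tilde{\Lambda}_f^L\beta_t^\delta\sum_{m=1}^{N_f-N_t-1}\frac{\Upsilon_{m,1}}{m!}+\Theta\!\left((\tilde{\Lambda}_f^L)^2\right). \]

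The last step would be to identify the bracketed coefficient with $K_{\alpha,N_f-N_t}$. Since $\mathrm{comb}\binom{m-1}{m-1}$ consists of the single subset $\{1,2,\dots,m-1\}$, whose $i$-th element is $l_{i}=i$, one gets $\Upsilon_{m,1}=\prod_{i=1}^{m-1}\bigl(i-\delta(i-i+1)\bigr)=\prod_{i=1}^{m-1}(i-\delta)$, whence $-\delta\,\Upsilon_{m,1}=(-\delta)\prod_{i=1}^{m-1}(i-\delta)=\prod_{l=0}^{m-1}(l-\delta)$. Substituting back,
\[ \tilde{\mathcal{P}}_t^L\approx 1-\tilde{\Lambda}_f^L\beta_t^\delta\left(1+\sum_{m=1}^{N_f-N_t-1}\frac{1}{m!}\prod_{l=0}^{m-1}(l-\delta)\right)=1-\tilde{\Lambda}_f^L\beta_t^\delta\,K_{\alpha,N_f-N_t}, \]
recalling that $K_{\alpha,N}=1+\sum_{m=1}^{N-1}\frac{1}{m!}\prod_{l=0}^{m-1}(l-\delta)$ with $N=N_f-N_t$ (so $N-1=N_f-N_t-1$). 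Setting $\tilde{\Lambda}_f=\tilde{\Lambda}_f^L$ and discarding the $\Theta\!\left((\tilde{\Lambda}_f^L)^2\right)$ remainder then yields \eqref{ptnt_high}.

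I do not expect any genuine analytic obstacle here: the Laplace-transform work and the closed-form bound were all done in Theorem~\ref{ptnt_bound_theorem}, and the shape of the bound is the same as in the single-antenna case. The only points needing care are (i) the bookkeeping that every $n\ge 2$ term is truly $\Theta\!\left((\tilde{\Lambda}_f^L)^2\right)$ and therefore negligible, and (ii) the combinatorial collapse $\Upsilon_{m,1}=\prod_{l=1}^{m-1}(l-\delta)$, which rests on the cardinality-$(m-1)$ subset of $\{1,\dots,m-1\}$ being unique. The one non-rigorous ingredient --- replacing $\tilde{\mathcal{P}}_t$ by the lower bound $\tilde{\mathcal{P}}_t^L$ --- is, exactly as in the single-antenna case, a deliberate and admittedly pessimistic approximation validated by simulation rather than a proved identity.
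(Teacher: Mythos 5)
Your proposal is correct and follows exactly the route the paper intends: the paper states this corollary without proof because it is the verbatim analogue of Corollary \ref{pt_high_corollary}, obtained by a first-order Taylor expansion of the lower bound \eqref{ptnt_bound} around $\tilde{\Lambda}_f^L=0$ with the summation limit $N_f-3$ replaced by $N_f-N_t-1$. Your explicit verification of the combinatorial collapse $\Upsilon_{m,1}=\prod_{i=1}^{m-1}(i-\delta)$, and hence $-\delta\,\Upsilon_{m,1}=\prod_{l=0}^{m-1}(l-\delta)$ matching the definition of $K_{\alpha,N_f-N_t}$, is a detail the paper leaves implicit and is carried out correctly.
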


\begin{figure}[!t]
	\centering
	\includegraphics[width=2.8in]{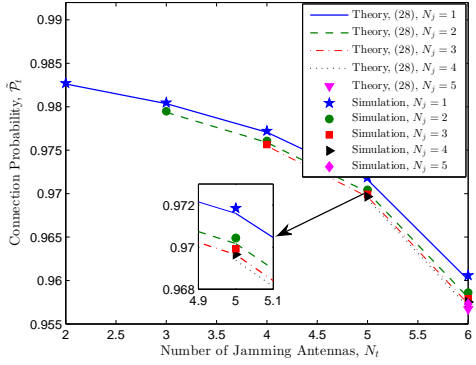}
	\caption{Connection probability vs. $N_t$ for different values of $N_j$, with $P_t = 20$dBm, $N_f=8$, $\lambda_f=10^{-3}$ and $\beta_t = 1$.}
	\label{PTNJ_NT_NJ}
\end{figure}

Fig. \ref{PTNJ_NT_NJ} shows that the result in \eqref{ptnt_high} approximates to the exact connection probability  $\tilde{\mathcal{P}}_{t}$ provided by Monte-Carlo simulations.
We see that $\tilde{\mathcal{P}}_{t}$ greatly reduces as the number $N_t$ of jamming antennas increases.
In addition, $\tilde{\mathcal{P}}_{t}$ suffers a slight decrease when the number $N_j$ of jamming signal streams increases.
This implies when the value of $N_t$ is fixed, $\tilde{\mathcal{P}}_{t}$ is less insensitive to the value of $N_j$.

Following similar steps in Theorem \ref{ptnt_bound_theorem} and Corollary \ref{ptnt_high_corollary}, an approximation for the connection probability $\tilde{\mathcal{P}}_{c}$ of a typical HD Rx in the large probability region is given in the following corollary.
\begin{corollary}\label{pcnt high_corollary}
	Define $\tilde{\Lambda}_{h}=C_{\alpha,2}\lambda_hD_h^2+
	C_{\alpha,2}P_{fh}^{\delta}\lambda_fD_h^2+C_{\alpha,N_j+1}
	\left({P_{th}}/{N_j}\right)^{\delta}
	\lambda_fD_h^2$.
	In the large probability region, i.e., $\tilde{\mathcal{P}}_{c} \rightarrow 1$, the connection probability $\tilde{\mathcal{P}}_{c}$ of a typical HD Rx is approximated by
	\begin{equation}\label{pcnt_high}
	\tilde{\mathcal{P}}_{c} \approx 1-\tilde{\Lambda}_{h}\beta_c^{\delta}
	K_{\alpha,N_h}.
	\end{equation}
\end{corollary}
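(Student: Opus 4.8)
The plan is to mirror the derivation of Theorem~\ref{ptnt_bound_theorem} and Corollary~\ref{ptnt_high_corollary}, because the signal model \eqref{signal_hd} at the typical HD Rx has exactly the same algebraic structure as \eqref{signal_fd} at the typical FD Rx once the FD-specific quantities are relabeled. First I would fix the MRC combiner $\tilde{\bm w}_{h}=\bm h_{\hat o o}^H/\|\bm h_{\hat o o}\|$ at the HD Rx (which suffers no SI), so that from \eqref{signal_hd} the SIR reads $\widetilde{\textsf{SIR}}_{h}=P_h\|\bm h_{\hat o o}\|^2 D_h^{-\alpha}/(\tilde I_h'+\tilde I_f')$, with $\tilde I_h'=\sum_{\hat z\in\hat\Phi_h\setminus\hat o}P_h|\tilde{\bm w}_{h}\bm h_{\hat z o}|^2 D_{\hat z o}^{-\alpha}$ and $\tilde I_f'=\sum_{\hat z\in\hat\Phi_f}\left(P_f|\tilde{\bm w}_{h}\bm h_{\hat z o}|^2 D_{\hat z o}^{-\alpha}+(P_t/N_j)\|\tilde{\bm w}_{h}\bm H_{z o}\tilde{\bm F}_{z}\|^2 D_{z o}^{-\alpha}\right)$. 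Since $\tilde{\bm w}_{h}$ is a unit vector independent of every interfering channel, and $\tilde{\bm F}_{z}$ is independent of $\bm H_{z o}$, rotational invariance of the complex Gaussian gives $|\tilde{\bm w}_{h}\bm h_{\hat z o}|^2\sim\mathrm{Exp}(1)$ and $\|\tilde{\bm w}_{h}\bm H_{z o}\tilde{\bm F}_{z}\|^2\sim\Gamma(N_j,1)$ --- exactly the fading statistics that appear in Theorem~\ref{ptnt_bound_theorem}. The only bookkeeping difference is that the typical pair is now an HD pair, so the full FD process $\hat\Phi_f$ (rather than $\hat\Phi_f\setminus\hat o$) contributes interference and there is no self-interference term, which only simplifies matters.

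Next I would carry out the Laplace-transform/gamma-tail argument exactly as in Appendix~\ref{appendix_ptnt_bound_theorem}: conditioning on the aggregate interference and using $\|\bm h_{\hat o o}\|^2\sim\Gamma(N_h,1)$, $\tilde{\mathcal{P}}_c$ becomes a finite sum of derivatives of the product of the HD-tier and FD-tier interference Laplace transforms evaluated at $s=D_h^\alpha\beta_c/P_h$. The HD-tier PPP contributes the factor $\exp(-C_{\alpha,2}\lambda_h D_h^2\beta_c^\delta)$ (path-loss power ratio $P_{hh}^\delta=1$); the FD-tier undesired signals contribute $\exp(-C_{\alpha,2}P_{fh}^\delta\lambda_f D_h^2\beta_c^\delta)$; and the $N_j$-stream jamming from $\Phi_f$ contributes $\exp(-C_{\alpha,N_j+1}(P_{th}/N_j)^\delta\lambda_f D_h^2\beta_c^\delta)$, the shift from $C_{\alpha,2}$ to $C_{\alpha,N_j+1}$ being precisely the effect of the $\Gamma(N_j,1)$ jamming gain. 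The correlation between $D_{\hat z o}$ and $D_{z o}$ for a common FD pair is handled by the same decoupling (co-location) step used to produce $\tilde\Lambda_f^L$ in Theorem~\ref{ptnt_bound_theorem}, so that a lower bound $\tilde{\mathcal{P}}_c^L$ of the form \eqref{ptnt_bound} --- with $N_f-N_t$ replaced by $N_h$ and $\tilde\Lambda_f^L$ replaced by $\tilde\Lambda_h=C_{\alpha,2}\lambda_h D_h^2+C_{\alpha,2}P_{fh}^\delta\lambda_f D_h^2+C_{\alpha,N_j+1}(P_{th}/N_j)^\delta\lambda_f D_h^2$ --- is obtained; this bound is tight in the large-probability regime, just as for $\tilde{\mathcal{P}}_t$.

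Finally, to reach \eqref{pcnt_high} I would repeat the argument of Corollary~\ref{pt_high_corollary}: in the regime $\tilde{\mathcal{P}}_c\to 1$ one has $\tilde\Lambda_h\to 0$ (e.g. $D_h^2\lambda_h,D_h^2\lambda_f\ll 1$), so a first-order Taylor expansion of \eqref{ptnt_bound} about $\tilde\Lambda_h=0$, discarding the $\Theta(\tilde\Lambda_h^2)$ terms, collapses the double sum into the single constant $K_{\alpha,N_h}=1+\sum_{m=1}^{N_h-1}\frac{1}{m!}\prod_{l=0}^{m-1}(l-\delta)$ and yields $\tilde{\mathcal{P}}_c\approx 1-\tilde\Lambda_h\beta_c^\delta K_{\alpha,N_h}$. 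I do not expect a genuinely new obstacle here; the only delicate point is the decoupling of $D_{\hat z o}$ and $D_{z o}$ in the FD-tier interference, which was already resolved in the proof of Theorem~\ref{ptnt_bound_theorem}, so the work reduces to transcribing that step with $D_f\to D_h$, $P_f\to P_h$ and $\hat\Phi_f\setminus\hat o\to\hat\Phi_f$.
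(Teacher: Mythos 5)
Your proposal is correct and follows exactly the route the paper intends: the paper gives no separate proof for this corollary, stating only that it follows the steps of Theorem~\ref{ptnt_bound_theorem} and Corollary~\ref{ptnt_high_corollary}, and your transcription (MRC over all $N_h$ antennas, no SI, full $\hat\Phi_f$ as interferer, $\mathrm{Exp}(1)$ and $\Gamma(N_j,1)$ fading statistics, FKG decoupling of the correlated FD Tx/Rx distances, then first-order expansion around $\tilde\Lambda_h=0$) is the intended argument with the right substitutions. No gaps.
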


\subsection{Secrecy Outage Probability}
Assuming that every eavesdropper has the capability of multiuser decoding and uses the MMSE receiver, the weight vector of the eavesdropper located at $e$ can be obtained from \eqref{signal_eve}, which is
\begin{equation}\label{weight_mmse_nt}
\tilde{\bm w}_{e} = \tilde{\bm R}_{e} ^{-1}\bm g_{\hat oe},
\end{equation}
where $\tilde{\bm R}_{e}\triangleq ({P_t}/{N_j})\bm G_{oe}
\tilde{\bm F}_{o}\tilde{\bm F}_{o}^H\bm G_{ oe}^HD_{o e}^{-\alpha}
+\sum_{z\in\Phi_f\setminus o}({P_t}/{N_j})\bm G_{ze}
\tilde{\bm F}_{z}\tilde{\bm F}_{z}^H\bm G_{ ze}^HD_{z e}^{-\alpha}$;
the resulting SIR is
\begin{equation}\label{sir_e_nt}
\widetilde{\textsf{SIR}}_{e} = P_f\bm g_{\hat oe}^H\tilde{\bm R}_{e}^{-1}\bm g_{\hat oe}D_{\hat o e}^{-\alpha}.
\end{equation}

Due to the existence of multi-stream jamming signals, computing secrecy outage probability requires using the integer partitions of a positive integer \cite{Louie2011Spatial}.
For convenience, we describe the integer partitions of a positive integer $k$ by introducing an integer partition matrix $\bm\Theta_k$.
For example, the integer partitions of $4$ is characterized by
\begin{align}
\small{\mathbf{\Theta}_4 = \left[\begin{array}{ccccc}
	4 &  &  &  & \\
	3 & 1 &  &  & \\
	2 & 2 &  &  &   \\
	2 & 1 & 1 &  &   \\
	1 & 1 & 1 & 1 & 1
	\end{array}\right]}.
\label{QM}
\end{align}
In the following, we denote $|\xi_k|$ as the number of rows of $\bm\Theta_k$, and $\xi_{i,j,k}$, $|\xi_{j,k}|$, $\phi_{i,j,k}$ and $|\phi_{j,k}|$ as the $i$-th
entry, the number of entries, the number of the
$i$-th largest entry and the number of non-repeated entries in the $j$-th row of $\bm\Theta_k$, respectively.
We provide a closed-form expression for secrecy outage probability $\tilde{\mathcal{P}}_{so}$ in the following theorem.
\begin{theorem}\label{psnt_approx_theorem}
    The secrecy outage probability of a typical multi-antenna jamming FD Rx is
    \begin{align}\label{psnt_approx}
        &\tilde{\mathcal{P}}_{so}= 1 - \exp\Bigg(
        -\frac{\pi\lambda_e}{{C_{\alpha,N_j+1}\lambda_f}}
        \sum_{n=0}^{N_e-1}
        \sum_{i=0}^{\min(n,N_j)}\binom{N_j}{i}
\times\nonumber\\
        &\frac{\left({P_{tf}\beta_e}/{N_j}\right)^{i-\delta}}{\left(1+{P_{tf}\beta_e}/{N_j}\right)^{N_j}}{}
        \sum_{j=1}^{|\xi_{n-i}|}
{(-1)^{|\xi_{j,n-i}|}|\xi_{j,n-i}|!\Xi_{j,n-i}}
        \Bigg),
    \end{align}
where $\Xi_{j,n}=\frac{\prod_{m=1}^{|\xi_{j,n}|}
\prod_{k=1}^{\xi_{m,j,n}}\frac{(N_j+1-k)(k-1-\delta)}
{k(N_j-k+\delta)}}{\prod_{i=1}^{|\phi_{j,n}|}\phi_{i,j,n}!}$.
Here, we let $|\xi_0|=1$, $|\xi_{j,0}|=0$ and $\Xi_{j,0}=1$.
\end{theorem}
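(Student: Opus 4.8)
The plan is to mirror the structure of the single-antenna analysis (Theorem \ref{ps_exact_theorem} and Corollary \ref{ps_approx_corollary}), but with the scalar jamming channel replaced by an $N_t\times N_j$ projected MIMO channel, which is exactly why integer partitions enter. First I would start from the definition \eqref{ps_fd_def} and the probability generating functional of the PPP $\Phi_e$, so that $\tilde{\mathcal{P}}_{so} = 1-\exp\!\left(-\lambda_e\int_{\mathbb{R}^2}\mathbb{P}\{\widetilde{\textsf{SIR}}_e > \beta_e\}\,de\right)$. The conditional success probability for a single eavesdropper at $e$ is $\mathbb{P}\{P_f \bm g_{\hat o e}^H \tilde{\bm R}_e^{-1}\bm g_{\hat o e}D_{\hat o e}^{-\alpha}>\beta_e\}$ with $\tilde{\bm R}_e$ as in \eqref{weight_mmse_nt}. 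The key classical fact (as in the SA case and in \cite{Gao1998Theoretical}) is that the MMSE output SIR admits a tractable form: conditioned on the interferer locations, the desired term $\bm g_{\hat o e}$ is Gaussian and independent of $\tilde{\bm R}_e$, so $P_f\bm g_{\hat o e}^H\tilde{\bm R}_e^{-1}\bm g_{\hat o e}D_{\hat o e}^{-\alpha}$ conditioned on $\tilde{\bm R}_e$ is a weighted sum of $N_e$ i.i.d. exponentials with weights being the eigenvalues of $\tilde{\bm R}_e^{-1}$ scaled by $P_f D_{\hat o e}^{-\alpha}$, and the tail probability reduces to a polynomial in these eigenvalues times $e^{-(\cdot)}$; summing the eigenvalue power sums leads to traces of powers of $\tilde{\bm R}_e$, which is what ultimately produces the $n=0,\dots,N_e-1$ sum.

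Next I would handle the Laplace/Mellin transform of these traces over the jamming interference field. Each FD jammer at $z$ contributes a rank-$N_j$ term $(P_t/N_j)\bm G_{ze}\tilde{\bm F}_z\tilde{\bm F}_z^H\bm G_{ze}^H D_{ze}^{-\alpha}$ to $\tilde{\bm R}_e$; because $\tilde{\bm F}_z$ is a random semi-unitary matrix independent of $\bm G_{ze}$, the projected matrix $\bm G_{ze}\tilde{\bm F}_z$ is again $N_e\times N_j$ i.i.d. complex Gaussian. I would then invoke the integer-partition machinery of \cite{Louie2011Spatial}: the term $\mathbb{E}[\det(\bm I+s\,\text{(MIMO interference)})^{-1}]$-type quantities, or equivalently the moments needed for the $i$-th interference-power coefficient, expand as sums over partitions of $n-i$, with each partition row $j$ contributing the product $\Xi_{j,n}$ of per-part factors $\frac{(N_j+1-k)(k-1-\delta)}{k(N_j-k+\delta)}$ and the multinomial correction $1/\prod_i \phi_{i,j,n}!$, together with the sign $(-1)^{|\xi_{j,n-i}|}$ and the factorial $|\xi_{j,n-i}|!$. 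Pairing this with the $D_f\to 0$ asymptotic (so $D_{\hat o e}\to D_{oe}$ and the typical FD Rx's own interference aligns with the jammer field, as in Corollary \ref{ps_approx_corollary}) collapses the spatial integral $\int_{\mathbb{R}^2}(\cdot)\,de$ into a single radial integral that evaluates in closed form via the standard $\int_0^\infty r(1+cr^{-\alpha})^{-1}dr$-type identities, yielding the $C_{\alpha,N_j+1}$ constant, the binomial $\binom{N_j}{i}$ from choosing which $i$ of the $N_j$ jamming streams of the Rx's own signal are "retained," and the $\left(P_{tf}\beta_e/N_j\right)^{i-\delta}/(1+P_{tf}\beta_e/N_j)^{N_j}$ factor.

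I expect the main obstacle to be the bookkeeping that converts the eigenvalue power sums $\text{tr}(\tilde{\bm R}_e^{-n})$ (equivalently, the Newton-identity-linked elementary symmetric functions appearing in the MMSE tail polynomial) into the partition sum with precisely the stated sign $(-1)^{|\xi_{j,n-i}|}$, the factorial $|\xi_{j,n-i}|!$, and the repetition corrections $|\phi_{j,k}|$, $\phi_{i,j,k}!$ — i.e., correctly tracking the combinatorics when eigenvalue moments are re-expressed through partitions and then averaged over the Poisson jammer field using the Campbell–Mecke formula. A secondary technical point is justifying that the $D_f\to 0$ limit may be taken termwise inside the partition sum and the radial integral without changing the closed form; this is the same kind of dominated-convergence argument used for Corollary \ref{ps_approx_corollary}, and the numerical evidence in Fig. \ref{PSO_APPROXIMATION} (for $N_t=1$) plus the analogous MA figures will be cited to show the approximation stays accurate for realistic $D_f$. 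The edge cases $n=0$ and $i=n$ (giving $|\xi_0|=1$, $|\xi_{j,0}|=0$, $\Xi_{j,0}=1$) I would verify directly to anchor the recursion.
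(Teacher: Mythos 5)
Your proposal follows essentially the same route as the paper: the PGFL over $\Phi_e$, the MMSE SIR distribution of \cite{Gao1998Theoretical} with each rank-$N_j$ jammer decomposed into $N_j$ co-located single-stream Gaussian interferers (justified by $\bm G_{ze}\tilde{\bm F}_z$ remaining i.i.d.\ Gaussian), the integer-partition expansion of \cite{Louie2011Spatial} for the Poisson average, and finally the $D_f\to 0$ limit to collapse the spatial integral into a closed-form radial (Gamma-type) integral producing $C_{\alpha,N_j+1}$ and the $\binom{N_j}{i}$ factor. The combinatorial bookkeeping you flag as the main obstacle is exactly what the cited theorem of Louie et al.\ supplies, so the argument goes through as you outline it.
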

\begin{proof}
    Please see Appendix \ref{appendix_psnt_approx_theorem}.
\end{proof}

We see that secrecy outage probability $\tilde{\mathcal{P}}_{so}$ is affected by the number $N_j$ of jamming signal streams rather than the number $N_t$ of jamming antennas.
The result in \eqref{psnt_approx} is well verified by Monte-Carlo simulations, just as shown in Fig. \ref{PSO_NJ_NE_LF}.
To better understand the effect of $N_j$ on $\tilde{\mathcal{P}}_{so}$, we investigate the asymptotic behavior of $\tilde{\mathcal{P}}_{so}$ w.r.t. $N_j$ by considering the cases $N_j = 1$ and $N_j\rightarrow\infty$, respectively.
\begin{figure}[!t]	
	\centering
	\includegraphics[width=2.8in]{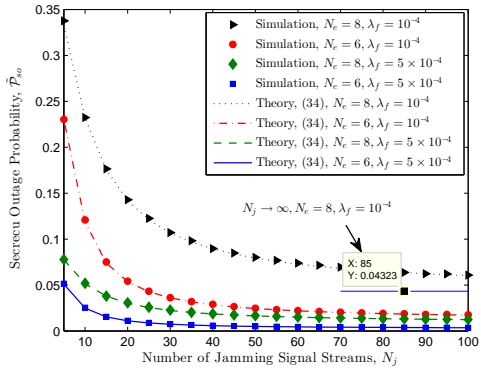}
	\caption{Secrecy outage probability vs. $N_j$ for different values of $N_e$ and $\lambda_f$, with $P_t = 10$dBm, $\lambda_e=10^{-4}$ and $\beta_e = 1$.}
	\label{PSO_NJ_NE_LF}
\end{figure}
\begin{corollary}\label{psnt_corollary1}
	When $N_j=1$, $\tilde{\mathcal{P}}_{so}$ in \eqref{psnt_approx} shares the same expression as the one given in \eqref{ps_approx}.
\end{corollary}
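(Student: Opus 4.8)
The plan is to prove the corollary by direct specialization: set $N_j=1$ in the general formula \eqref{psnt_approx} of Theorem \ref{psnt_approx_theorem} and verify term by term that it collapses to \eqref{ps_approx}. Several reductions are immediate. First, $C_{\alpha,N_j+1}$ becomes $C_{\alpha,2}$, matching the prefactor $\pi\lambda_e/(C_{\alpha,2}\lambda_f)$ in \eqref{ps_approx}. Second, the binomial coefficient $\binom{N_j}{i}=\binom{1}{i}$ is nonzero only for $i\in\{0,1\}$, so the inner range $0\le i\le\min(n,N_j)$ collapses to $0\le i\le\min(n,1)$, exactly the range appearing in \eqref{ps_exact}. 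Third, $(1+P_{tf}\beta_e/N_j)^{N_j}=1+P_{tf}\beta_e$ and $(P_{tf}\beta_e/N_j)^{i-\delta}=(P_{tf}\beta_e)^{i-\delta}$.

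The crux is to show that, with $N_j=1$, the integer-partition sum $\sum_{j=1}^{|\xi_{m}|}(-1)^{|\xi_{j,m}|}|\xi_{j,m}|!\,\Xi_{j,m}$ equals $1$ for every $m\ge0$ (writing $m=n-i$). I would argue this by inspecting $\Xi_{j,m}$. Each factor in its numerator is $\frac{(N_j+1-k)(k-1-\delta)}{k(N_j-k+\delta)}\big|_{N_j=1}=\frac{(2-k)(k-1-\delta)}{k(\delta-k+1)}$, which vanishes at $k=2$ because of the factor $2-k$ (and is well defined since $\delta=2/\alpha<1$). Hence any row $j$ of $\bm\Theta_m$ containing an entry $\xi_{i,j,m}\ge2$ gives $\Xi_{j,m}=0$, and only the all-ones partition $1+1+\cdots+1$ (the last row of $\bm\Theta_m$) survives. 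For that row every inner $k$-product reduces to the single factor $\frac{(2-1)(-\delta)}{1\cdot\delta}=-1$ over its $m$ parts, while the repetition-factorial denominator is $m!$, so $\Xi_{j,m}=(-1)^m/m!$; multiplying by $(-1)^{|\xi_{j,m}|}|\xi_{j,m}|!=(-1)^m m!$ yields exactly $1$. The edge case $m=0$ follows from the stated conventions $|\xi_0|=1$, $|\xi_{j,0}|=0$, $\Xi_{j,0}=1$.

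After these reductions, \eqref{psnt_approx} becomes $1-\exp\!\big(-\tfrac{\pi\lambda_e}{C_{\alpha,2}\lambda_f}\sum_{n=0}^{N_e-1}\sum_{i=0}^{\min(n,1)}\tfrac{(P_{tf}\beta_e)^{i-\delta}}{1+P_{tf}\beta_e}\big)$, and the last step is elementary arithmetic: for each $n\ge1$ the terms $i=0$ and $i=1$ add to $\tfrac{(P_{tf}\beta_e)^{-\delta}(1+P_{tf}\beta_e)}{1+P_{tf}\beta_e}=(P_{tf}\beta_e)^{-\delta}$, while the $n=0$ term is $\tfrac{(P_{tf}\beta_e)^{-\delta}}{1+P_{tf}\beta_e}$; summing over $n$ gives $(P_{tf}\beta_e)^{-\delta}\big(N_e-1+\tfrac{1}{1+P_{tf}\beta_e}\big)$, which is precisely the exponent in \eqref{ps_approx} once we write $(P_{tf}\beta_e)^{-\delta}=P_{tf}^{-\delta}\beta_e^{-\delta}$. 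I expect the only genuine obstacle to be the partition-sum collapse in the middle paragraph — in particular checking carefully that the vanishing $k=2$ factor annihilates \emph{every} partition other than the all-ones one and that the $m=0$ conventions are applied correctly; the rest is mechanical substitution.
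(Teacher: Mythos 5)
Your proposal is correct and follows essentially the same route as the paper's own proof: specialize $N_j=1$, observe that the factor $N_j+1-k$ kills $\Xi_{j,n-i}$ for every partition containing an entry $\ge 2$ so that only the all-ones row survives and the partition sum collapses to $1$, then finish by elementary summation over $n$ and $i$. You are in fact somewhat more explicit than the paper, which states the vanishing of the non-trivial partitions without exhibiting the $k=2$ factor and omits the final arithmetic.
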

\begin{proof}
	Substituting $N_j=1$ into $\Xi_{j,n-i}$, we have $\Xi_{j,n-i}=0$ for $j<|\xi_{n-i}|$.
	Since the $|\xi_{n-i}|$-th integer partition of $n-i$ (i.e., the last row of $\mathbf{\Theta}_{n-i}$) must be $n-i$ ones, we have $|\phi_{|\xi_{n-i}|,n-i}|$=1 and $|\xi_{|\xi_{n-i}|,n-i}|=\phi_{1,|\xi_{n-i}|,n-i}$.
	Therefore, the term $\sum_{j=1}^{|\xi_{n-i}|}
	{(-1)^{|\xi_{j,n-i}|}|\xi_{j,n-i}|!\Xi_{j,n-i}}$ in \eqref{psnt_approx} reduces to ${(-1)^{|\xi_{|\xi_{n-i}|,n-i}|}|\xi_{|\xi_{n-i}|,n-i}|!
		\Xi_{|\xi_{n-i}|,n-i}}=1$,
	substituting which into \eqref{psnt_approx} completes the proof.
\end{proof}

Corollary \ref{psnt_corollary1} implies emitting a single-stream jamming signal using multiple antennas has the same effect as single-antenna jamming in confounding eavesdroppers.

\begin{corollary}\label{psnt_corollary2}
    As $N_j\rightarrow \infty$, $\tilde{\mathcal{P}}_{so}$ in \eqref{psnt_approx} tends to the following constant value
    \begin{align}\label{psnt_infinity}
1 - \exp\Bigg(&
-\frac{\lambda_e}{{\Gamma(1-\delta)\lambda_f}}    \sum_{n=0}^{N_e-1}
\sum_{i=0}^{n}\frac{
e^{-P_{tf}\beta_e}}{i!(P_{tf}\beta_e)^{\delta-i}}\times\nonumber\\
&     \sum_{j=1}^{|\xi_{n-i}|}
{\frac{|\xi_{j,n-i}|!\prod_{m=1}^{|\xi_{j,n-i}|}\prod_{k=1}^{\xi_{m,j,n-i}}\frac{k-1-\delta}{k}}
	{(-1)^{|\xi_{j,n-i}|}\prod_{i=1}^{|\phi_{j,n-i}|}\phi_{i,j,n-i}!}}\Bigg).
    \end{align}
\end{corollary}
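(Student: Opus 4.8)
The plan is to take the limit $N_j\to\infty$ inside the exponential in \eqref{psnt_approx}, carefully tracking the $N_j$-dependence of each factor. First I would isolate the three places where $N_j$ appears: (i) the prefactor $\pi\lambda_e/(C_{\alpha,N_j+1}\lambda_f)$; (ii) the binomial/power block $\binom{N_j}{i}\dfrac{(P_{tf}\beta_e/N_j)^{i-\delta}}{(1+P_{tf}\beta_e/N_j)^{N_j}}$; and (iii) the partition sum, whose summand $\Xi_{j,n-i}$ contains the product $\prod_{k=1}^{\xi_{m,j,n-i}}\frac{(N_j+1-k)(k-1-\delta)}{k(N_j-k+\delta)}$. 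For (i), recall $C_{\alpha,N}=\frac{\pi\Gamma(N-1+\delta)\Gamma(1-\delta)}{\Gamma(N-1)}$, so $C_{\alpha,N_j+1}=\pi\Gamma(1-\delta)\frac{\Gamma(N_j+\delta)}{\Gamma(N_j)}\sim \pi\Gamma(1-\delta)N_j^{\delta}$ by the standard ratio-of-gammas asymptotic $\Gamma(x+a)/\Gamma(x)\sim x^{a}$; hence $\pi\lambda_e/(C_{\alpha,N_j+1}\lambda_f)\sim \lambda_e/(\Gamma(1-\delta)\lambda_f N_j^{\delta})$, which supplies the $\Gamma(1-\delta)$ in the denominator of \eqref{psnt_infinity} and a compensating $N_j^{-\delta}$ to be absorbed elsewhere.

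Next, for (ii) I would write $\binom{N_j}{i}=\frac{N_j!}{i!(N_j-i)!}\sim \frac{N_j^{i}}{i!}$ as $N_j\to\infty$, and $(1+P_{tf}\beta_e/N_j)^{-N_j}\to e^{-P_{tf}\beta_e}$. Combining with $(P_{tf}\beta_e/N_j)^{i-\delta}=(P_{tf}\beta_e)^{i-\delta}N_j^{\delta-i}$, the whole block (ii) behaves like $\frac{(P_{tf}\beta_e)^{i-\delta}}{i!}\,e^{-P_{tf}\beta_e}\,N_j^{\delta}$, i.e.\ it contributes $e^{-P_{tf}\beta_e}/(i!\,(P_{tf}\beta_e)^{\delta-i})$ together with a factor $N_j^{\delta}$ that exactly cancels the $N_j^{-\delta}$ left over from (i). For (iii), each inner factor satisfies $\frac{(N_j+1-k)(k-1-\delta)}{k(N_j-k+\delta)}\to \frac{k-1-\delta}{k}$ as $N_j\to\infty$ for each fixed $k$ (the $N_j$-dependent parts $(N_j+1-k)/(N_j-k+\delta)\to 1$), so $\Xi_{j,n-i}\to \frac{\prod_{m}\prod_{k=1}^{\xi_{m,j,n-i}}\frac{k-1-\delta}{k}}{\prod_{i}\phi_{i,j,n-i}!}$, and the sign and factorial prefactors $(-1)^{|\xi_{j,n-i}|}|\xi_{j,n-i}|!$ carry over verbatim. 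Assembling (i)+(ii)+(iii) and using $\min(n,N_j)=n$ for large $N_j$ gives precisely \eqref{psnt_infinity}.

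The main obstacle I anticipate is not any single limit but the bookkeeping of the $N_j$ powers: I must confirm that the $N_j^{-\delta}$ from $C_{\alpha,N_j+1}$, the $N_j^{i}$ from $\binom{N_j}{i}$, and the $N_j^{\delta-i}$ from the power term multiply to exactly $N_j^{0}$, so that the limit is finite and nonzero — this is what forces the answer to be a constant rather than $0$ or $1$. A secondary technical point is justifying the interchange of the (finite) sums over $n$, $i$, $j$ with the limit, and the termwise limit inside the finite products defining $\Xi$; since all sums and products are finite for each fixed $N_j$ and every factor converges, this is routine (dominated convergence is not even needed, only continuity of finite arithmetic operations). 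One should also note $\min(n,N_j)$ stabilizes at $n$ once $N_j\ge N_e-1$, so no terms are lost in the limit. With these pieces in place the corollary follows by direct substitution.
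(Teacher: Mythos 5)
Your proposal is correct and follows essentially the same route as the paper's (much terser) proof, which simply invokes $\lim_{N\to\infty}\Gamma(N+\delta)/(\Gamma(N)N^{\delta})=1$ and $\lim_{N\to\infty}(1+x/N)^{N}=e^{x}$ in \eqref{psnt_approx}; your explicit bookkeeping showing that the powers $N_j^{-\delta}\cdot N_j^{i}\cdot N_j^{\delta-i}$ cancel is exactly the computation the paper leaves implicit. No gaps.
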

\begin{proof}
    Invoking $\lim_{N\rightarrow\infty}\frac{\Gamma(N+\delta)}
    {\Gamma(N)N^{\delta}}=1$ and $\lim_{N\rightarrow\infty}\left(1+\frac{x}{N}\right)^N
    =e^{x}$ in \eqref{psnt_approx} directly yields \eqref{psnt_infinity}.
\end{proof}

Corollary \ref{psnt_corollary2} implies increasing jamming signal streams can not arbitrarily reduce secrecy outage probability, just as validated in Fig. \ref{PSO_NJ_NE_LF}.
This is because the total power $P_t$ of jamming signals is limited.
Conversely, if $P_{t}$ in \eqref{psnt_infinity} goes to infinity, $\tilde{\mathcal{P}}_{so}$ reduces to zero.

\subsection{Network-wide Secrecy Throughput}
The network-wide secrecy throughput $\tilde{\mathcal{T}}_{s}$ in multi-antenna jamming scenario under a connection probability $\tilde{\mathcal{P}}_{t}(\beta_{t})=\sigma$ and a secrecy outage probability $\tilde{\mathcal{P}}_{so}(\beta_{e})=\epsilon$ has the same form as \eqref{st_def}.
We aim to optimize $\lambda_f$ to maximize $\tilde{\mathcal{T}}_{s}$ while guaranteeing a certain level of the HD tier throughput, i.e., $\tilde{\mathcal{T}}_{c}\ge {T}_c$, with $\tilde{\mathcal{T}}_{c}$ sharing the form of ${\mathcal{T}}_{c}$ given in Sec. II-A.
 Before proceeding to the optimization problem, we compute $\beta_t^*$ and $\beta_e^*$ from the equations $\tilde{\mathcal{P}}_{t}(\beta_{t})=\sigma$ and $\tilde{\mathcal{P}}_{so}(\beta_{e})=\epsilon$, respectively.
\begin{proposition}\label{beta_t_nt_lemma}
    In the large probability region, i.e., $\sigma\rightarrow 1$, the root of $\tilde{\mathcal{P}}_{t}(\beta_{t})=\sigma$ is given by
    \begin{equation}\label{beta_t_nt}
\beta_t^* = \left(\frac{\left(1-\sigma\right)/\left(C_{\alpha,2}D_f^2 K_{\alpha,N_f-N_t}\right)}
{
P_{hf}^{\delta}\lambda_h
+\left(1
+\frac{C_{\alpha,N_j+1}}{C_{\alpha,2}}\left(\frac{P_{tf}}{N_j}\right)^{\delta}\right)
\lambda_f}\right)^{{\alpha}/{2}}.
    \end{equation}
\end{proposition}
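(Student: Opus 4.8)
The plan is to mirror the derivation of $\beta_t^*$ used for the single-antenna case (Lemma~\ref{beta_t_lemma}), but starting from the multi-antenna large-probability approximation rather than from \eqref{pt_high}. First I would invoke Corollary~\ref{ptnt_high_corollary}, which asserts that in the large-probability region $\tilde{\mathcal{P}}_{t}\approx 1-\tilde{\Lambda}_{f}\beta_t^{\delta}K_{\alpha,N_f-N_t}$, and note that the hypothesis $\sigma\rightarrow 1$ is precisely the regime in which this approximation is accurate: $\tilde{\mathcal{P}}_{t}\rightarrow 1$ forces $\tilde{\Lambda}_{f}\rightarrow 0$ (e.g. $D_f^2\lambda_f,D_f^2\lambda_h\ll 1$), which is exactly where the first-order Taylor expansion behind \eqref{ptnt_high} is tight and the discarded $\Theta\big((\tilde{\Lambda}_{f})^2\big)$ terms are negligible. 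Hence it is legitimate to replace the equation $\tilde{\mathcal{P}}_{t}(\beta_{t})=\sigma$ by $1-\tilde{\Lambda}_{f}\beta_t^{\delta}K_{\alpha,N_f-N_t}=\sigma$.

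Second, I would solve this equation for $\beta_t$, which is a one-line algebraic inversion: $\beta_t^{\delta}=(1-\sigma)\big/\big(\tilde{\Lambda}_{f}K_{\alpha,N_f-N_t}\big)$, and then raise both sides to the power $1/\delta=\alpha/2$ to obtain $\beta_t^{*}=\big((1-\sigma)/(\tilde{\Lambda}_{f}K_{\alpha,N_f-N_t})\big)^{\alpha/2}$.

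Finally, I would substitute the explicit expression for $\tilde{\Lambda}_{f}$ from Theorem~\ref{ptnt_bound_theorem}, namely $\tilde{\Lambda}_{f}=C_{\alpha,2}P_{hf}^{\delta}\lambda_hD_f^2+C_{\alpha,2}\lambda_fD_f^2+C_{\alpha,N_j+1}\left(P_{tf}/N_j\right)^{\delta}\lambda_fD_f^2$, factor out the common $C_{\alpha,2}D_f^2$, and collect the $\lambda_f$ terms so that the coefficient of $\lambda_f$ becomes $1+\frac{C_{\alpha,N_j+1}}{C_{\alpha,2}}\left(P_{tf}/N_j\right)^{\delta}$; pushing the factor $C_{\alpha,2}D_f^2$ into the denominator alongside $K_{\alpha,N_f-N_t}$ then yields precisely \eqref{beta_t_nt}.

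There is no substantive obstacle here; the statement is essentially a corollary of Corollary~\ref{ptnt_high_corollary} obtained by inverting a monotone relation. The only point that merits a line of care is the justification already mentioned, that the regime $\sigma\rightarrow 1$ coincides with the validity regime of the approximation \eqref{ptnt_high} (so that using \eqref{ptnt_high} in place of the exact, integral-form $\tilde{\mathcal{P}}_{t}$ is self-consistent); everything after that is routine rearrangement of constants.
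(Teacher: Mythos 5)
Your proposal is correct and is essentially identical to the paper's own proof, which likewise just invokes Corollary \ref{ptnt_high_corollary} and solves $1-\tilde{\Lambda}_{f}\beta_t^{\delta}K_{\alpha,N_f-N_t}=\sigma$ for $\beta_t$. The algebraic rearrangement of $\tilde{\Lambda}_{f}$ you describe reproduces \eqref{beta_t_nt} exactly, and your remark on the self-consistency of the large-$\sigma$ regime is a (harmless) elaboration of what the paper leaves implicit.
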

\begin{proof}
    Recalling Corollary \ref{ptnt_high_corollary}, \eqref{beta_t_nt} is obtained by solving $1-\tilde{\Lambda}_{f}\beta_t^{\delta}
    K_{\alpha,N_f-N_t}=\sigma$.
\end{proof}

Generally, it is impossible to derive a closed-form expression for $\beta_e^*$ due to the complicated integer partitions in \eqref{psnt_approx}.
However, an analytical approximation for $\beta_e^*$ can be readily obtained from \eqref{ps_approx} by considering the single jamming signal stream, i.e., $N_j=1$, in the large-antenna eavesdropper case.
\begin{proposition}\label{beta_e_nt_lemma}
    When $N_e\gg 1$ and $N_j=1$, $\beta_e$ that satisfies $\tilde{\mathcal{P}}_{so}(\beta_{e})=\epsilon$ has the same expression as the one given in \eqref{beta_e}.
\end{proposition}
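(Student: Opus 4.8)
The plan is to chain together three results already in hand, so that the proof reduces to a short composition rather than a fresh computation. First, since $N_j=1$, I would invoke Corollary~\ref{psnt_corollary1}, which states that $\tilde{\mathcal{P}}_{so}$ in \eqref{psnt_approx} collapses to exactly the single-antenna-jamming expression \eqref{ps_approx}, namely
\begin{equation*}
\tilde{\mathcal{P}}_{so}\approx 1-\exp\!\left[-\frac{\pi\lambda_e}{C_{\alpha,2}\lambda_fP_{tf}^{\delta}\beta_e^{\delta}}\left(N_e-1+\frac{1}{1+P_{tf}\beta_e}\right)\right].
\end{equation*}
This removes the integer-partition machinery of Theorem~\ref{psnt_approx_theorem} entirely and reduces the analysis to the setting of Section~III-B.

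Second, I would pass to the large-antenna regime $N_e\gg1$ exactly as in the derivation of \eqref{ps_large_antenna}: the factor $N_e-1+\tfrac{1}{1+P_{tf}\beta_e}$ is dominated by $N_e$ because $0<\tfrac{1}{1+P_{tf}\beta_e}\le 1$ stays bounded while $N_e-1$ grows without bound, so $\tilde{\mathcal{P}}_{so}$ coincides with the right-hand side of \eqref{ps_large_antenna}, i.e. $1-\exp\!\big(-\pi\lambda_eN_e/(C_{\alpha,2}\lambda_fP_{tf}^{\delta}\beta_e^{\delta})\big)$. Third, with $\tilde{\mathcal{P}}_{so}$ now literally the function solved in the lemma that produced \eqref{beta_e}, I would solve $\tilde{\mathcal{P}}_{so}(\beta_e)=\epsilon$ verbatim: setting $1-e^{-\pi\lambda_eN_e/(C_{\alpha,2}\lambda_fP_{tf}^{\delta}\beta_e^{\delta})}=\epsilon$, isolating the exponent gives $\pi\lambda_eN_e/(C_{\alpha,2}\lambda_fP_{tf}^{\delta}\beta_e^{\delta})=\ln\tfrac{1}{1-\epsilon}$, and solving for $\beta_e$ (using $\delta=2/\alpha$) yields precisely \eqref{beta_e}, which closes the argument.

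There is no genuine obstacle here; the only point worth being explicit about is the consistency of the two asymptotics being stacked --- the $D_f\to 0$ approximation underlying \eqref{ps_approx} (via Corollary~\ref{ps_approx_corollary}) and the $N_e\gg1$ limit --- which is why the proposition should be read as an accurate approximation rather than an exact identity, in line with how \eqref{ps_approx} and \eqref{beta_e} are themselves used elsewhere. If one wished to be maximally careful, one could instead start from the exact form of \eqref{psnt_approx} at $N_j=1$, rewrite it through Corollary~\ref{psnt_corollary1}, and then apply $D_f\to 0$ followed by $N_e\gg1$ in that order; the endpoint is unchanged.
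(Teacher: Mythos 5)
Your proposal is correct and follows essentially the same route the paper intends: invoke Corollary~\ref{psnt_corollary1} to reduce the $N_j=1$ case to \eqref{ps_approx}, pass to the large-$N_e$ form \eqref{ps_large_antenna}, and solve $\tilde{\mathcal{P}}_{so}(\beta_e)=\epsilon$ exactly as in the lemma that yields \eqref{beta_e}. The paper omits an explicit proof precisely because the result is this composition, and your remark about stacking the $D_f\to 0$ and $N_e\gg 1$ approximations is consistent with how the paper uses these expressions.
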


For the more general case $N_j\ge 2$, the value of $\beta_e^*$ can be obtained via numerical calculation, i.e., $\beta_e^* = \tilde{\mathcal{P}}_{so}^{-1}(\epsilon)$,
where $\tilde{\mathcal{P}}_{so}^{-1}(\epsilon)$ is the inverse function of $\tilde{\mathcal{P}}_{so}(\beta_e)$.

In Fig. \ref{TS_LF_NJ}, we illustrate some numerical examples of network-wide secrecy throughput $\tilde{\mathcal{T}}_{s}$.
We see that $\tilde{\mathcal{T}}_{s}$ first increases and then decreases as the FD tier density $\lambda_f$ increases.
The value of $\lambda_f$ should be properly chosen in order to maximize $\tilde{\mathcal{T}}_{s}$.
We also find that, $\tilde{\mathcal{T}}_{s}$ improves as the number $N_j$ of jamming signal streams increases on the premise of a fixed number $N_t$ of jamming antennas.
\begin{figure}[!t]
\centering
\includegraphics[width=2.5in]{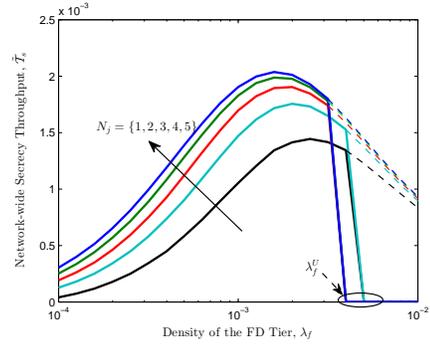}
\caption{Network-wide secrecy throughput vs. $\lambda_f$ for different values of $N_j$, with $P_t = 20$dBm, $N_f=N_e=8$, $N_t = 6$, $\lambda_e=10^{-4}$, $\sigma=\sigma_c=0.9$, $\epsilon=0.02$ and $T_c=10^{-3}$.
The dashed lines show the values of $\tilde{\mathcal{T}}_{s}$ without an HD tier throughput constraint, i.e. $T_c=0$.}
\label{TS_LF_NJ}
\end{figure}

Next, we formulate the problem of maximizing network-wide secrecy throughput $\tilde{\mathcal{T}}_{s}$ as follows,
\begin{subequations}
\begin{align}\label{stnt_max}
&\max_{\lambda_f}~ \tilde{\mathcal{T}}_{s}
=\frac{\lambda_f\sigma}{\ln2}
\left[\ln\frac{1+\tilde X(1+\tilde Y\lambda_f)^{-\alpha/2}}
{1+\tilde{\mathcal{P}}_{so}^{-1}(\epsilon)}\right]^+,\\
&\quad \mathrm{s.t.} ~0<\lambda_f\le \tilde\lambda_f^U,
\end{align}
\end{subequations}
where $\tilde\lambda_f^U
\triangleq\frac{({1-\sigma_c})/(D_h^2K_{\alpha,N_h})
\left(2^{{T_c}/({\lambda_h\sigma_c})}-1\right)
^{-\delta}-C_{\alpha,2}\lambda_h}{C_{\alpha,2}P_{fh}^{\delta}+C_{\alpha,N_j+1}\left({P_{th}}/{N_j}\right)^{\delta}}$ is obtained from $\tilde{\mathcal{T}}_{c}=T_c$, $\tilde X\triangleq\left(\frac{1-\sigma}{C_{\alpha,2}D_f^2
K_{\alpha,N_f-N_t}P_{hf}^{\delta}\lambda_h}\right)
^{\frac{\alpha}{2}}$ and $\tilde Y\triangleq\frac{C_{\alpha,2}+C_{\alpha,N_j+1}(P_{tf}/N_j)^{\delta}}
{C_{\alpha,2}P_{hf}^{\delta}\lambda_h}$.

For the single jamming signal stream case $N_j=1$, $\tilde{\mathcal{P}}_{so}^{-1}(\epsilon)$ has a closed-from expression given in \eqref{beta_e}, i.e., $\tilde{\mathcal{P}}_{so}^{-1}(\epsilon)=\beta_e^*
=\tilde Z \lambda_f^{-\frac{\alpha}{2}}$ with $\tilde Z\triangleq\frac{1}{P_{tf}}\left(
\frac{\pi\lambda_eN_e}{C_{\alpha,2}
	\ln\frac{1}{1-\epsilon}}\right)
^{\frac{\alpha}{2}}$.
Accordingly, problem \eqref{stnt_max} has the same form as problem \eqref{F_max}.
As a consequence, the optimal $\lambda_f$ that maximizes $\tilde{\mathcal{T}}_{s}$ also shares the same form as \eqref{opt_lambda_f}, simply with $X$, $Y$, $Z$ and $\lambda_f^U$ replaced by $\tilde X$, $\tilde Y$, $\tilde Z$ and $\tilde\lambda_f^U$, respectively.

For the more general case $N_j\ge 2$, we can only solve problem \eqref{stnt_max} using one-dimension exhaustive search in the range $(0, \tilde\lambda_f^U]$.
Since increasing the number $N_j$ of jamming signal streams always benefits network-wide secrecy throughput, we should set $N_j=N_t-1$.
Thus, $N_t-1$-dimension null space is fully injected with jamming signals.
In Fig. \ref{OPT_LAMBDA} and Fig. \ref{MAX_TS}, we illustrate the optimal density $\lambda_f^*$ and the corresponding maximum network-wide secrecy throughput $\tilde{\mathcal{T}}_{s}^*$,
respectively.

From Fig. \ref{OPT_LAMBDA}, we observe a general trend that the value of $\lambda_f^*$ decreases as $N_t$ increases on the premise of the existence of a positive $\tilde{\mathcal{T}}_{s}^*$.
The reason behind is twofold:
on one hand, adding jamming antennas provides relief to deploying more FD jammers to degrade the wiretap channels;
on the other hand, reducing the number of FD Tx-Rx pairs reduces network interference, thus improving the main channels.
How the value of $\lambda_f^*$ is influenced by $N_e$ depends on the specific values of $\lambda_e$ and $N_t$.
For example, if each eavesdropper adds receive antennas, more FD jammers are needed for a relatively small $N_t$ or a small $\lambda_e$ (see (a), (b) and (c)), whereas fewer FD jammers might be better as $N_t$ or $\lambda_e$ goes large (see $N_t=7$ in (c) and $N_t=5$ in (d)).
This is because, if we continue to add FD jammers, we can scarcely achieve a positive secrecy throughput .

\begin{figure}[!t]
	\centering
	\includegraphics[width=2.8in]{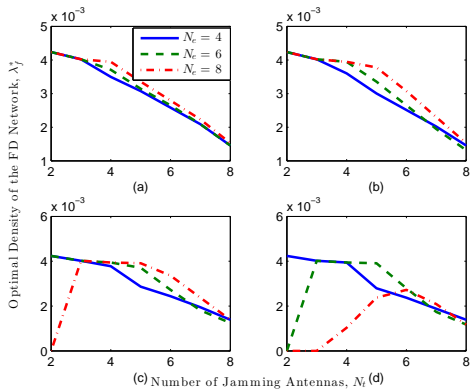}
	\caption{Optimal density of the FD tier vs. $N_t$ for different values of $N_e$ and (a) $\lambda_e=2\times10^{-4}$, (b) $\lambda_e=4\times10^{-4}$, (c) $\lambda_e=7\times10^{-4}$, (d) $\lambda_e=10^{-3}$, with $P_t = 20$dBm, $N_f=8$, $N_j=N_t-1$, $\lambda_f=2\times10^{-3}$, $\sigma=\sigma_c=0.9$, $\epsilon=0.02$ and $T_c=10^{-3}$.
		Note that $\lambda_f^*=0$ when $N_t=2$ in (c) and $N_t=2,3$ in (d).
		This means a positive secrecy throughput that simultaneously satisfies the connection and secrecy outage probability constraints can not be achieved, regardless of the value of $\lambda_f$.}
	\label{OPT_LAMBDA}
\end{figure}

\begin{figure}[!t]
	\centering
	\includegraphics[width=2.8in]{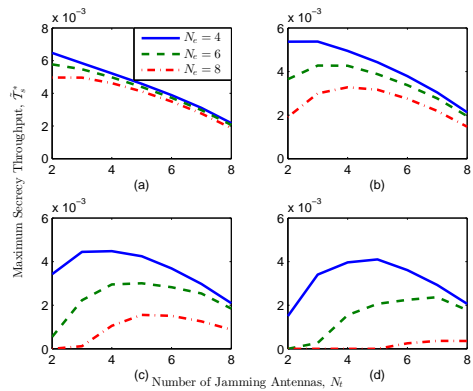}
	\caption{Maximum network-wide secrecy throughput vs. $N_t$ for different values of $N_e$ and (a) $\lambda_e=2\times10^{-4}$, (b) $\lambda_e=4\times10^{-4}$, (c) $\lambda_e=7\times10^{-4}$, (d) $\lambda_e=10^{-3}$, with $P_t = 20$dBm, $N_f=8$, $N_j=N_t-1$, $\lambda_f=2\times10^{-3}$, $\sigma=\sigma_c=0.9$, $\epsilon=0.02$ and $T_c=10^{-3}$.}
	\label{MAX_TS}
\end{figure}

In Fig. \ref{MAX_TS}, we see that the maximum network-wide secrecy throughput $\tilde{\mathcal{T}}_{s}^*$ always deteriorates as $\lambda_e$ or $N_e$ increases.
How the value of $\tilde{\mathcal{T}}_{s}^*$ is affected by $N_t$ depends on the specific values of $\lambda_e$ and $N_e$.
Specifically, for relatively small values of $\lambda_e$ and $N_e$, $\tilde{\mathcal{T}}_{s}^*$ decreases as $N_t$ increases (see (a)).
This means we should use as few jamming antennas as possible.
However, as $\lambda_e$ or $N_e$ increases, $\tilde{\mathcal{T}}_{s}^*$ first increases and then decreases as $N_t$ increases (see (b), (c) and (d)).
This implies that a modest value of $N_t$ is required to balance improving the main channels with degrading the wiretap channels.

\section{Conclusions}
This paper comprehensively studies physical-layer security using FD Rx jamming techniques against randomly located eavesdropper in a heterogeneous DWN consisting of both HD and FD tiers.
The connection probability and the secrecy outage probability of a typical FD Rx is analyzed for single- and multi-antenna jamming scenarios, and the optimal FD tier density is provided for maximizing network-wide secrecy throughput under constraints including the given dual probabilities and the network-wide throughput of the HD tier.
Numerical results are presented to validate our theoretical analysis, and show the benefits of FD Rx jamming in improving network-wide secrecy throughput.

\appendix
\subsection{Proof of Theorem \ref{pt_exact_theorem}}
\label{appendix_pt_exact_theorem}
Let $s\triangleq {D_f^{\alpha}\beta_t}/{P_f}$ and $I_o = I_h+I_f$.
$\mathcal{P}_t$ can be calculated by substituting \eqref{sir_fd} into \eqref{pt_fd_def}
\begin{align}\label{pt_k1}
&\mathcal{P}_{t}= \mathbb{E}_{I_o} \left[ \mathbb{P}\left\{\|\bm f_{ \hat o o}^H \bm U\|^2\geq sI_o\right\}\right]
    \stackrel{\mathrm{(a)}}
    = \mathbb{E}_{I_o}\left[e^{-sI_o}
    \sum_{m=0}^{N_f-3}\frac{s^mI_o^m}
    {m!}\right]\nonumber\\
    & =\sum_{m=0}^{N_f-3}\mathbb{E}_{I_o} \left[
    \frac{s^me^{-sI_o}}
    {m!}I_o^m\right]
    \stackrel{\mathrm{(b)}}= \sum_{m=0}^{N_f-3}\left[
     \frac{s^m\mathcal{L}^{(m)}_{I_o}(s)}
    {(-1)^mm!}\right],
\end{align}
where (a) holds for $\|\bm f_{ \hat o o}^H \bm U\|^2\sim \Gamma(N_f-2,1)$, and (b) is obtained from \cite[Theorem 1]{Hunter08Transmission}.
Due to the independence of $I_{h}$ and $I_{f}$, $\mathcal{L}_{I_o}(s)$ is given by
\begin{align}\label{la_io}
    \mathcal{L}_{I_o}(s) = \mathbb{E}_{I_o}\left[e^{-sI_o}\right]
    =\mathcal{L}_{I_h}(s)\mathcal{L}_{I_f}(s).
\end{align}
$\mathcal{L}_{I_h}(s)$ can be directly obtained from \cite[(8)]{Haenggi2009Stochastic}, which is
\begin{equation}\label{la_hd}
  \mathcal{L}_{I_h}(s)
=\exp\left(-\lambda_hC_{\alpha,2}(P_hs)^{\delta}\right).
\end{equation}
$\mathcal{L}_{I_f}(s)= \mathbb{E}_{I_f}\left[e^{-sI_f}\right]$ can be computed as
\begin{align}
&\mathcal{L}_{I_f}(s)
\label{la_fd1}
=\mathbb{E}_{\hat \Phi_f}\left[
\prod_{\hat z\in\hat \Phi_f\setminus \hat o}e^{-s\left(\frac{P_f |\bm w_{f} \bm f_{\hat z o}|^2}{D_{\hat z o}^{\alpha}} +\frac{P_t|\bm w_{f} \bm f_{z o}|^2}{D_{zo}^{\alpha}}\right)}\right]\\
&\label{la_fd2}
\stackrel{\mathrm{(c)}}=
\mathbb{E}_{\hat \Phi_f}
\left[\prod_{ \hat z\in\hat \Phi_f\setminus \hat o}
\frac{1}{1+P_fsD_{\hat z o}^{-\alpha}}\frac{1}{1+P_tsD_{zo}^{-\alpha}}\right]\\
&\label{la_fd}
\stackrel{\mathrm{(d)}}=\exp\Bigg(-\lambda_f
\int_0^{\infty}\int_0^{2\pi}\Bigg(1-\frac{1}{1+P_fsr^{-\alpha}}\times\nonumber\\
&\qquad\frac{1}{1+P_ts\left(r^2+D_f^2-2rD_f \cos\theta\right)^{-\alpha/2}}\Bigg)rd\theta dr
\Bigg),
\end{align}
where (c) holds for $|\bm w_{f} \bm f_{\hat zo}|^2, |\bm w_{f} \bm f_{z o}|^2\sim \mathrm{Exp}(1)$, and (d) is derived by using the probability generating functional (PGFL) over a PPP \cite{Stoyan1996Stochastic}.
Substituting \eqref{la_hd} and \eqref{la_fd} into \eqref{la_io} completes the proof.

\subsection{Proof of Theorem \ref{pt_bound_theorem}}
\label{appendix_pt_bound_theorem}

To provide a lower bound for $\mathcal{P}_t$, we needs only provide a lower bound for $\mathcal{L}_{I_f}(s)$.
This is because, a lower bound for $\mathcal{L}_{I_f}(s)$ actually overestimates the aggregate interference $I_f$ from the FD tier, which leads to a lower bound for $\mathcal{P}_t$.
From \eqref{la_fd1}, we have
\begin{align}\label{la_fd_low}
&\mathcal{L}_{I_f}(s)\nonumber\\
&\stackrel{\mathrm{(e)}}\ge
\mathbb{E}_{\hat \Phi_f}\left[
\prod_{\hat  z\in\hat \Phi_f\setminus\hat  o}e^{-\frac{sP_f |\bm w_{f} \bm f_{\hat zo}|^2}{D_{\hat z o}^{\alpha}} }\right] \mathbb{E}_{\Phi_f}\left[
\prod_{z\in \Phi_f\setminus o}e^{-\frac{sP_t|\bm w_{f} \bm f_{zo}|^2}{D_{z o}^{\alpha}}}\right]\nonumber\\ &\stackrel{\mathrm{(f)}} =
e^{-\lambda_fC_{\alpha,2}(P_fs)^{\delta}}
e^{-\lambda_fC_{\alpha,2}(P_ts)^{\delta}},
\end{align}
where (e) follows from the FKG inequality \cite[Theorem 10.13]{Haenggi2012Stochastic}, since both $\prod_{\hat  z\in\hat \Phi_f\setminus \hat o}e^{-sP_f |\bm w_{f} \bm f_{\hat z o}|^2 D_{\hat z o}^{-\alpha}}$ and $\prod_{\in \Phi_f\setminus o}e^{-sP_t|\bm w_{f} \bm f_{zo}|^2D_{zo}^{-\alpha}}$ are decreasing random variables as the number of terms increases;
(f) holds for realizing that both $\hat \Phi_f\setminus \hat o$ and $\Phi_f\setminus o$ are PPPs with the same density $\lambda_f$ due to the displacement theorem \cite[page 35]{Haenggi2012Stochastic} and invoking \cite[(8)]{Haenggi2009Stochastic}.

Substituting \eqref{la_hd} and \eqref{la_fd_low} into \eqref{pt_exact} and invoking \cite[Theorem 1]{Hunter08Transmission}, we obtain the lower bound $\mathcal{P}_{t}^L$.

An upper bound for $\mathcal{P}_t$ can be obtained by calculating an upper bound for $\mathcal{L}_{I_f}$.
From \eqref{la_fd2}, we have
\begin{align}\label{la_fd_up}
&\mathcal{L}_{I_f}(s)
\stackrel{\mathrm{(g)}}\le\nonumber\\
&\mathbb{E}_{\hat \Phi_f}\!\!\left[
\prod_{ \hat z\in\hat \Phi_f\setminus \hat o}
\frac{1}{\left(1+P_fsD_{\hat z o}^{-\alpha}\right)^2}\right]\!
\mathbb{E}_{\Phi_f}\!\!\left[
\prod_{z\in \Phi_f\setminus o}
\frac{1}{\left(1+P_tsD_{z o}^{-\alpha}\right)^2}\right]
\nonumber\\
&\stackrel{\mathrm{(h)}} =
\exp\left(-\lambda_fC_{\alpha,2}\frac{1+\delta}{2}(P_fs)^{\delta}(1+P_{tf}^{\delta})\right),
\end{align}
where (g) follows from the Cauchy-Schwarz inequality and (h) holds for the PGFL over a PPP.
Substituting \eqref{la_hd} and \eqref{la_fd_up} into \eqref{pt_exact} and invoking \cite[Theorem 1]{Hunter08Transmission}, we obtain the upper bound $\mathcal{P}_{t}^U$.

\subsection{Proof of Theorem \ref{ps_exact_theorem}}
\label{appendix_ps_exact_theorem}
Let $r\triangleq D_{\hat oe}$.
Substituting \eqref{sir_e} into \eqref{ps_fd_def} and applying the PGFL over a PPP yield
    \begin{equation}\label{ps_app1}
   \mathcal{P}_{so} =
    1- \exp\left(-\lambda_e
    \int_0^{\infty}\int_0^{2\pi}\mathbb{P}
    \left\{\textsf{SIR}_{e}
    \ge\beta_{e}\right\}rd\theta dr\right). \!
\end{equation}
 Define $v\triangleq { r^{\alpha}\beta_e}/{P_f}$, $\mathbb{P}
 \left\{\textsf{SIR}_{e}
    \ge\beta_{e}\right\}$ in \eqref{ps_app1} can be calculated by invoking \cite[(11)]{Gao1998Theoretical}, i.e.,
\begin{equation}\label{pro_sir}
  \mathbb{P}\left\{\textsf{SIR}_{e}
    \ge\beta_{e}\right\}=
    \mathbb{E}_{\Phi_f}\left[
    \frac{1}{W}\sum_{n=0}^{N_e-1}w_nv^n\right],
\end{equation}
where $W=\left(1+P_tD_{oe}^{-\alpha}v\right)\prod_{ z\in \Phi_f\setminus o}\left(1+P_tD_{ ze}^{-\alpha}v\right)$, and $w_n$ is the coefficient of $v^n$ in $W$, which is
\begin{equation}\label{w_n}
  w_n = \sum_{i=0}^{\min(n,1)}\frac{\left(P_tD_{\hat oe}^{-\alpha}\right)^i}{(n-i)!}
  \sum_{
  z_1,\cdots,z_{n-i}
  \in\Phi_f\setminus o}
  \prod_{j=1}^{n-i}\frac{P_t}{D_{z_je}^{\alpha}}.
\end{equation}
Substituting $W$ and $w_n$ into \eqref{pro_sir}, we have
\begin{align}\label{pro_sir_2}
    &\mathbb{P}\left\{\textsf{SIR}_{e}
    \ge\beta_{e}\right\}
=  \mathbb{E}_{\Phi_f}\Bigg[
    \sum_{n=0}^{N_e-1}\sum_{i=0}^{\min(n,1)}
    \frac{1}{(n-i)!}\times\nonumber\\
& \quad \frac{\left(P_tD_{ oe}^{-\alpha}v\right)^i}{\left(1+P_tD_{ oe}^{-\alpha}v\right)}  \sum_{z_1,\cdots,z_{n-i}
  \in\Phi_f\setminus o}\frac{P_t^{n-i}v^{n-i}
  \prod_{j=1}^{n-i}D_{z_je}^{-\alpha}}
    {\prod_{z\in\Phi_f\setminus o}\left(1+P_tD_{ ze}^{-\alpha}v\right)}\Bigg]\nonumber\\
    &=\sum_{n=0}^{N_e-1}\sum_{i=0}^{\min(n,1)}
    \frac{\left(P_tD_{ oe}^{-\alpha}v\right)^i}{\left(1+P_tD_{ oe}^{-\alpha}v\right)(n-i)!}\times\nonumber\\
&   \qquad \mathbb{E}_{\Phi_f}\left[\sum_{z_1,\cdots, z_{n-i}
  \in\Phi_f\setminus o}\frac{P_t^{n-i}v^{n-i}
  \prod_{j=1}^{n-i}D_{z_je}^{-\alpha}}
    {\prod_{z\in\Phi_f\setminus o}\left(1+P_tD_{ ze}^{-\alpha}v\right)}\right],
\end{align}
where the expectation term can be calculated by using Campbell-Mecke theorem \cite[Theorem 4.2]{Stoyan1996Stochastic}
\begin{align}\label{campbell}
\mathbb{E}_{\Phi_f}&\left[\sum_{ z_1,\cdots, z_{n-i}
  \in\Phi_f\setminus o}\frac{P_t^{n-i}v^{n-i}
  \prod_{j=1}^{n-i}D_{z_je}^{-\alpha}}
    {\prod_{z\in\Phi_f\setminus o}\left(1+P_tD_{ ze}^{-\alpha}v\right)}\right]\nonumber\\
    &
    =\left(2\pi\lambda_f\int_0^{\infty}
    \frac{P_t v r^{-\alpha}}{1+P_t v r^{-\alpha}}rdr\right)^{n-i}\times\nonumber\\
    &\qquad
    \exp\left(
    -2\pi\lambda_f\int_0^{\infty}
    \frac{P_t v r^{-\alpha}}{1+P_t v r^{-\alpha}}rdr\right)\nonumber\\
    &
    \stackrel{\mathrm{(i)}} = \left(C_{\alpha,2}\lambda_fP_t^{\delta}v
    ^{\delta}\right)^{n-i}\exp\left(
    -C_{\alpha,2}\lambda_fP_t^{\delta}v
    ^{\delta}\right),
\end{align}
where (i) is obtained by transforming $P_t v r^{-\alpha}\rightarrow \mu$ and invoking formula \cite[(3.241.2)]{Gradshteyn2007Table}.
Substituting \eqref{pro_sir_2} and \eqref{campbell} into \eqref{ps_app1} and using $D_{\hat oe}=\sqrt{D_{oe}^2+D_f^2-2D_{oe}D_f\cos{\theta_o}}$, we complete the proof.

\subsection{Proof of Corollary \ref{ps_approx_corollary}}
\label{appendix_ps_approx_corollary}
As $D_f\rightarrow 0$, $\mathcal{Q}_i(r)={2\pi\left(P_{tf}\beta_e\right)^i}/
    ({1+P_{tf}\beta_e})$.
    Let $A\triangleq C_{\alpha,2}\lambda_f
        \left(P_{tf}\beta_e\right)^{\delta}$.
        Substituting $\mathcal{Q}_i(r)$ into \eqref{ps_exact} yields
\begin{align}\label{ps_approx_1}
&\mathcal{P}_{so}= 1 - \exp\Bigg(
-\lambda_e\sum_{n=0}^{N_e-1}\sum_{i=0}^{\min(n,1)}
\frac{A^{n-i}}
{(n-i)!}
\times\nonumber\\
&\qquad \qquad \qquad \qquad \frac{2\pi\left(P_{tf}\beta_e\right)^i}
{1+P_{tf}\beta_e}\int_0^{\infty}r^{2(n-i)}
    e^{-Ar^2}rdr\Bigg)\nonumber\\
&\stackrel{\mathrm{(j)}}=1 -
\exp\left(
-\lambda_e\sum_{n=0}^{N_e-1}
\sum_{i=0}^{\min(n,1)}
\frac{\pi}
{A}
\frac{\left(P_{tf}\beta_e\right)^i}
{1+P_{tf}\beta_e}\right)\nonumber\\
& = 1 - \exp\left(-\frac{\pi\lambda_e}{A(1+P_{tf}\beta_e)}
\sum_{n=0}^{N_e-1}
(1+P_{tf}\beta_e)^n\right),
\end{align}
where (j) holds for $\int_0^{\infty}r^{2(n-i)}e^{-Ar}rdr=\frac{(n-i)!}
{A^{n-i+1}}$.
Substituting $A$ into \eqref{ps_approx_1} completes the proof.

\subsection{Proof of Theorem \ref{opt_lambda_f_theorem}}
\label{appendix_opt_lambda_f_theorem}
To complete the proof, we need only derive the optimal $\lambda_f$, denoted by $\lambda_f^{\star}$, that maximizes $F(\lambda_f)$ in the range $[\lambda_f^L,\infty)$.
Clearly, if $0<\lambda_f^L\le\lambda_f^U$, the solution to problem \eqref{F_max} is $\lambda_f^* = \min(\lambda_f^{\star},\lambda_f^U)$; otherwise, there is no feasible solution.
    For convenience, we omit subscript $f$ from $\lambda_f$.
    Define $f_1(\lambda)= 1+X(1+Y\lambda)^{-\frac{\alpha}{2}}>1$, $f_2(\lambda)= 1+Z\lambda^{-\frac{\alpha}{2}}>1$ and $f(\lambda)= \ln\frac{f_1(\lambda)}{f_2(\lambda)}$, then the objective function in \eqref{F_max} changes into $F(\lambda)=\lambda f(\lambda)$.
   The first-order derivative of $F(\lambda)$ on $\lambda$ is given by
    \begin{equation}\label{dF1}
    {F^{(1)}(\lambda)} = f(\lambda)+\lambda f^{(1)}(\lambda)= f(\lambda)G(\lambda).
    \end{equation}
   The introduced auxiliary function $G(\lambda)$ in \eqref{dF1} is defined as $G(\lambda)= 1+
    \frac{\lambda f^{(1)}(\lambda)}{f(\lambda)}$,
    where
    \begin{equation}\label{df1}
      f^{(1)}(\lambda)=\frac{f_1^{(1)}(\lambda)}
      {f_1(\lambda)}-\frac{f_2^{(1)}(\lambda)}
      {f_2(\lambda)},
    \end{equation}
    with $f_1^{(1)}(\lambda)=
        -\frac{{\alpha}(f_1(\lambda)-1)Y}{2(1+\lambda Y)}$ and $f_2^{(1)}(\lambda)=
        -\frac{{\alpha}
        (f_2(\lambda)-1)}{{2}\lambda}$.
    Note that $f(\lambda)$ in \eqref{dF1} is positive, such that the sign of $F^{(1)}(\lambda)$ remains consistent with that of $G(\lambda)$.
First, we investigate the sign of $F^{(1)}(\lambda)$ at the boundaries of $[\lambda_f^L,\infty)$.
A complete expression of $F^{(1)}(\lambda)$ is given by substituting \eqref{df1} into \eqref{dF1}
    \begin{equation}\label{dF11}
  F^{(1)}(\lambda) = \ln\frac{f_1(\lambda)}{f_2(\lambda)}
  +\frac{f_1(\lambda)[f_2(\lambda)-1]
  -\lambda[f_1(\lambda)-f_2(\lambda)]Y}
  {\delta f_1(\lambda)f_2(\lambda)(1+\lambda Y)}.
\end{equation}

Case $\lambda=\lambda_f^L$:
We have $f_1(\lambda^L)=f_2(\lambda^L)$, such that $F^{(1)}(\lambda^L) = \frac{f_1(\lambda^L)[f_1(\lambda^L)-1]}
  {\delta f^2_1(\lambda^L)(1+\lambda^L Y)}>0$.

Case $\lambda\rightarrow\infty$:
We have $\lim_{\lambda\rightarrow\infty}f_1(\lambda)= 1$ and $\lim_{\lambda\rightarrow\infty}f_2(\lambda)= 1$, such that $\lim_{\lambda\rightarrow\infty}
F^{(1)}(\lambda)=
\frac{[f_2(\lambda)-1]
  -\lambda[f_1(\lambda)-f_2(\lambda)]Y}
  {\delta(1+\lambda Y)}$.
  Substituting  $f_1(\lambda)$ and $f_2(\lambda)$ into $\lim_{\lambda\rightarrow\infty}F^{(1)}(\lambda)$ yields
  \begin{align}\label{dF1_INFTY}
    \lim_{\lambda\rightarrow\infty}F^{(1)}(\lambda)
  &=\lim_{\lambda\rightarrow\infty}
  Z\lambda^{-{\alpha}/{2}}\left(1-\frac{X}{Z} Y\left(\frac{\lambda }{1+\lambda Y}\right)^{{\alpha}/{2}+1}\right)\nonumber\\
&  =\lim_{\lambda\rightarrow\infty}
  Z\lambda^{-{\alpha}/{2}}
  \left(1-\frac{X}{Z}Y^{-{\alpha}/{2}} \right)<0,
  \end{align}
where the last inequality holds for $\lambda_f^L=1/\left(\left({X}/{Z}\right)
^{\delta}-Y\right)>0\Rightarrow
\left({X}Y^{-{\alpha}/{2}}\right)/{Z}>1$.
The above two cases also indicate that $G(\lambda^L)>0$ and $\lim_{\lambda\rightarrow\infty}G(\lambda)<0$.

Directly proving the monotonicity of $F^{(1)}(\lambda)$ (or the concavity of $F(\lambda)$) w.r.t. $\lambda$ from \eqref{dF1} is quite difficult.
We observe that, supposing $G(\lambda)$ monotonically decreases with $\lambda$, there obviously exists a unique $\lambda^{\star}$ that makes $F^{(1)}(\lambda)$ first positive and then negative after $\lambda$ exceeds $\lambda^{\star}$. That is, we may prove that $F(\lambda)$ is a first-increasing-then-decreasing function of $\lambda$.
Invoking the definition of the
single-variable quasi-concave function \cite[Sec. 3.4.2]{Boyd2004Convex}, $F(\lambda)$ is actually a quasi-concave function of $\lambda$; the given $\lambda^{\star}$ is the optimal solution that maximizes $F(\lambda)$, which is obtained at
$F^{(1)}(\lambda)=0$.
Based on the above discussion, in what follows
we focus on proving the monotonicity of $G(\lambda)$ w.r.t. $\lambda$.
We first compute the first-order derivative of $G(\lambda)$ on $\lambda$
    \begin{equation}\label{dG1}
 {G^{(1)}(\lambda)} = \frac{f^{(1)}(\lambda)
    	f(\lambda)
    	+\lambda f^{(2)}(\lambda)f(\lambda)
    	-{\lambda}
    	\left(f^{(1)}(\lambda)\right)^2}{f^2(\lambda)}.
    \end{equation}
    Computing ${G^{(1)}(\lambda)}$ requires computing $f^{(2)}(\lambda)$, which can be obtained from \eqref{df1}
    \begin{align}\label{df2}
    {f^{(2)}(\lambda)}& = \frac{f_1^{(2)}(\lambda)f_1(\lambda)
    -(f_1^{(1)}(\lambda))^2}
    {f_1^2(\lambda)}\nonumber\\
&  \qquad  - \frac{f_2^{(2)}(\lambda)f_2(\lambda)
    -(f_2^{(1)}(\lambda))^2}
    {f_2^2(\lambda)},
    \end{align}
    where $f_1^{(2)}(\lambda)=\frac{({\alpha}/{2}+1)
    (f_1(\lambda)-1)Y^2}{\delta(1+\lambda Y)^2}$ and $f_2^{(2)}(\lambda)=\frac{({\alpha}/{2}+1)
   (f_2(\lambda)-1)}{\delta \lambda^2}$
   are the second-order derivatives of $f_1(\lambda)$ and $f_2(\lambda)$, respectively,
substituting which into \eqref{df2} further yields
\begin{align}\label{df2_2}
    f^{(2)}(\lambda) &= \frac{(f_1(\lambda)-1)
    (f_1(\lambda)+{\alpha}/{2})Y^2}
    {\delta(f_1(\lambda))^2(1+\lambda Y)^2}\nonumber\\
    & \qquad
    -\frac{(f_2(\lambda)-1)
    (f_2(\lambda)+{\alpha}/{2})}
    {\delta\lambda^2(f_2(\lambda))^2}.
    \end{align}
Substituting \eqref{df1} and \eqref{df2_2} into \eqref{dG1} yields
\begin{align}\label{dG12}
   & G^{(1)}(\lambda) = \frac{1}{f(\lambda)}
    \Bigg\{-\frac{(f_1(\lambda)-1)Y}
    {\delta f_1(\lambda)(1+\lambda Y)}
    +\frac{(f_2(\lambda)-1)}
    {\delta\lambda f_2(\lambda)}+\nonumber\\
&    \frac{\lambda(f_1(\lambda)-1)
    (f_1(\lambda)+{\alpha}/{2})Y^2}
    {\delta f^2_1(\lambda)(1+\lambda Y)^2}
    -\frac{(f_2(\lambda)-1)
    (f_2(\lambda)+{\alpha}/{2})}{\delta\lambda f_2^2(\lambda)}\nonumber\\
&-\frac{\lambda}{f(\lambda)}
    \left(-\frac{(f_1(\lambda)-1)Y}
    {\delta f_1(\lambda)(1+\lambda Y)}
    +\frac{(f_2(\lambda)-1)}{\delta\lambda f_2(\lambda)}\right)^2\Bigg\}.
\end{align}
Using the inequality $f(\lambda)=\ln\frac{f_1(\lambda)}{f_2(\lambda)}\le
\frac{f_1(\lambda)}{f_2(\lambda)}-1$ and after some algebraic manipulations, we obtain
\begin{align}\label{dG1_up}
     G^{(1)}(\lambda)&\le
-\frac{ f^2_1(\lambda)}{\delta f(\lambda)}[f_2(\lambda)-1]-\frac{\lambda f_1(\lambda)Y }{\delta f(\lambda)}\times
\nonumber\\
    & \Big(f_2(\lambda)[f_1(\lambda)
    -1]+\alpha[f_2(\lambda)-1][f_1(\lambda)
    -f_2(\lambda)]\Big)\nonumber\\
    &-\frac{\lambda^2f_2(\lambda)Y^2}{\delta f(\lambda)}
    [f_2(\lambda)-1]
    [f_1(\lambda)-f_2(\lambda)]^2.
\end{align}
Given that $f_1(\lambda)>f_2(\lambda)>1$, all the coefficients of $Y^i$ for $i=0,1,2$ in the right-hand side (RHS) of \eqref{dG1_up} are negative, such that $G^{(1)}(\lambda)<0$.
This means $G(\lambda)$ is a monotonically decreasing function of $\lambda$ in the range $[\lambda_f^L,\infty)$.
By now, we have completed the proof.

\subsection{Proof of Theorem \ref{ptnt_bound_theorem}}
\label{appendix_ptnt_bound_theorem}
The proof is similar to Appendix \ref{appendix_pt_bound_theorem}; the only difference lies in computing $\mathcal{L}_{I_f}(s)$, which is obtained from
\eqref{sir_fd_nt} and \eqref{la_fd_low}
    \begin{align}\label{la_fd_nt_low}
 &\mathcal{L}_{I_f}(s)
= \mathbb{E}_{\hat \Phi_f}\left[
\prod_{\hat z\in\hat \Phi_f\setminus\hat  o}e^{-s\left(\frac{P_f |\tilde{\bm w}_{f} \bm f_{ \hat z o}|^2}{D_{ \hat z o}^{\alpha}} +\frac{P_t}{N_j}\frac{\|\tilde{\bm w}_{f} \bm F_{z o}\tilde{\bm F}_{z}\|^2}{D_{z o}^{\alpha}}\right)}\right]\ge\nonumber\\
 &
\mathbb{E}_{\Phi_f}\left[
\prod_{ z\in\Phi_f\setminus o}e^{-\frac{sP_f |\tilde{\bm w}_{f} \bm f_{ \hat z o}|^2}{D_{ \hat z o}^{\alpha}}}\right]
\mathbb{E}_{\hat\Phi_f}\left[
\prod_{\hat z\in\hat \Phi_f\setminus \hat o}e^{-\frac{sP_t}{N_j}
	\frac{\|\tilde{\bm w}_{f} \bm F_{zo}\tilde{\bm F}_{z}\|^2}{D_{zo}^{\alpha}}}\right]\nonumber\\
& \stackrel{\mathrm{(k)}} =
e^{-C_{\alpha,2}\lambda_fD_f^2}
e^{-C_{\alpha,N_j+1}
	\left({P_{tf}}/{N_j}\right)^{\delta}
	\lambda_fD_f^2},
\end{align}
where (k) holds for \cite[(8)]{Haenggi2009Stochastic} combined with $\|\tilde{\bm w}_{f} \bm F_{zo}\tilde{\bm F}_{z}\|^2\sim\Gamma(N_j,1)$.
Substituting \eqref{la_hd} and \eqref{la_fd_nt_low} into \eqref{pt_exact} and invoking \cite[Theorem 1]{Hunter08Transmission}, we complete the proof.

\subsection{Proof of Theorem \ref{psnt_approx_theorem}}
\label{appendix_psnt_approx_theorem}
    Following \eqref{ps_app1}, we first compute $\mathbb{P}
    \left\{\widetilde{\textsf{SIR}}_{e}
    \ge\beta_{e}\right\}$.
    Recalling \eqref{sir_e_nt}, each term in $\tilde{\bm R}_{e}$, e.g., $\bm G_{ze}
\tilde{\bm F}_{z}\tilde{\bm F}_{z}^H\bm G_{ze}^H$, can be regarded as a superposition of single-stream signals with $N_j$ co-located interferers.
Denote the $n$-th column of $\bm G_{ze}\tilde{\bm F}_{z}$ by $\tilde{\bm g}_{z e,n}$, then
$\bm R_{e,N_t,N_j}$ can be reformed as
\begin{equation}\label{R_e_nt}
    \tilde{\bm R}_{e} =
    \frac{P_t}{N_j}\sum_{n=1}^{N_j}\frac{\tilde{\bm g}_{o e,n} \tilde{\bm g}_{ o e,n}^H}{ D_{ o e}^{-\alpha}}
    +\sum_{z\in\Phi_f\setminus o}\frac{P_t}{N_j}\sum_{n=1}^{N_j} \frac{\tilde{\bm g}_{z e,n} \tilde{\bm g}_{z e,n}^H}{D_{z e}^{-\alpha}}.
\end{equation}

Define $r\triangleq D_{oe}$ and $z\triangleq {r^{\alpha}\beta_e}{P_f}$.
 $\mathbb{P}\left\{\textsf{SIR}_{e,N_t,N_j}
    \ge\beta_{e}\right\}$ is obtained by invoking \cite[(11)]{Gao1998Theoretical}, i.e.,
\begin{equation}\label{pro_sir_nt}
  \mathbb{P}
    \left\{\widetilde{\textsf{SIR}}_{e}
    \ge\beta_{e}\right\}
    =    \mathbb{E}_{\Phi_f}\left[
    \frac{1}{W_{N_j}}\sum_{n=0}^{N_e-1}y_nz^n\right],
\end{equation}
where $W_{N_j}=\left(1+\frac{P_t}{N_j}
r^{-\alpha}z\right)^{N_j}\prod_{z\in \Phi_f\setminus o}\left(1+\frac{P_t}{N_j}D_{ze}^{-\alpha}z\right)^{N_j}$ and $y_n$ is the coefficient of $z^n$ in the polynomial expansion of $W_{N_j}$.
Define $\tilde{A}\triangleq C_{\alpha,N_j+1}\lambda_f
    \left({P_{tf}\beta_e}/{N_j}\right)^{\delta}$.
    Invoking \cite[Theorem 1]{Louie2011Spatial} yields
    \begin{align}\label{pro_exp}
    \mathbb{P}\left\{\widetilde{\textsf{SIR}}_{e}
    \ge\beta_{e}\right\} &=
    \sum_{n=0}^{N_e-1}
        \sum_{i=0}^{\min(n,N_j)}\binom{N_j}{i}
        {\left(\frac{P_{tf}\beta_e}{N_j}\right)^{i}}
        \times\nonumber\\
        &\sum_{j=1}^{|\xi_{n-i}|}\frac{\Xi_{j,n-i}
{(-\tilde{A}r^2)^{|\xi_{j,n-i}|}e^{- \tilde{A}r^2}}}
{\left(1+{P_{tf}\beta_e}/{(N_j)}\right)^{N_j}},
    \end{align}
Substituting \eqref{pro_exp} into \eqref{ps_app1} and using $D_f\rightarrow 0$ and formula \cite[(3.326.2)]{Gradshteyn2007Table}, we complete the proof.

\end{document}